\documentclass[10pt]{article}

\usepackage[utf8]{inputenc}

\usepackage{amsmath}
\usepackage{amsfonts}
\usepackage{amssymb}
\usepackage{amsthm}

\usepackage{graphicx}
\usepackage{float}
\usepackage{wrapfig}
\usepackage{subcaption}
\usepackage[font=small]{caption}

\usepackage{lipsum}
\usepackage{xcolor}
\usepackage{enumitem}
\usepackage{booktabs}
\usepackage{siunitx}
\usepackage{url}

\usepackage{algorithm}
\usepackage{algpseudocode}

\usepackage{tikz}
\usetikzlibrary{positioning,fit,calc,arrows.meta}

\usepackage[
    a4paper,
    left=1.50cm,
    right=1.50cm,
    top=1.50cm,
    bottom=1.50cm
]{geometry}

\usepackage[numbers,sort&compress]{natbib}
\usepackage[hidelinks]{hyperref}
\usepackage{cleveref}

\newcounter{aim}

\crefname{aim}{Aim}{Aims}
\Crefname{aim}{Aim}{Aims}

\newlength{\aimleftindent}
\newlength{\aimlabelwidth}
\newlength{\aimvspace}

\newcommand{\aim}[2][]{%
  \par\addvspace{\aimvspace}%
  \refstepcounter{aim}%
  \if\relax\detokenize{#1}\relax
  \else
    \label{#1}%
  \fi
  \begingroup
    \noindent
    \hangindent=\dimexpr\aimleftindent+\aimlabelwidth\relax
    \hangafter=1
    \hspace*{\aimleftindent}%
    \textbf{\color{black!85}Aim~\theaim:}\enspace
    #2%
    \par
  \endgroup
  \addvspace{\aimvspace}%
}

\newcounter{challenge}
\renewcommand{\thechallenge}{\Roman{challenge}}

\crefname{challenge}{Challenge}{Challenges}
\Crefname{challenge}{Challenge}{Challenges}

\newlength{\challengelabelwidth}

\newcommand{\challenge}[2][]{%
  \par\addvspace{\aimvspace}%
  \refstepcounter{challenge}%
  \if\relax\detokenize{#1}\relax
  \else
    \label{#1}%
  \fi
  \begingroup
    \noindent
    \hangindent=\dimexpr\aimleftindent+\challengelabelwidth\relax
    \hangafter=1
    \hspace*{\aimleftindent}%
    \textbf{\color{black!85}Challenge~\thechallenge:}\enspace
    #2%
    \par
  \endgroup
  \addvspace{\aimvspace}%
}

\newcommand{\R}{\mathbb{R}}
\newcommand{\Rd}{\mathbb{R}^d}
\newcommand{\N}{\mathbb{N}}

\newcommand{\E}{\mathbb{E}}
\newcommand{\rmd}{\mathrm{d}}

\newcommand{\Mcal}{\mathcal{M}}
\newcommand{\Pcal}{\mathcal{P}}
\newcommand{\Wcal}{\mathcal{W}}
\newcommand{\cP}{\Pcal}
\newcommand{\Ecal}{\mathcal{E}}
\newcommand{\Acal}{\mathcal{A}}
\newcommand{\Cpl}{{\rm Cpl}}
\newcommand{\MCov}{{\rm MCov}}

\renewcommand{\phi}{\varphi}

\newcommand{\Law}{\operatorname{Law}}
\newcommand{\dom}{\operatorname{dom}}
\newcommand{\co}{\operatorname{co}}
\newcommand{\supp}{\operatorname{supp}}
\newcommand{\pr}{\operatorname{pr}}

\newcommand{\blfootnote}[1]{%
  \begingroup
  \renewcommand{\thefootnote}{}%
  \footnote{#1}%
  \addtocounter{footnote}{-1}%
  \endgroup
}

\newtheorem{theorem}{Theorem}
\numberwithin{theorem}{section}

\newtheorem{lemma}[theorem]{Lemma}

\theoremstyle{definition}
\newtheorem{definition}[theorem]{Definition}

\crefname{theorem}{Theorem}{Theorems}
\Crefname{theorem}{Theorem}{Theorems}

\crefname{lemma}{Lemma}{Lemmas}
\Crefname{lemma}{Lemma}{Lemmas}

\crefname{proposition}{Proposition}{Propositions}
\Crefname{proposition}{Proposition}{Propositions}

\crefname{corollary}{Corollary}{Corollaries}
\Crefname{corollary}{Corollary}{Corollaries}

\crefname{definition}{Definition}{Definitions}
\Crefname{definition}{Definition}{Definitions}

\crefname{example}{Example}{Examples}
\Crefname{example}{Example}{Examples}

\crefname{remark}{Remark}{Remarks}
\Crefname{remark}{Remark}{Remarks}

\AddToHook{env/lemma/begin}{%
  \crefalias{theorem}{lemma}%
}

\AddToHook{env/proposition/begin}{%
  \crefalias{theorem}{proposition}%
}

\AddToHook{env/corollary/begin}{%
  \crefalias{theorem}{corollary}%
}

\AddToHook{env/definition/begin}{%
  \crefalias{theorem}{definition}%
}

\AddToHook{env/example/begin}{%
  \crefalias{theorem}{example}%
}

\AddToHook{env/remark/begin}{%
  \crefalias{theorem}{remark}%
}

\author{Manuel Hasenbichler}

\title{Stochastic Knothe--Rosenblatt:
Light-speed Calibration of Stochastic Local Volatility Models}

\begin{document}
	
	\begin{center}
{\Large Stochastic Knothe--Rosenblatt: \\  \vspace{1em} Light-speed Calibration of Stochastic Local Volatility Models}

		\vspace{1em}
		{\large M.\ Beiglböck, M.\ Hasenbichler, G.\ Pammer
        }
		\blfootnote{\textit{Funding.}  MB
		gratefully acknowledges financial support from FWF through P34743,
		P35197, and FW506064 and from OeNB through AB1898311. All authors are grateful to Valentin
		Tissot-Daguette for many insightful discussions and recommendations.
        
    }
	\end{center}

	\begin{center}
		\rule{150mm}{0.2mm}
	\end{center}		

	\begin{abstract}
	
	European option smiles determine the risk-neutral marginal laws of an asset, but not their intertemporal coupling, which is decisive for many applications. 
    The Bass martingale construction selects, among all calibrated martingales, the one closest to Bachelier dynamics; it permits fast calibration at discrete maturities and recovers the Dupire local-volatility (LV) model as the maturity grid is refined.
    
    This article develops a modular calibration overlay for existing stochastic and path-dependent volatility models.
    We recursively construct a martingale that matches all prescribed marginals exactly while remaining as close as possible, in an adapted Knothe--Rosenblatt sense, to the reference dynamics. 
    As with the Bass LV model, each calibration step is amenable to an efficient Martingale Sinkhorn algorithm.
    We develop the theoretical foundations, numerical implementation and consider convergence to the SLV model.  We also benchmark the method for Heston and Bergomi finite-factor path-dependent volatility dynamics and develop the multi-asset extension.

	\medskip

	\noindent\textit{Keywords:} Bass martingale;  martingale Sinkhorn algorithm; calibration; stochastic local
	volatility; Dupire local volatility; martingale optimal transport;
	weak optimal transport; adapted Wasserstein distance; Knothe--Rosenblatt
	coupling.

	\noindent\textit{MSC2020 subject classifications:} Primary 60G42, 49Q22,
	91G20; secondary 60G44, 91G60.
	\end{abstract}

	\begin{center}
		\rule{150mm}{0.2mm}
	\end{center}		

	\vspace{5mm}

\section{Introduction}

\subsection{Outline}

Optimal transport provides variational tools for model calibration when the risk-neutral marginal laws $\mu_0, \ldots, \mu_n$ of the asset price are available at finitely many maturities $0= T_0 < T_1< \ldots < T_n$ ~\cite{TaTo13,BeJu16,GuLoWa19,GuLo21,NuWiZh23,Gu24,BePaSc22,Zh24,LiNuShTi26}. 
A prominent example is the Bass LV model~\cite{CoHe21}. It solves the Bachelier \emph{minimal distortion calibration} problem 
\begin{equation}\label{eq:bass.MDC}\tag{Bass-MDC}
    \inf_{\rmd X_t = \sigma_t^X\rmd B_t, X_{t_i} \sim\mu_i}
    \E \Big[
    \int_0^{T_n} |\sigma_t^X - \sigma^{\text{Bach}}|^2\,\rmd t
    \Big],
\end{equation}
and is closest to the Bachelier model, subject to the marginal constraints.
Crucially, (Bass-MDC)
can be \emph{separated} into $n$ independent martingale transport problems, which allows for efficient numerics based on the \emph{Martingale Sinkhorn algorithm}~\cite{CoHe21}.
For many applications, however, one would like to calibrate relative to richer reference models.

{
\paragraph{Challenge (MDC).} Given a reference model $R$, possibly driven by stochastic factors,
construct a  model $X$

1. \ that exactly reproduces the prescribed marginals $\mu_0,\ldots,\mu_n$, and

2. \ matches the dynamics of the reference model $R$ as closely as possible.
}
\bigskip

To obtain a numerically feasible approach to this challenge, a tempting simplification is to mimic $R$ on each interval $[T_i,T_{i+1}]$ and then concatenate the corresponding solutions. However, this effectively enforces stepwise Markovianity. 
In the regular diffusion setting, any continuous strong Markov martingale limit with the prescribed marginals is the Dupire LV model~\cite{Du94}, just as for the much simpler Bass model; cf.~\cite{Lo08,Lo09}.
To preserve the reference dynamics while retaining tractable computation, we pursue the following alternative.

\paragraph{Stochastic Knothe--Rosenblatt approach to MDC.}

We propose a \emph{Stochastic Knothe--Rosenblatt} (SKR) construction, which builds the calibrated model forward in time. 
Its modular design allows quant desks to retain their existing stochastic volatility models as references and add SKR as an exact calibration overlay.
At each maturity, we retain the joint law of the calibrated model and the reference for the next calibration step. 
This preserves the dependence structure while reducing the optimization to single-period problems. 
These are solved using a fast and robust Martingale Sinkhorn algorithm that is simple to implement~\cite{HaPaTh26,HaJoLoObPa25}. 
 
\medskip

We develop the theoretical foundations
and benchmark the method for Heston, Bergomi, and finite-factor path-dependent volatility dynamics. 
Under the regularity assumptions stated in the online supplement~\cite{BeHaPa26supp}, the SKR model approaches the SLV model associated with the same reference model $R$ as the maturity grid becomes finer. In particular, the SKR construction provides an efficient and stable numerical scheme for SLV models, in the same way that the Bass construction does for the Dupire LV model.

\subsection{Existence and uniqueness of SKR} 

Let $\mathsf E$ be a Polish space and let $R=(Y,L)$ be a Markov process on $\R\times\mathsf E$. Throughout we assume that for every starting point $r\in \R\times \mathsf E$, the first coordinate $Y$ is a square integrable  martingale, with $\Law(Y_t\mid R_s=r)$  continuous for $s<t$.  Let $\delta_{x_0}=\mu_0\leq_c \ldots \leq_c \mu_n$ and assume that all marginals have finite second moment.

Suppose that calibration has reached $T_i$. The state passed to the next step is $\kappa_i:=\operatorname{Law}(X_{T_i},L_{T_i})$; its first marginal is the prescribed law $\mu_i$. Starting from $\kappa_i$, we calibrate to the next marginal $\mu_{i+1}$ by choosing the dynamics of $X$ as close as possible to the dynamics of $Y$. We formulate this as minimizing the squared difference of the volatilities of $X$ and $Y$:
\begin{equation}\label{eq:bridge}\tag{Bridge}
    V_i(\kappa_i,\mu_{i+1}) := \min_{X\in \Acal(\kappa_i, \mu_{i+1})} \E[[X-Y]_{T_{i+1}}- [X-Y]_{T_{i}}],
\end{equation}
where $\Acal(\kappa_i,\mu_{i+1})$ consists of square-integrable
$(\mathcal F_t)$-martingales $X$ satisfying $X_{T_i}=Y_{T_i}$ and
$X_{T_{i+1}}\sim\mu_{i+1}$. We work with the usual augmentation
$(\mathcal F_t)$ of the natural filtration generated by $R=(Y,L)$,
started with law $\kappa_i$ at $T_i$, and assume that $R$ is Markov
with respect to this filtration.

Thus, for each $T_i$ the reference process is started in $\kappa_i$. The bridge starts from the same random state as the reference process and minimizes the quadratic variation accumulated by $X-Y$ over $[T_i,T_{i+1}]$.\footnote{
When $X$ and $Y$ can be represented as It\^o processes
$    \rmd X_t=\sigma_t^X\,\rmd W_t $,  $\rmd Y_t=\sigma_t^Y\,\rmd W_t$, the optimization objective
can also be written in the form  $
    \mathbb E\Big[ [X-Y]_{T_{i+1}}-[X-Y]_{T_i} \Big] = \mathbb E\Big[ \int_{T_i}^{T_{i+1}} \lvert \sigma_t^X-\sigma_t^Y\rvert^2\,dt \Big],
$ as in \eqref{eq:bass.MDC}. However, the formulation in \eqref{eq:bridge} is technically more convenient, in particular in view of applications to non-continuous models.}

The joint evolution of the optimizer $X$ and the reference model then determines $\kappa_{i+1} = \mathrm{Law}(X_{T_{i+1}},L_{T_{i+1}})$, which serves as the initial condition for the next step. 
By constructing $\kappa_i$ sequentially from the preceding state, our approach follows the \emph{Knothe--Rosenblatt} paradigm from classical transport: building complex couplings through tractable conditional transports~\cite{CaGaSa10,BePaPo23}.

\begin{figure}[H]
\centering

\begin{tikzpicture}[
        scale=0.94,
        transform shape,
        every node/.style={font=\normalsize},
        reference/.style={
            line width=0.75pt,
            -{Stealth[length=2.5mm]}
        },
        transport/.style={
            line width=0.4pt
        },
        transporthead/.style={
            line width=0.4pt,
            -{Triangle[open,length=2.2mm,width=2.2mm]}
        }
    ]
    
    \everymath{\displaystyle}
    
    \coordinate (c0) at (0,0);
    \coordinate (c1) at (3.6,0);
    \coordinate (c2) at (7.2,0);
    \coordinate (cn) at (11.5,0);
    
    \foreach \c in {c0,c1,c2,cn}{
        \draw[rounded corners=3pt]
            ($(\c)+(-0.70,-0.9)$)
            rectangle
            ($(\c)+(0.70,1.06)$);
    }
    
    \node (z0) at ($(c0)+(0,0.75)$) {$L_{T_0}$};
    \node (s0) at ($(c0)+(0,0.05)$) {$X_{T_0}$};
    \node at ($(c0)+(0,-0.3)$) {\rotatebox{90}{$\sim$}};
    \node at ($(c0)+(0,-0.64)$) {$\mu_0$};
    
    \node (z1) at ($(c1)+(0,0.75)$) {$L_{T_1}$};
    \node (s1) at ($(c1)+(0,0.05)$) {$X_{T_1}$};
    \node at ($(c1)+(0,-0.3)$) {\rotatebox{90}{$\sim$}};
    \node at ($(c1)+(0,-0.64)$) {$\mu_1$};
    
    \node (z2) at ($(c2)+(0,0.75)$) {$L_{T_2}$};
    \node (s2) at ($(c2)+(0,0.05)$) {$X_{T_2}$};
    \node at ($(c2)+(0,-0.3)$) {\rotatebox{90}{$\sim$}};
    \node at ($(c2)+(0,-0.64)$) {$\mu_2$};
    
    \node (zn) at ($(cn)+(0,0.75)$) {$L_{T_n}$};
    \node (sn) at ($(cn)+(0,0.05)$) {$X_{T_n}$};
    \node at ($(cn)+(0,-0.3)$) {\rotatebox{90}{$\sim$}};
    \node at ($(cn)+(0,-0.64)$) {$\mu_n$};
    
    \coordinate (r0out) at ($(z0.east)+(-0.75mm,0.10)$);
    \coordinate (r1in) at ($(z1.west)+(0.75mm,0.10)$);
    
    \coordinate (z0out) at ($(z0.east)+(-0.75mm,0)$);
    \coordinate (s0out) at ($(s0.east)+(-0.75mm,0)$);
    \coordinate (s1in) at ($(s1.west)+(0.75mm,0)$);
    \coordinate (merge01) at ($(s1in)+(-3.75mm,0)$);
    
    \draw[reference]
        (r0out)
        to[bend left=13]
        node[above,pos=0.50] {$\mathrm dR$}
        (r1in);
    
    \draw[transport]
        (z0out) -- (merge01);
    
    \draw[transport]
        (s0out) -- (merge01);
    
    \draw[transporthead]
        (merge01) -- (s1in);
    
    \node at ($(s0out)!0.52!(s1in)+(0,-0.44)$)
        {\eqref{eq:bridge}};
    
    \coordinate (r1out) at ($(z1.east)+(-0.75mm,0.10)$);
    \coordinate (r2in) at ($(z2.west)+(0.75mm,0.10)$);
    
    \coordinate (z1out) at ($(z1.east)+(-0.75mm,0)$);
    \coordinate (s1out) at ($(s1.east)+(-0.75mm,0)$);
    \coordinate (s2in) at ($(s2.west)+(0.75mm,0)$);
    \coordinate (merge12) at ($(s2in)+(-3.75mm,0)$);
    
    \draw[reference]
        (r1out)
        to[bend left=13]
        node[above,pos=0.50] {$\mathrm dR$}
        (r2in);
    
    \draw[transport]
        (z1out) -- (merge12);
    
    \draw[transport]
        (s1out) -- (merge12);
    
    \draw[transporthead]
        (merge12) -- (s2in);
    
    \node at ($(s1out)!0.52!(s2in)+(0,-0.44)$)
        {\eqref{eq:bridge}};
    
    \coordinate (r2out) at ($(z2.east)+(-0.75mm,0.10)$);
    \coordinate (z2out) at ($(z2.east)+(-0.75mm,0)$);
    \coordinate (s2out) at ($(s2.east)+(-0.75mm,0)$);
    
    \coordinate (rnin) at ($(cn)+(-0.35,0.85)$);
    \coordinate (snin) at ($(cn)+(-0.32,0.05)$);
    
    \coordinate (merge2n) at ($(snin)+(-3.7mm,0)$);
    
    \path
        (r2out)
        to[bend left=8.5]
        coordinate[pos=0.50] (rdot)
        (rnin);
    
    \coordinate (mid) at ($(r2out)!0.50!(rnin)$);
    \coordinate (vbot) at ($(mid)+(0,-2)$);
    \coordinate (vtop) at ($(mid)+(0,2)$);
    
    \coordinate (zdot)
        at (intersection of z2out--merge2n and vbot--vtop);
    
    \coordinate (sdot)
        at (intersection of s2out--merge2n and vbot--vtop);
    
    \draw[reference]
        (r2out)
        to[bend left=8.5]
        (rnin);
    
    \draw[transport]
        (z2out) -- (merge2n);
    
    \draw[transport]
        (s2out) -- (merge2n);
    
    \draw[transporthead]
        (merge2n) -- (snin);

    \fill[white]
        ($(mid)+(-4mm,-1.10)$)
        rectangle
        ($(mid)+(4.05mm,0.55)$);
    
    \foreach \dx in {-1.5mm,0mm,1.5mm}{
        \fill ($([xshift=\dx]rdot)$) circle[radius=0.55pt];
    }

    \fill ($(zdot)!1.5mm!(z2out)$)   circle[radius=0.55pt];
    \fill (zdot)                     circle[radius=0.55pt];
    \fill ($(zdot)!1.5mm!(merge2n)$) circle[radius=0.55pt];
    
    \foreach \dx in {-1.5mm,0mm,1.5mm}{
        \fill ($([xshift=\dx]sdot)$) circle[radius=0.55pt];
    }

\end{tikzpicture}
\medskip

\caption{Minimal Distortion Calibration via Stochastic Knothe--Rosenblatt. At each maturity, the propagated state has law $\kappa_i=\operatorname{Law}(X_{T_i},L_{T_i})$, whose first marginal is $\mu_i$.}
\label{fig:SKR}
\end{figure}

To solve~\eqref{eq:bridge}, a key observation is that this \emph{dynamic} problem admits an equivalent \emph{static} formulation as a \emph{weak martingale optimal transport} (WMOT) problem~\cite{BeJoMaPa23,CaMaSy25}.
Unlike in classical martingale transport, where the cost is assigned to a pair of points $(x,y)$, WMOT costs may depend on $x$ and the terminal law of a bridge started at $x$.

Under the above assumptions and the regularity conditions specified in
\Cref{sec:bridge.wmot} for each calibration interval, we obtain

\begin{theorem}[Existence and uniqueness of the SKR construction]
\label{thm:skr}
The one-period optimizers exist and are unique in law. The stochastic Knothe--Rosenblatt model exists and is unique in law. 
\end{theorem}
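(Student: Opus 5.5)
We prove the statement by reducing the dynamic problem \eqref{eq:bridge} to a static weak martingale optimal transport problem, solving that, and then iterating over $i$.

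\textbf{Step 1: dynamic-to-static reduction.} Fix $i$ and $\kappa_i$ with first marginal $\mu_i$. For $X\in\Acal(\kappa_i,\mu_{i+1})$ the process $X-Y$ is a square-integrable martingale vanishing at $T_i$. Hence
\[
\E\big[[X-Y]_{T_{i+1}}-[X-Y]_{T_i}\big]=\E\big[(X_{T_{i+1}}-Y_{T_{i+1}})^2\big]=\E[X_{T_{i+1}}^2]-2\E[X_{T_{i+1}}Y_{T_{i+1}}]+\E[Y_{T_{i+1}}^2].
\]
The first term equals the second moment of $\mu_{i+1}$. The last term is determined by $\kappa_i$ and the Markov kernel of $R$. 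Minimizing \eqref{eq:bridge} is therefore equivalent to maximizing the covariance $\E[X_{T_{i+1}}Y_{T_{i+1}}]$.

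Conditioning on $\mathcal F_{T_i}$, i.e.\ on $R_{T_i}=r$, and using the martingale property of $X$ together with the Markov property of $R$, this becomes a static problem. Over couplings $\pi(\rmd r,\rmd x')$ with first marginal $\kappa_i$, second marginal $\mu_{i+1}$, and barycenter constraint $\int x'\,\pi_r(\rmd x')=x$, maximize
\[
\int \MCov\big(\pi_r,\ \Law(Y_{T_{i+1}}\mid R_{T_i}=r)\big)\,\kappa_i(\rmd r),
\]
where $\MCov$ denotes the maximal covariance between the two laws. This is a WMOT problem of the type studied in \cite{BeJoMaPa23,CaMaSy25}, with cost depending on $r$ through the reference law $\nu_r:=\Law(Y_{T_{i+1}}\mid R_{T_i}=r)$.

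Two inequalities give the equivalence. First, any admissible $X$ produces a feasible $\pi$ with value at most the static optimum, since the conditional covariance is bounded by $\MCov$. Second, conversely, given $\pi$, the optimal coupling of $\pi_r$ and $\nu_r$ is comonotone, i.e.\ $x'=F_{\pi_r}^{-1}(F_{\nu_r}(y))$, using continuity of $\nu_r$. Setting
\[
X_t:=\E\big[F_{\pi_{R_{T_i}}}^{-1}\big(F_{\nu_{R_{T_i}}}(Y_{T_{i+1}})\big)\,\big|\,\mathcal F_t\big]
\]
yields an element of $\Acal$ attaining the static value. This is a Bass-type construction relative to $R$.

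\textbf{Step 2: existence and uniqueness of the static optimizer.} Existence follows from weak compactness of the set of martingale couplings with fixed marginals (a set made nonempty by $\mu_i\le_c\mu_{i+1}$) and from upper semicontinuity and concavity of $\pi\mapsto\int \MCov(\pi_r,\nu_r)\,\kappa_i(\rmd r)$. The concavity enters through the representation
\[
\MCov(p,\nu)=\sup_{\psi\ \text{convex}}\Big\{\int y\,\rmd\nu-\ldots\Big\},
\]
or equivalently through convexity of $p\mapsto \mathcal W_2^2(p,\nu)-\int|x|^2\,\rmd p$ in the appropriate sense. The regularity conditions of \Cref{sec:bridge.wmot} are used to justify the semicontinuity.

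Uniqueness of $\pi$ is the crux and the main obstacle. The map $p\mapsto\MCov(p,\nu)$ is linear along quantile interpolation but not strictly concave along mixtures. I would argue via duality as in the Bass/stretched Brownian motion theory. The dual optimizer yields a convex potential $\psi_r$ such that every optimal $\pi_r$ is the law of $\nabla\psi_r^*$-type pushforwards of $\nu_r$ convolved with the martingale structure. More concretely, I expect each optimizer to satisfy $\pi_r=(\nabla v(\cdot))_\#\,\nu_r$ for a common function $v$ determined by the dual, which pins $\pi$ down. If this route fails, the alternative is to show directly that a mixture of two distinct optimizers strictly increases the value. This would use continuity of $\nu_r$: given such $\nu_r$, the comonotone coupling with a mixture is strictly better than the mixture of couplings unless the quantile maps agree.

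Given uniqueness of $\pi$, the optimal $X$ is unique in law jointly with $R$. Indeed, the equality case in Step 1 forces $X_{T_{i+1}}$ to equal the comonotone transform of $Y_{T_{i+1}}$ almost surely, and $X$ is the martingale it generates.

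\textbf{Step 3: induction.} Starting from $\kappa_0=\delta_{x_0}\otimes\Law(L_0)$, uniqueness in law of $(X,R)$ on $[T_i,T_{i+1}]$ determines $\kappa_{i+1}=\Law(X_{T_{i+1}},L_{T_{i+1}})$, whose first marginal is $\mu_{i+1}$. Concatenating the interval solutions and using the Markov property of $R$ then gives a process on $[0,T_n]$ that exists and is unique in law. This is the SKR model.
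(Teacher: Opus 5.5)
Your overall architecture is the paper's. You reduce \eqref{eq:bridge} to maximizing $\E[X_{T_{i+1}}Y_{T_{i+1}}]$ and bound this by the static value over $\Mcal(\kappa_i,\mu_{i+1})$. You attain the static problem by compactness and upper semicontinuity, and realize the optimizer through the conditional comonotone map $F^{-1}_{\hat\pi_r}\circ F_{\gamma_r}$. You get uniqueness of the dynamic optimizer from uniqueness of the monotone coupling, and iterate over $i$. This is \Cref{thm:wmot.bridge.existence.uniqueness}, applied bridge by bridge.

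The real issue is uniqueness of the static optimizer, which you misdiagnose. Your claim that $p\mapsto\MCov(p,\nu)$ is not strictly concave along mixtures is false once $\nu$ is atomless, which is the standing assumption \ref{aspt:atomless}. Suppose concavity held with equality at $p_{1/2}=\tfrac12(p_0+p_1)$. Let $q_j$ be the optimal couplings, induced by the maps $T_j=F^{-1}_{p_j}\circ F_\nu$ acting on the $\nu$-coordinate. Then $\tfrac12(q_0+q_1)$ would be optimal for $(p_{1/2},\nu)$, hence equal to the unique monotone coupling, which is itself induced by a map of the $\nu$-coordinate. So its disintegration $\tfrac12(\delta_{T_0(y)}+\delta_{T_1(y)})$ would be a Dirac for $\nu$-a.e.\ $y$, forcing $T_0=T_1$ and $p_0=p_1$. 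Now $\Mcal(\kappa,\nu)$ is convex, and the disintegration of $\tfrac12(\pi^0+\pi^1)$ with respect to the common first marginal $\kappa$ is $\tfrac12(\pi^0_r+\pi^1_r)$. Hence two distinct optimizers would give a strictly larger value, which is finite by \ref{aspt:moments}. This is exactly the paper's one-line argument (strict concavity of $\rho\mapsto\MCov(\rho,\gamma_r)$ gives uniqueness), and it is what your fallback says. So your fallback \emph{is} the proof. Linearity along quantile interpolation is beside the point, because quantile interpolation of the $\pi_r$ does not preserve the constraint $\int\pi_r\,\kappa(\rmd r)=\nu$.

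Drop the duality route rather than repair it. As written it is wrong: a representation $\pi_r=(\nabla v)_\#\nu_r$ with one common $v$ cannot satisfy the barycenter constraints in general. The correct form, from \Cref{thm:wmot.bridge.duality}, is $\pi_r=(g(a(r)+\cdot))_\#\gamma_r$ with an $r$-dependent shift. Moreover, dual attainment there needs irreducibility of $(\mu_i,\mu_{i+1})$, which \Cref{thm:skr} does not assume. Without it you would have to decompose into irreducible components.

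A minor point on existence: the concavity behind semicontinuity comes from mixing couplings, or from writing $\MCov$ as an infimum (not a supremum) over convex potentials. The paper combines this convexity of $\rho\mapsto\Wcal_2^2(\rho,\gamma_r)$ with the joint lower semicontinuity in \ref{aspt:lsc}.
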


\subsection{Bass-structure}
\label{sec:bass.sinkhorn}

Fix one calibration interval and rescale it to $[0,1]$. Let $\mu$, $\nu$ be probabilities on $\R$ with finite second moments, $\mu\leq_c\nu$, where $(\mu,\nu)$ is irreducible, i.e.\ the call potentials $c_\mu$, $c_\nu$ satisfy that $I:=\{c_{\mu}<c_{\nu}\}$ is an interval and $\mu(I)=1$. Let $\kappa$ be a probability on $\R\times \mathsf E$ with first marginal $\mu$ and for $r \in \R \times \mathsf E$, let
\[
    \gamma_{r}:=\operatorname{Law}(Y_1\mid R_0 = r)
\]
be the conditional terminal law of the reference martingale. We assume that $\gamma_r$ is  continuous for every $r$ and that the map $r\mapsto \gamma_r$ is $\Wcal_2$ continuous. 

Starting the reference process $(Y,L)$ in $\kappa$,  the unit interval bridge problem is to find the martingale $X$ solving
\begin{align}\label{eq:Ubridge}
\inf_{X: X_0= Y_0, X_1\sim \nu} \E[X-Y]_1.
\end{align}
We will give a structural characterization in the spirit of Bass martingales.
To this end we introduce some notation.

Given a function $g:\R \to [-\infty,\infty]$ and a probability $\gamma$ on $\R$ we denote the `reflected' convolution by
$g \star \gamma (a):= \int g(a+y)\, d\gamma(y).$
If $g$ is increasing and $\gamma$ is atom-free, then $g\star \gamma$ is increasing and continuous on the interval
where it is finite. We use $(g\star \gamma)^{-1}$ to denote an inverse of $g\star \gamma$.
\begin{definition}[$R$-Bass martingale]
\label{def:R.Bass}
A martingale $X$
is called an
$R$-Bass martingale if for some  increasing
$g:\R\to[-\infty,\infty]$ and $A:=a(R_0):= (g\star \gamma_{R_0})^{-1}(X_0)$
\begin{equation}\label{eq:R.Bass}
    X_t
    =
    \E\bigl[g(A+Y_1) \,\big|\, \sigma(A,R_s:0\leq s\leq t)\bigr]
    =
    \E[g(A+Y_1)\mid A,R_t],
    \qquad 0\leq t\leq1.
\end{equation}
The second equality follows from the Markov property of the reference
process and the fact that $A=a(R_0)$ is measurable at the initial time.
\end{definition}
We briefly comment on \Cref{def:R.Bass}.
First, note that the martingale condition $X_0=\E[X_1|X_0, L_0]=
\E[X_1|R_0]$ amounts to 
\[
    X_0 = \E[g(a(R_0) +Y_1)|R_0] 
    \quad \Longleftrightarrow \quad 
    x_0 = (g\star \gamma_{r_0})(a(r_0)), \ {\kappa\text{-a.s.}}
\]
This explains the choice $a(r_0)= (g\star \gamma_{r_0})^{-1}(x_0)$.
In the definition of $R$-Bass martingales we tacitly assume that $(g\star \gamma_{r_0})^{-1}(x_0)$ exists $\kappa$-a.s. While the inverse used here is not necessarily unique, {it is straightforward to see that $g(A+Y_1)$ does not depend on the specific choice of the inverse.}

We observe also that shifting the function $g$ to the right by a constant $c$ amounts to adding $c$ to the respective $A$. This degree of freedom will be used to normalize the shifts.

\begin{theorem}[Bass characterization]
\label{thm:R.Bass.characterization}
Under the standing assumptions, an admissible martingale $X$ from $\kappa$ to
$\nu$ solves the unit interval bridge problem \eqref{eq:Ubridge} 
{if and only if it is an $R$-Bass martingale.}
\end{theorem}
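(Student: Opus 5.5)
The plan is to reduce the dynamic problem \eqref{eq:Ubridge} to a static weak martingale transport problem and then read off the Bass structure from its dual. Since $X$ and $Y$ are square-integrable martingales with $X_0=Y_0$,
\[
\E[X-Y]_1=\E[(X_1-Y_1)^2]=\E[X_1^2]+\E[Y_1^2]-2X_0^2\cdot 0-2\E[X_1Y_1]+2\E[X_0Y_0]-\ldots,
\]
and more precisely $\E[X-Y]_1=\E[(X_1-Y_1)^2]-\E[(X_0-Y_0)^2]=\E[(X_1-Y_1)^2]$. Here $\E X_1^2$ is fixed by $\nu$ and $\E Y_1^2$ is fixed by $\kappa$ and $R$. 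Hence minimizing \eqref{eq:Ubridge} is equivalent to maximizing $\E[X_1Y_1]$ over all $\mathcal F_1$-measurable $X_1\sim\nu$ with $\E[X_1\mid R_0]=X_0$. The remaining dynamic content is trivial, because $X_t=\E[X_1\mid\mathcal F_t]$.

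Conditioning on $R_0=r$, the problem becomes a weak martingale transport problem. We choose a kernel $r\mapsto\pi_r:=\Law(X_1\mid R_0=r)$ with barycenter $x_0(r)$ and $\int\pi_r\,\kappa(dr)=\nu$. The optimal coupling of $X_1$ with $Y_1$ given $R_0=r$ maximizes correlation between $\pi_r$ and $\gamma_r$, which gives the weak cost $-\sup_{\Cpl(\pi_r,\gamma_r)}\E[X_1Y_1]$. Because $\gamma_r$ is atom-free, this supremum is attained uniquely by the comonotone map $X_1=F_{\pi_r}^{-1}\circ F_{\gamma_r}(Y_1)$, so $X_1$ is a function of $(R_0,Y_1)$. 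The problem is then a WMOT problem in $\pi$, with cost
\[
C(r,\pi_r)=\mathcal W_2^2(\pi_r,\gamma_r)\ \text{up to constants},
\]
which is convex in $\pi_r$.

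Next I would establish duality and dual attainment along the lines of \cite{BeJoMaPa23,CaMaSy25}:
\[
\sup_\pi \int \tfrac12\bigl(\text{max-cov}(\pi_r,\gamma_r)\bigr)\,d\kappa
=\inf_{\psi}\ \int\psi\,d\nu+\int \bigl(\psi^{*}\text{-type term}\bigr)(r)\,d\kappa(r).
\]
For convex $\psi$ (a call-function potential), the inner problem at fixed $r$ is $\sup_{\pi:\,\mathrm{bary}=x_0}\{\text{MCov}(\pi,\gamma_r)-\int\psi\,d\pi\}$. Using the Lagrange multiplier $a$ for the barycenter, this is solved pointwise by $X_1=\nabla(\psi^*)(a+Y_1)=:g(a+Y_1)$, with $g$ increasing. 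The constraint $\E[g(a+Y_1)\mid R_0=r]=x_0$ then forces $a=(g\star\gamma_r)^{-1}(x_0)$.

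For sufficiency, if $X$ is $R$-Bass with $g=\nabla\phi$ for a convex $\phi$, I would verify optimality directly by the Fenchel--Young inequality. For any competitor $X'$,
\[
X_1'(A+Y_1)\le \phi(A+Y_1)+\phi^*(X_1'),
\]
with equality for $X$. Taking expectations, and using $\E[X_1'A]=\E[X_0A]=\E[X_1A]$ (since $A$ is $R_0$-measurable and both are martingales from $X_0$) together with $\Law X_1'=\Law X_1=\nu$, gives $\E[X_1'Y_1]\le\E[X_1Y_1]$. Integrability requires care, in particular truncation when $g$ takes infinite values on the boundary of $I$. This is where irreducibility and $\mu(I)=1$ enter, ensuring $a(r)$ lies in the interior.

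For necessity, I would take an optimizer, which exists by the WMOT theory and is unique by strict convexity of the cost in $\pi_r$ combined with the comonotone structure. I would then show it coincides with the Bass martingale produced by an attained dual $(\psi,a)$. The main obstacle is \textbf{dual attainment}: producing an increasing $g$ with $\nu=\Law(g(a(R_0)+Y_1))$ and barycenter constraints holding $\kappa$-a.s. I would attempt this as in the Bass/Martingale Sinkhorn literature. One minimizes the convex functional
\[
\psi\mapsto\int\psi\,d\nu+\int\sup_a\bigl(a x_0-(\psi^*\star\gamma_r)(a)\bigr)\,d\kappa,
\]
which is the analogue of the Bass dual functional. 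Coercivity should come from irreducibility and continuity from the $\mathcal W_2$-continuity of $r\mapsto\gamma_r$. The first-order conditions of a minimizer give exactly $\nu=\Law(g(A+Y_1))$. With an attained dual, complementary slackness yields that any optimizer satisfies equality in Fenchel--Young a.s., hence $X_1=g(A+Y_1)$, and $X_t=\E[X_1\mid A,R_t]$ by the Markov property as in \Cref{def:R.Bass}.
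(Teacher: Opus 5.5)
Your outline follows the paper's proof. You reduce to maximizing $\E[X_1Y_1]$, pass to the static problem over $\Mcal(\kappa,\nu)$ with cost $\MCov(\pi_r,\gamma_r)$, read off the Bass structure from duality and complementary slackness, and verify optimality of $R$-Bass martingales by Fenchel--Young.

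The genuine gap is at the step you yourself flag, dual attainment, and it is not a coercivity issue. Under mere irreducibility, the limit of a minimizing sequence of potentials need not be $\nu$-integrable. Your functional is then undefined at that limit, since $\int\psi\,\rmd\nu$ and $\int\psi^C\,\rmd\kappa$ may both be infinite. As written it also has the wrong sign: with $+\int\sup_a(\cdots)\,\rmd\kappa$, replacing $\psi$ by $\psi+c$ shifts the value by $2c$, so it is unbounded below. The dual is $\int\psi\,\rmd\nu-\int(\psi^*\star\gamma_r)^*(x)\,\kappa(\rmd r)$.

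The missing idea is to integrate the conditional difference,
$\Ecal(\psi)=\int\bigl(\int\psi\,\rmd\pi_r-\psi^C(r)\bigr)\,\kappa(\rmd r)$.
Write it as the sum of the nonnegative terms $\int\psi\,\rmd\pi_r-\psi(x)$ and $\psi(x)-\psi^C(r)\ge x^2$. Then show it is independent of $\pi\in\Mcal(\kappa,\nu)$, which is \cite[Lemma~A.15]{BeJu16} and needs irreducibility, and invariant under adding affine functions.

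Only then can you run the compactness argument:
\begin{itemize}
\item normalize a minimizing sequence to $\hat\psi_n\ge0$, $\hat\psi_n(\bar\mu)=0$;
\item bound its Jensen gaps under a martingale coupling whose kernels charge every open set charged by $\nu$;
\item extract a locally uniform limit;
\item conclude by a Fatou-type lower semicontinuity of $\Ecal$, which needs no continuity of $r\mapsto\gamma_r$.
\end{itemize}

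Complementary slackness must also be read in this relaxed form. The conditional gap $\int\psi\,\rmd\hat\pi_r-\psi^C(r)-\MCov(\hat\pi_r,\gamma_r)\ge0$ integrates to zero, so it vanishes $\kappa$-a.e. Since $x\in I_\nu\subset\operatorname{int}\dom\psi$, there is a finite $a(r)\in\partial(\psi^*\star\gamma_r)^*(x)$. Equality in the covariance duality then gives $\hat\pi_r=(g(a(r)+\cdot))_\#\gamma_r$ directly, so no first-order conditions are needed.

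The same integrability problem breaks your sufficiency argument. The cancellations $\E[X_1'A]=\E[X_1A]$ and $\E[\phi^*(X_1')]=\E[\phi^*(X_1)]$ require $AX_1'\in L^1$ and $\phi^*\in L^1(\nu)$. The definition of an $R$-Bass martingale does not provide these; it only asks $g(A+Y_1)\in L^2$. If $\E[\phi^*(X_1)]=+\infty$, your comparison reads $\infty-\infty$. The infinite values of $g$ are not the issue, since by atomlessness $a(r)+Y_1$ avoids them almost surely.

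The fix is to condition on $R_0=r$, where $a(r)$ is a constant and all Fenchel terms are integrable. Take $\psi:=\phi^*$, set to $+\infty$ outside $\bar I_\nu$. Then
$\E[X_1'Y_1\mid R_0=r]\le\MCov(\pi'_r,\gamma_r)\le\int\psi\,\rmd\pi'_r-\psi^C(r)$,
with equality throughout for $X$. Integrating in $\kappa$ and using the coupling independence of $\Ecal$ gives $\E[X_1'Y_1]\le\E[X_1Y_1]$. This is exactly the paper's converse. Once you adopt the relaxed functional, both directions go through along your outline.
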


\definecolor{refblue}{RGB}{31,86,150}
\definecolor{trajred}{RGB}{125,30,30}
\definecolor{mapgreen}{RGB}{32,110,55}
\definecolor{refblue}{RGB}{31,86,150}
\definecolor{trajred}{RGB}{125,30,30}
\definecolor{mapgreen}{RGB}{32,110,55}

\begin{figure}[H]
\centering
\begin{tikzpicture}[
        scale=0.68,
        axis/.style   ={line width=0.7pt, color=refblue, -{Stealth[length=2.4mm]}},
        traj/.style   ={line width=0.5pt, color=trajred},
        dens/.style   ={line width=0.7pt, color=refblue, fill=refblue!8},
        blob/.style   ={line width=0.7pt, color=refblue, fill=refblue!12},
        slice/.style  ={line width=0.5pt, color=black!40, fill=black!6},
        fibre/.style  ={line width=0.6pt, color=refblue},
        gfibre/.style ={line width=0.5pt, color=black!40},
        map/.style    ={line width=0.8pt, color=mapgreen, -{Stealth[length=2.8mm]}},
        axlab/.style  ={font=\small, color=refblue},
        meas/.style   ={font=\small, color=refblue},
        dynlab/.style ={font=\small, color=trajred},
        maplab/.style ={font=\small, color=mapgreen}
    ]

    \begin{scope}[shift={(0,4.800)}]

    \draw[axis] (0.000,-1.180) -- (0.000,2.150);
    \draw[axis] (6.450,-1.180) -- (6.450,2.150);
    \draw[axis] (0.532,-0.049) -- (-1.199,-1.440);
    \node[axlab] at (0,2.150) [above] {$A+Y_0$};
    \node[axlab] at (6.450,2.150) [above left=-1pt] {$A+Y_1$};
    \node[axlab] at (-1.199,-1.440) [below left=-2pt and -1pt] {$L_0$};

    \draw[gfibre] (-0.874,-1.179) -- (-0.874,0.025);
    \draw[slice] plot coordinates {
        (-0.874,-1.049) (-0.843,-1.040) (-0.838,-1.032) (-0.832,-1.024) (-0.826,-1.015) (-0.818,-1.007) 
        (-0.810,-0.999) (-0.802,-0.991) (-0.792,-0.982) (-0.782,-0.974) (-0.771,-0.966) (-0.759,-0.957) 
        (-0.747,-0.949) (-0.734,-0.941) (-0.721,-0.932) (-0.707,-0.924) (-0.693,-0.916) (-0.679,-0.907) 
        (-0.665,-0.899) (-0.651,-0.891) (-0.637,-0.882) (-0.624,-0.874) (-0.612,-0.866) (-0.600,-0.857) 
        (-0.589,-0.849) (-0.580,-0.841) (-0.571,-0.832) (-0.564,-0.824) (-0.559,-0.816) (-0.555,-0.807) 
        (-0.553,-0.799) (-0.552,-0.791) (-0.553,-0.783) (-0.556,-0.774) (-0.561,-0.766) (-0.566,-0.758) 
        (-0.574,-0.749) (-0.582,-0.741) (-0.592,-0.733) (-0.602,-0.724) (-0.614,-0.716) (-0.626,-0.708) 
        (-0.638,-0.699) (-0.651,-0.691) (-0.664,-0.683) (-0.676,-0.674) (-0.688,-0.666) (-0.700,-0.658) 
        (-0.711,-0.649) (-0.721,-0.641) (-0.731,-0.633) (-0.739,-0.624) (-0.746,-0.616) (-0.752,-0.608) 
        (-0.756,-0.599) (-0.760,-0.591) (-0.761,-0.583) (-0.762,-0.575) (-0.761,-0.566) (-0.759,-0.558) 
        (-0.755,-0.550) (-0.750,-0.541) (-0.744,-0.533) (-0.737,-0.525) (-0.728,-0.516) (-0.719,-0.508) 
        (-0.709,-0.500) (-0.698,-0.491) (-0.686,-0.483) (-0.674,-0.475) (-0.662,-0.466) (-0.650,-0.458) 
        (-0.638,-0.450) (-0.625,-0.441) (-0.614,-0.433) (-0.603,-0.425) (-0.592,-0.416) (-0.582,-0.408) 
        (-0.574,-0.400) (-0.566,-0.391) (-0.560,-0.383) (-0.554,-0.375) (-0.551,-0.366) (-0.548,-0.358) 
        (-0.548,-0.350) (-0.548,-0.342) (-0.550,-0.333) (-0.554,-0.325) (-0.559,-0.317) (-0.566,-0.308) 
        (-0.573,-0.300) (-0.582,-0.292) (-0.592,-0.283) (-0.603,-0.275) (-0.614,-0.267) (-0.626,-0.258) 
        (-0.639,-0.250) (-0.652,-0.242) (-0.666,-0.233) (-0.679,-0.225) (-0.693,-0.217) (-0.706,-0.208) 
        (-0.719,-0.200) (-0.732,-0.192) (-0.744,-0.183) (-0.756,-0.175) (-0.767,-0.167) (-0.777,-0.158) 
        (-0.787,-0.150) (-0.797,-0.142) (-0.805,-0.134) (-0.813,-0.125) (-0.821,-0.117) (-0.827,-0.109) 
        (-0.833,-0.100) (-0.839,-0.092) (-0.843,-0.084) (-0.874,-0.075)
    } -- cycle;
    \draw[gfibre] (-0.569,-0.933) -- (-0.569,0.170);
    \draw[slice] plot coordinates {
        (-0.569,-0.911) (-0.533,-0.903) (-0.531,-0.895) (-0.529,-0.886) (-0.526,-0.878) (-0.524,-0.870) 
        (-0.522,-0.861) (-0.520,-0.853) (-0.518,-0.845) (-0.515,-0.837) (-0.513,-0.828) (-0.511,-0.820) 
        (-0.508,-0.812) (-0.506,-0.803) (-0.504,-0.795) (-0.501,-0.787) (-0.499,-0.778) (-0.496,-0.770) 
        (-0.494,-0.762) (-0.492,-0.753) (-0.489,-0.745) (-0.487,-0.737) (-0.485,-0.728) (-0.483,-0.720) 
        (-0.480,-0.712) (-0.478,-0.703) (-0.476,-0.695) (-0.474,-0.687) (-0.473,-0.678) (-0.471,-0.670) 
        (-0.469,-0.662) (-0.467,-0.653) (-0.466,-0.645) (-0.464,-0.637) (-0.463,-0.628) (-0.462,-0.620) 
        (-0.460,-0.612) (-0.459,-0.604) (-0.458,-0.595) (-0.457,-0.587) (-0.455,-0.579) (-0.454,-0.570) 
        (-0.453,-0.562) (-0.451,-0.554) (-0.450,-0.545) (-0.448,-0.537) (-0.446,-0.529) (-0.444,-0.520) 
        (-0.441,-0.512) (-0.439,-0.504) (-0.435,-0.495) (-0.432,-0.487) (-0.427,-0.479) (-0.423,-0.470) 
        (-0.417,-0.462) (-0.411,-0.454) (-0.404,-0.445) (-0.396,-0.437) (-0.388,-0.429) (-0.379,-0.420) 
        (-0.369,-0.412) (-0.358,-0.404) (-0.346,-0.396) (-0.334,-0.387) (-0.321,-0.379) (-0.307,-0.371) 
        (-0.293,-0.362) (-0.278,-0.354) (-0.262,-0.346) (-0.247,-0.337) (-0.232,-0.329) (-0.216,-0.321) 
        (-0.201,-0.312) (-0.186,-0.304) (-0.172,-0.296) (-0.159,-0.287) (-0.146,-0.279) (-0.135,-0.271) 
        (-0.125,-0.262) (-0.117,-0.254) (-0.110,-0.246) (-0.105,-0.237) (-0.101,-0.229) (-0.100,-0.221) 
        (-0.100,-0.212) (-0.103,-0.204) (-0.107,-0.196) (-0.114,-0.188) (-0.122,-0.179) (-0.132,-0.171) 
        (-0.143,-0.163) (-0.156,-0.154) (-0.170,-0.146) (-0.186,-0.138) (-0.203,-0.129) (-0.220,-0.121) 
        (-0.238,-0.113) (-0.256,-0.104) (-0.275,-0.096) (-0.294,-0.088) (-0.313,-0.079) (-0.331,-0.071) 
        (-0.350,-0.063) (-0.367,-0.054) (-0.385,-0.046) (-0.401,-0.038) (-0.417,-0.029) (-0.431,-0.021) 
        (-0.445,-0.013) (-0.458,-0.004) (-0.470,0.004) (-0.481,0.012) (-0.492,0.021) (-0.501,0.029) 
        (-0.509,0.037) (-0.517,0.045) (-0.524,0.054) (-0.530,0.062) (-0.569,0.070)
    } -- cycle;

    \draw[traj] plot coordinates {
        (0.280,0.349) (0.378,0.359) (0.475,0.501) (0.573,0.629) (0.670,0.585) (0.768,0.562) (0.866,0.517) 
        (0.963,0.580) (1.061,0.581) (1.159,0.662) (1.256,0.485) (1.354,0.647) (1.451,0.645) (1.549,0.719) 
        (1.647,0.712) (1.744,0.681) (1.842,0.734) (1.940,0.822) (2.037,0.809) (2.135,0.800) (2.232,0.875) 
        (2.330,0.795) (2.428,0.651) (2.525,0.697) (2.623,0.637) (2.720,0.453) (2.818,0.379) (2.916,0.339) 
        (3.013,0.227) (3.111,0.085) (3.209,0.096) (3.306,0.192) (3.404,0.175) (3.501,0.108) (3.599,0.153) 
        (3.697,0.231) (3.794,0.208) (3.892,0.268) (3.990,0.379) (4.087,0.365) (4.185,0.291) (4.282,0.332) 
        (4.380,0.363) (4.478,0.479) (4.575,0.358) (4.673,0.299) (4.770,0.222) (4.868,0.057) (4.966,0.076) 
        (5.063,0.135) (5.161,0.068) (5.259,0.213) (5.356,0.301) (5.454,0.370) (5.551,0.417) (5.649,0.518) 
        (5.747,0.393) (5.844,0.460) (5.942,0.527) (6.040,0.358) (6.137,0.399) (6.235,0.381) (6.332,0.465) 
        (6.430,0.428)
    };
    \draw[traj] plot coordinates {
        (0.280,0.349) (0.378,0.348) (0.475,0.382) (0.573,0.296) (0.670,0.356) (0.768,0.346) (0.866,0.396) 
        (0.963,0.345) (1.061,0.453) (1.159,0.514) (1.256,0.497) (1.354,0.560) (1.451,0.686) (1.549,0.865) 
        (1.647,0.709) (1.744,0.797) (1.842,0.844) (1.940,0.836) (2.037,0.736) (2.135,0.862) (2.232,0.737) 
        (2.330,0.794) (2.428,0.924) (2.525,0.765) (2.623,0.736) (2.720,0.606) (2.818,0.631) (2.916,0.782) 
        (3.013,0.984) (3.111,0.808) (3.209,0.751) (3.306,0.822) (3.404,0.980) (3.501,1.022) (3.599,0.949) 
        (3.697,0.979) (3.794,0.978) (3.892,0.958) (3.990,0.886) (4.087,0.925) (4.185,0.956) (4.282,0.947) 
        (4.380,0.926) (4.478,0.799) (4.575,0.751) (4.673,0.872) (4.770,0.853) (4.868,0.711) (4.966,0.844) 
        (5.063,0.897) (5.161,1.108) (5.259,1.114) (5.356,1.069) (5.454,0.926) (5.551,1.058) (5.649,1.314) 
        (5.747,1.233) (5.844,1.169) (5.942,1.229) (6.040,1.147) (6.137,1.121) (6.235,1.087) (6.332,1.106) 
        (6.430,1.216)
    };
    \draw[traj] plot coordinates {
        (0.280,0.349) (0.378,0.346) (0.475,0.432) (0.573,0.384) (0.670,0.412) (0.768,0.193) (0.866,0.044) 
        (0.963,0.118) (1.061,0.053) (1.159,0.106) (1.256,0.154) (1.354,0.280) (1.451,0.355) (1.549,0.451) 
        (1.647,0.434) (1.744,0.359) (1.842,0.281) (1.940,0.227) (2.037,0.109) (2.135,0.050) (2.232,0.035) 
        (2.330,0.054) (2.428,0.015) (2.525,-0.181) (2.623,-0.194) (2.720,-0.177) (2.818,-0.075) 
        (2.916,-0.023) (3.013,-0.092) (3.111,-0.170) (3.209,0.025) (3.306,0.095) (3.404,0.271) 
        (3.501,0.477) (3.599,0.391) (3.697,0.424) (3.794,0.464) (3.892,0.514) (3.990,0.562) (4.087,0.577) 
        (4.185,0.589) (4.282,0.434) (4.380,0.412) (4.478,0.332) (4.575,0.339) (4.673,0.287) (4.770,0.343) 
        (4.868,0.419) (4.966,0.445) (5.063,0.470) (5.161,0.474) (5.259,0.424) (5.356,0.402) (5.454,0.348) 
        (5.551,0.381) (5.649,0.377) (5.747,0.429) (5.844,0.292) (5.942,0.374) (6.040,0.293) (6.137,0.215) 
        (6.235,0.189) (6.332,0.128) (6.430,0.151)
    };

    \draw[fibre] (0.280,-0.251) -- (0.280,0.819);
    \draw[blob] plot coordinates {
        (0.280,-0.021) (0.318,-0.013) (0.323,-0.005) (0.328,0.004) (0.333,0.012) (0.340,0.020) 
        (0.346,0.029) (0.353,0.037) (0.361,0.045) (0.369,0.054) (0.378,0.062) (0.387,0.070) (0.397,0.079) 
        (0.408,0.087) (0.419,0.095) (0.430,0.104) (0.442,0.112) (0.455,0.120) (0.468,0.128) (0.482,0.137) 
        (0.496,0.145) (0.510,0.153) (0.525,0.162) (0.540,0.170) (0.555,0.178) (0.570,0.187) (0.585,0.195) 
        (0.600,0.203) (0.615,0.212) (0.630,0.220) (0.644,0.228) (0.658,0.237) (0.672,0.245) (0.684,0.253) 
        (0.696,0.262) (0.708,0.270) (0.718,0.278) (0.728,0.287) (0.736,0.295) (0.744,0.303) (0.750,0.312) 
        (0.755,0.320) (0.759,0.328) (0.761,0.336) (0.763,0.345) (0.763,0.353) (0.761,0.361) (0.759,0.370) 
        (0.755,0.378) (0.750,0.386) (0.744,0.395) (0.736,0.403) (0.728,0.411) (0.718,0.420) (0.708,0.428) 
        (0.696,0.436) (0.684,0.445) (0.672,0.453) (0.658,0.461) (0.644,0.470) (0.630,0.478) (0.615,0.486) 
        (0.600,0.495) (0.585,0.503) (0.570,0.511) (0.555,0.520) (0.540,0.528) (0.525,0.536) (0.510,0.545) 
        (0.496,0.553) (0.482,0.561) (0.468,0.569) (0.455,0.578) (0.442,0.586) (0.430,0.594) (0.419,0.603) 
        (0.408,0.611) (0.397,0.619) (0.387,0.628) (0.378,0.636) (0.369,0.644) (0.361,0.653) (0.353,0.661) 
        (0.346,0.669) (0.340,0.678) (0.333,0.686) (0.328,0.694) (0.323,0.703) (0.318,0.711) (0.280,0.719)
    } -- cycle;
    \node[axlab] at (0.280,-0.251) [below right=-1pt and -1pt] {$\ell$};
    \node[meas] at (-1.074,-0.579) [left] {};
    \node[meas, anchor=south west, inner sep=1pt] at (0.300,1.178) {$\Law(A+Y_0\mid L_0=\ell)$};

    \draw[dens] 
        (6.450,-0.500)
        -- (6.745,-0.500)
        -- plot[smooth] coordinates {
        (6.755,-0.492) (6.765,-0.484) (6.775,-0.475) (6.786,-0.467) (6.797,-0.459) 
        (6.807,-0.451) (6.818,-0.443) (6.829,-0.435) (6.840,-0.426) (6.851,-0.418) (6.863,-0.410) 
        (6.874,-0.402) (6.885,-0.394) (6.897,-0.385) (6.908,-0.377) (6.920,-0.369) (6.931,-0.361) 
        (6.943,-0.353) (6.954,-0.344) (6.966,-0.336) (6.977,-0.328) (6.988,-0.320) (7.000,-0.312) 
        (7.011,-0.304) (7.022,-0.295) (7.033,-0.287) (7.044,-0.279) (7.054,-0.271) (7.065,-0.263) 
        (7.075,-0.254) (7.085,-0.246) (7.095,-0.238) (7.105,-0.230) (7.114,-0.222) (7.124,-0.214) 
        (7.133,-0.205) (7.141,-0.197) (7.150,-0.189) (7.158,-0.181) (7.166,-0.173) (7.173,-0.164) 
        (7.181,-0.156) (7.188,-0.148) (7.194,-0.140) (7.200,-0.132) (7.206,-0.123) (7.212,-0.115) 
        (7.217,-0.107) (7.222,-0.099) (7.226,-0.091) (7.230,-0.083) (7.234,-0.074) (7.237,-0.066) 
        (7.240,-0.058) (7.243,-0.050) (7.245,-0.042) (7.247,-0.033) (7.248,-0.025) (7.249,-0.017) 
        (7.250,-0.009) (7.250,-0.001) (7.250,0.007) (7.250,0.016) (7.249,0.024) (7.248,0.032) 
        (7.246,0.040) (7.245,0.048) (7.242,0.057) (7.240,0.065) (7.238,0.073) (7.235,0.081) (7.231,0.089) 
        (7.228,0.098) (7.224,0.106) (7.221,0.114) (7.217,0.122) (7.212,0.130) (7.208,0.138) (7.203,0.147) 
        (7.199,0.155) (7.194,0.163) (7.189,0.171) (7.184,0.179) (7.179,0.188) (7.174,0.196) (7.169,0.204) 
        (7.164,0.212) (7.159,0.220) (7.153,0.228) (7.148,0.237) (7.143,0.245) (7.138,0.253) (7.133,0.261) 
        (7.129,0.269) (7.124,0.278) (7.119,0.286) (7.115,0.294) (7.110,0.302) (7.106,0.310) (7.102,0.319) 
        (7.098,0.327) (7.095,0.335) (7.091,0.343) (7.088,0.351) (7.085,0.359) (7.082,0.368) (7.080,0.376) 
        (7.077,0.384) (7.075,0.392) (7.073,0.400) (7.072,0.409) (7.070,0.417) (7.069,0.425) (7.068,0.433) 
        (7.068,0.441) (7.067,0.449) (7.067,0.458) (7.067,0.466) (7.068,0.474) (7.068,0.482) (7.069,0.490) 
        (7.070,0.499) (7.071,0.507) (7.073,0.515) (7.074,0.523) (7.076,0.531) (7.078,0.540) (7.080,0.548) 
        (7.082,0.556) (7.085,0.564) (7.087,0.572) (7.090,0.580) (7.093,0.589) (7.096,0.597) (7.099,0.605) 
        (7.102,0.613) (7.105,0.621) (7.108,0.630) (7.111,0.638) (7.114,0.646) (7.117,0.654) (7.120,0.662) 
        (7.123,0.671) (7.126,0.679) (7.129,0.687) (7.132,0.695) (7.135,0.703) (7.137,0.711) (7.140,0.720) 
        (7.142,0.728) (7.144,0.736) (7.146,0.744) (7.148,0.752) (7.150,0.761) (7.152,0.769) (7.153,0.777) 
        (7.154,0.785) (7.155,0.793) (7.155,0.801) (7.156,0.810) (7.156,0.818) (7.156,0.826) (7.155,0.834) 
        (7.155,0.842) (7.154,0.851) (7.153,0.859) (7.151,0.867) (7.149,0.875) (7.147,0.883) (7.145,0.892) 
        (7.142,0.900) (7.139,0.908) (7.136,0.916) (7.132,0.924) (7.129,0.932) (7.124,0.941) (7.120,0.949) 
        (7.115,0.957) (7.110,0.965) (7.105,0.973) (7.099,0.982) (7.094,0.990) (7.088,0.998) (7.081,1.006) 
        (7.075,1.014) (7.068,1.022) (7.061,1.031) (7.053,1.039) (7.046,1.047) (7.038,1.055) (7.030,1.063) 
        (7.022,1.072) (7.014,1.080) (7.005,1.088) (6.997,1.096) (6.988,1.104) (6.979,1.113) (6.970,1.121) 
        (6.961,1.129) (6.952,1.137) (6.943,1.145) (6.933,1.153) (6.924,1.162) (6.914,1.170) (6.905,1.178) 
        (6.895,1.186) (6.885,1.194) (6.876,1.203) (6.866,1.211) (6.856,1.219) (6.847,1.227) (6.837,1.235) 
        (6.828,1.243) (6.818,1.252) (6.808,1.260) (6.799,1.268) (6.790,1.276) (6.780,1.284) (6.771,1.293) 
        (6.762,1.301) (6.753,1.309) (6.744,1.317) (6.735,1.325) (6.727,1.334) (6.718,1.342) (6.709,1.350) 
        (6.701,1.358) (6.693,1.366) (6.685,1.374) (6.677,1.383) (6.669,1.391) (6.662,1.399) (6.654,1.407) 
        (6.647,1.415) (6.640,1.424) (6.633,1.432) (6.626,1.440) (6.619,1.448) (6.613,1.456) (6.606,1.464) 
        (6.600,1.473) (6.594,1.481) (6.589,1.489) (6.583,1.497) (6.577,1.505) (6.572,1.514) (6.567,1.522) 
        (6.562,1.530) (6.557,1.538) (6.552,1.546) (6.548,1.555) (6.543,1.563) (6.539,1.571) (6.535,1.579) 
        (6.531,1.587) (6.527,1.595) (6.523,1.604) (6.520,1.612)
    }
    -- (6.51,1.620)
    -- (6.450,1.620)
    -- cycle;
    \node[meas] at (7.600,-0.001) {$\rho^{a}$};
    \node[dynlab] at (3.612,-0.794) {$Y$-dynamics};
    \end{scope}

    \begin{scope}

    \draw[axis] (0.000,-1.180) -- (0.000,2.150);
    \draw[axis] (6.450,-1.180) -- (6.450,2.150);
    \draw[axis] (0.532,0.001) -- (-1.199,-1.390);
    \node[axlab] at (0,2.150) [above] {$X_0$};
    \node[axlab] at (6.450,2.150) [above left=-1pt] {$X_1$};
    \node[axlab] at (-1.199,-1.390) [below left=-2pt and -1pt] {$L_0$};

    \draw[gfibre] (-0.874,-1.129) -- (-0.874,0.219);
    \draw[slice] plot coordinates {
        (-0.874,-1.118) (-0.852,-1.110) (-0.848,-1.101) (-0.844,-1.093) (-0.839,-1.084) (-0.834,-1.075) 
        (-0.828,-1.067) (-0.822,-1.058) (-0.815,-1.049) (-0.807,-1.040) (-0.799,-1.031) (-0.790,-1.022) 
        (-0.781,-1.013) (-0.771,-1.004) (-0.761,-0.994) (-0.750,-0.985) (-0.739,-0.976) (-0.728,-0.966) 
        (-0.717,-0.957) (-0.706,-0.947) (-0.695,-0.937) (-0.684,-0.928) (-0.674,-0.918) (-0.664,-0.908) 
        (-0.655,-0.898) (-0.646,-0.888) (-0.639,-0.878) (-0.632,-0.868) (-0.627,-0.858) (-0.622,-0.848) 
        (-0.619,-0.838) (-0.618,-0.827) (-0.617,-0.817) (-0.618,-0.807) (-0.620,-0.796) (-0.623,-0.786) 
        (-0.628,-0.775) (-0.634,-0.765) (-0.640,-0.754) (-0.648,-0.743) (-0.656,-0.733) (-0.665,-0.722) 
        (-0.675,-0.711) (-0.684,-0.700) (-0.694,-0.690) (-0.705,-0.679) (-0.715,-0.668) (-0.725,-0.657) 
        (-0.734,-0.646) (-0.743,-0.635) (-0.752,-0.624) (-0.760,-0.613) (-0.767,-0.602) (-0.773,-0.590) 
        (-0.779,-0.579) (-0.783,-0.568) (-0.787,-0.557) (-0.789,-0.546) (-0.791,-0.535) (-0.791,-0.524) 
        (-0.790,-0.512) (-0.789,-0.501) (-0.786,-0.490) (-0.782,-0.479) (-0.778,-0.467) (-0.773,-0.456) 
        (-0.766,-0.445) (-0.759,-0.434) (-0.752,-0.423) (-0.744,-0.411) (-0.735,-0.400) (-0.726,-0.389) 
        (-0.717,-0.378) (-0.707,-0.367) (-0.698,-0.356) (-0.688,-0.345) (-0.679,-0.334) (-0.670,-0.323) 
        (-0.662,-0.312) (-0.654,-0.301) (-0.647,-0.290) (-0.640,-0.279) (-0.635,-0.268) (-0.630,-0.257) 
        (-0.626,-0.246) (-0.624,-0.235) (-0.622,-0.225) (-0.622,-0.214) (-0.622,-0.203) (-0.624,-0.193) 
        (-0.627,-0.182) (-0.631,-0.171) (-0.636,-0.161) (-0.642,-0.150) (-0.649,-0.140) (-0.656,-0.130) 
        (-0.664,-0.119) (-0.673,-0.109) (-0.682,-0.099) (-0.692,-0.089) (-0.702,-0.079) (-0.712,-0.069) 
        (-0.723,-0.059) (-0.733,-0.049) (-0.743,-0.039) (-0.753,-0.029) (-0.763,-0.020) (-0.772,-0.010) 
        (-0.781,-0.000) (-0.790,0.009) (-0.798,0.019) (-0.806,0.028) (-0.813,0.037) (-0.820,0.047) 
        (-0.826,0.056) (-0.831,0.065) (-0.837,0.074) (-0.841,0.083) (-0.846,0.092) (-0.849,0.101) 
        (-0.853,0.110) (-0.874,0.119)
    } -- cycle;
    \draw[gfibre] (-0.569,-0.884) -- (-0.569,0.345);
    \draw[slice] plot coordinates {
        (-0.569,-1.013) (-0.542,-1.006) (-0.540,-0.999) (-0.539,-0.991) (-0.537,-0.984) (-0.535,-0.976) 
        (-0.534,-0.969) (-0.532,-0.961) (-0.530,-0.953) (-0.529,-0.946) (-0.527,-0.938) (-0.525,-0.930) 
        (-0.523,-0.922) (-0.521,-0.914) (-0.519,-0.906) (-0.517,-0.898) (-0.516,-0.889) (-0.514,-0.881) 
        (-0.512,-0.873) (-0.510,-0.864) (-0.508,-0.856) (-0.506,-0.847) (-0.504,-0.839) (-0.503,-0.830) 
        (-0.501,-0.821) (-0.499,-0.812) (-0.497,-0.803) (-0.496,-0.794) (-0.494,-0.785) (-0.493,-0.776) 
        (-0.491,-0.767) (-0.490,-0.758) (-0.489,-0.749) (-0.488,-0.739) (-0.486,-0.730) (-0.485,-0.721) 
        (-0.484,-0.711) (-0.483,-0.701) (-0.482,-0.692) (-0.482,-0.682) (-0.481,-0.672) (-0.480,-0.663) 
        (-0.480,-0.653) (-0.479,-0.643) (-0.479,-0.633) (-0.478,-0.623) (-0.478,-0.613) (-0.477,-0.602) 
        (-0.477,-0.592) (-0.476,-0.582) (-0.475,-0.572) (-0.475,-0.561) (-0.474,-0.551) (-0.473,-0.540) 
        (-0.471,-0.530) (-0.470,-0.519) (-0.468,-0.509) (-0.466,-0.498) (-0.463,-0.487) (-0.461,-0.476) 
        (-0.457,-0.466) (-0.454,-0.455) (-0.449,-0.444) (-0.444,-0.433) (-0.439,-0.422) (-0.433,-0.411) 
        (-0.427,-0.400) (-0.419,-0.389) (-0.412,-0.378) (-0.403,-0.367) (-0.394,-0.356) (-0.385,-0.345) 
        (-0.375,-0.334) (-0.365,-0.323) (-0.354,-0.312) (-0.343,-0.300) (-0.332,-0.289) (-0.321,-0.278) 
        (-0.310,-0.267) (-0.299,-0.256) (-0.288,-0.244) (-0.277,-0.233) (-0.268,-0.222) (-0.258,-0.211) 
        (-0.250,-0.199) (-0.243,-0.188) (-0.236,-0.177) (-0.231,-0.166) (-0.227,-0.155) (-0.224,-0.144) 
        (-0.222,-0.132) (-0.222,-0.121) (-0.223,-0.110) (-0.226,-0.099) (-0.230,-0.088) (-0.235,-0.077) 
        (-0.242,-0.066) (-0.249,-0.055) (-0.258,-0.044) (-0.268,-0.033) (-0.279,-0.022) (-0.291,-0.011) 
        (-0.303,-0.000) (-0.316,0.010) (-0.329,0.021) (-0.343,0.032) (-0.356,0.042) (-0.370,0.053) 
        (-0.384,0.064) (-0.397,0.074) (-0.410,0.085) (-0.423,0.095) (-0.435,0.105) (-0.447,0.116) 
        (-0.458,0.126) (-0.469,0.136) (-0.479,0.147) (-0.488,0.157) (-0.497,0.167) (-0.505,0.177) 
        (-0.513,0.187) (-0.519,0.197) (-0.525,0.206) (-0.531,0.216) (-0.536,0.226) (-0.540,0.236) 
        (-0.569,0.245)
    } -- cycle;

    \draw[traj] plot coordinates {
        (0.280,0.399) (0.378,0.412) (0.475,0.602) (0.573,0.769) (0.670,0.713) (0.768,0.683) (0.866,0.623) 
        (0.963,0.706) (1.061,0.707) (1.159,0.809) (1.256,0.581) (1.354,0.791) (1.451,0.788) (1.549,0.878) 
        (1.647,0.870) (1.744,0.833) (1.842,0.896) (1.940,0.996) (2.037,0.981) (2.135,0.972) (2.232,1.053) 
        (2.330,0.966) (2.428,0.796) (2.525,0.853) (2.623,0.779) (2.720,0.539) (2.818,0.439) (2.916,0.385) 
        (3.013,0.236) (3.111,0.058) (3.209,0.070) (3.306,0.190) (3.404,0.169) (3.501,0.085) (3.599,0.141) 
        (3.697,0.241) (3.794,0.211) (3.892,0.291) (3.990,0.439) (4.087,0.420) (4.185,0.320) (4.282,0.376) 
        (4.380,0.418) (4.478,0.574) (4.575,0.411) (4.673,0.332) (4.770,0.230) (4.868,0.023) (4.966,0.046) 
        (5.063,0.119) (5.161,0.037) (5.259,0.218) (5.356,0.334) (5.454,0.427) (5.551,0.490) (5.649,0.626) 
        (5.747,0.457) (5.844,0.548) (5.942,0.637) (6.040,0.411) (6.137,0.466) (6.235,0.442) (6.332,0.555) 
        (6.430,0.505)
    };
    \draw[traj] plot coordinates {
        (0.280,0.399) (0.378,0.397) (0.475,0.444) (0.573,0.327) (0.670,0.408) (0.768,0.395) (0.866,0.462) 
        (0.963,0.393) (1.061,0.539) (1.159,0.620) (1.256,0.597) (1.354,0.680) (1.451,0.839) (1.549,1.042) 
        (1.647,0.866) (1.744,0.969) (1.842,1.020) (1.940,1.011) (2.037,0.898) (2.135,1.038) (2.232,0.899) 
        (2.330,0.965) (2.428,1.102) (2.525,0.932) (2.623,0.898) (2.720,0.740) (2.818,0.771) (2.916,0.952) 
        (3.013,1.161) (3.111,0.980) (3.209,0.916) (3.306,0.996) (3.404,1.157) (3.501,1.196) (3.599,1.127) 
        (3.697,1.156) (3.794,1.155) (3.892,1.136) (3.990,1.063) (4.087,1.103) (4.185,1.134) (4.282,1.125) 
        (4.380,1.104) (4.478,0.970) (4.575,0.916) (4.673,1.049) (4.770,1.029) (4.868,0.868) (4.966,1.020) 
        (5.063,1.075) (5.161,1.270) (5.259,1.276) (5.356,1.238) (5.454,1.104) (5.551,1.228) (5.649,1.428) 
        (5.747,1.370) (5.844,1.321) (5.942,1.367) (6.040,1.303) (6.137,1.282) (6.235,1.253) (6.332,1.270) 
        (6.430,1.357)
    };
    \draw[traj] plot coordinates {
        (0.280,0.399) (0.378,0.394) (0.475,0.510) (0.573,0.446) (0.670,0.483) (0.768,0.193) (0.866,0.008) 
        (0.963,0.097) (1.061,0.019) (1.159,0.082) (1.256,0.143) (1.354,0.306) (1.451,0.407) (1.549,0.536) 
        (1.647,0.514) (1.744,0.413) (1.842,0.308) (1.940,0.237) (2.037,0.087) (2.135,0.015) (2.232,-0.002) 
        (2.330,0.021) (2.428,-0.025) (2.525,-0.234) (2.623,-0.246) (2.720,-0.229) (2.818,-0.125) 
        (2.916,-0.068) (3.013,-0.144) (3.111,-0.222) (3.209,-0.014) (3.306,0.069) (3.404,0.295) 
        (3.501,0.571) (3.599,0.455) (3.697,0.499) (3.794,0.553) (3.892,0.620) (3.990,0.683) (4.087,0.702) 
        (4.185,0.717) (4.282,0.513) (4.380,0.483) (4.478,0.376) (4.575,0.385) (4.673,0.316) (4.770,0.391) 
        (4.868,0.493) (4.966,0.527) (5.063,0.562) (5.161,0.567) (5.259,0.500) (5.356,0.471) (5.454,0.397) 
        (5.551,0.441) (5.649,0.436) (5.747,0.506) (5.844,0.322) (5.942,0.433) (6.040,0.324) (6.137,0.220) 
        (6.235,0.188) (6.332,0.110) (6.430,0.139)
    };

    \draw[fibre] (0.280,-0.201) -- (0.280,0.988);
    \draw[blob] plot coordinates {
        (0.280,-0.076) (0.310,-0.067) (0.314,-0.057) (0.318,-0.048) (0.322,-0.039) (0.326,-0.029) 
        (0.331,-0.019) (0.337,-0.010) (0.342,-0.000) (0.348,0.010) (0.355,0.020) (0.362,0.029) 
        (0.369,0.039) (0.377,0.049) (0.385,0.059) (0.393,0.070) (0.402,0.080) (0.411,0.090) (0.421,0.100) 
        (0.431,0.111) (0.441,0.121) (0.451,0.131) (0.462,0.142) (0.473,0.152) (0.483,0.163) (0.494,0.174) 
        (0.505,0.184) (0.516,0.195) (0.527,0.206) (0.537,0.216) (0.548,0.227) (0.558,0.238) (0.568,0.249) 
        (0.577,0.260) (0.586,0.271) (0.594,0.282) (0.602,0.293) (0.609,0.304) (0.615,0.315) (0.621,0.326) 
        (0.626,0.337) (0.630,0.348) (0.633,0.359) (0.636,0.371) (0.637,0.382) (0.638,0.393) (0.638,0.404) 
        (0.636,0.415) (0.634,0.427) (0.631,0.438) (0.628,0.449) (0.623,0.460) (0.618,0.471) (0.611,0.483) 
        (0.605,0.494) (0.597,0.505) (0.589,0.516) (0.580,0.527) (0.571,0.539) (0.562,0.550) (0.552,0.561) 
        (0.541,0.572) (0.531,0.583) (0.520,0.594) (0.509,0.605) (0.499,0.616) (0.488,0.627) (0.477,0.638) 
        (0.466,0.649) (0.455,0.660) (0.445,0.671) (0.435,0.682) (0.425,0.692) (0.415,0.703) (0.406,0.714) 
        (0.397,0.725) (0.388,0.735) (0.380,0.746) (0.372,0.756) (0.365,0.767) (0.358,0.777) (0.351,0.788) 
        (0.345,0.798) (0.339,0.808) (0.334,0.818) (0.329,0.829) (0.324,0.839) (0.320,0.849) (0.316,0.859) 
        (0.312,0.869) (0.308,0.879) (0.280,0.888)
    } -- cycle;
    \node[axlab] at (0.280,-0.201) [below right=-1pt and -1pt] {$\ell$};
    \node[meas] at (-1.074,-0.529) [left] {$\kappa$};
    \node[meas, anchor=south west, inner sep=1pt] at (0.300,1.340) {$\Law(X_0\mid L_0=\ell)$};

    \draw[dens] 
    (6.450,-0.500)
    -- (6.890,-0.500)
    -- plot[smooth] coordinates {
        (6.904,-0.494) (6.916,-0.488) (6.928,-0.482) (6.940,-0.476) (6.952,-0.470) 
        (6.964,-0.464) (6.975,-0.458) (6.987,-0.451) (6.999,-0.445) (7.010,-0.439) (7.022,-0.432) 
        (7.033,-0.426) (7.044,-0.420) (7.055,-0.413) (7.066,-0.407) (7.077,-0.400) (7.087,-0.393) 
        (7.098,-0.387) (7.108,-0.380) (7.117,-0.373) (7.127,-0.366) (7.136,-0.359) (7.145,-0.352) 
        (7.153,-0.345) (7.162,-0.338) (7.170,-0.331) (7.177,-0.324) (7.185,-0.317) (7.192,-0.309) 
        (7.198,-0.302) (7.204,-0.295) (7.210,-0.287) (7.216,-0.280) (7.221,-0.272) (7.225,-0.264) 
        (7.230,-0.257) (7.233,-0.249) (7.237,-0.241) (7.240,-0.233) (7.243,-0.225) (7.245,-0.217) 
        (7.247,-0.209) (7.248,-0.201) (7.249,-0.193) (7.250,-0.184) (7.250,-0.176) (7.250,-0.168) 
        (7.249,-0.159) (7.248,-0.151) (7.247,-0.142) (7.245,-0.133) (7.243,-0.125) (7.241,-0.116) 
        (7.238,-0.107) (7.235,-0.098) (7.232,-0.089) (7.228,-0.080) (7.224,-0.071) (7.220,-0.062) 
        (7.216,-0.053) (7.211,-0.043) (7.206,-0.034) (7.201,-0.025) (7.195,-0.015) (7.190,-0.006) 
        (7.184,0.004) (7.178,0.013) (7.172,0.023) (7.165,0.033) (7.159,0.043) (7.152,0.052) (7.146,0.062) 
        (7.139,0.072) (7.132,0.082) (7.125,0.092) (7.118,0.103) (7.112,0.113) (7.105,0.123) (7.098,0.133) 
        (7.091,0.144) (7.084,0.154) (7.077,0.164) (7.070,0.175) (7.064,0.185) (7.057,0.196) (7.051,0.206) 
        (7.044,0.217) (7.038,0.228) (7.032,0.238) (7.026,0.249) (7.020,0.260) (7.014,0.271) (7.009,0.281) 
        (7.004,0.292) (6.998,0.303) (6.994,0.314) (6.989,0.325) (6.984,0.336) (6.980,0.347) (6.976,0.358) 
        (6.972,0.369) (6.969,0.380) (6.966,0.391) (6.963,0.402) (6.960,0.413) (6.957,0.424) (6.955,0.435) 
        (6.953,0.446) (6.951,0.457) (6.950,0.468) (6.949,0.479) (6.948,0.490) (6.948,0.501) (6.947,0.512) 
        (6.947,0.523) (6.948,0.534) (6.948,0.545) (6.949,0.556) (6.950,0.567) (6.951,0.578) (6.953,0.589) 
        (6.955,0.599) (6.957,0.610) (6.959,0.621) (6.962,0.632) (6.965,0.643) (6.968,0.653) (6.971,0.664) 
        (6.975,0.675) (6.979,0.685) (6.983,0.696) (6.987,0.707) (6.991,0.717) (6.996,0.728) (7.001,0.738) 
        (7.006,0.748) (7.011,0.759) (7.016,0.769) (7.021,0.779) (7.027,0.790) (7.032,0.800) (7.038,0.810) 
        (7.044,0.820) (7.049,0.830) (7.055,0.840) (7.061,0.850) (7.067,0.860) (7.073,0.869) (7.079,0.879) 
        (7.085,0.889) (7.091,0.898) (7.097,0.908) (7.102,0.917) (7.108,0.927) (7.114,0.936) (7.119,0.946) 
        (7.125,0.955) (7.130,0.964) (7.136,0.973) (7.141,0.982) (7.146,0.991) (7.150,1.000) (7.155,1.009) 
        (7.159,1.018) (7.163,1.027) (7.167,1.035) (7.171,1.044) (7.175,1.053) (7.178,1.061) (7.181,1.070) 
        (7.183,1.078) (7.186,1.086) (7.188,1.094) (7.190,1.103) (7.191,1.111) (7.192,1.119) (7.193,1.127) 
        (7.194,1.135) (7.194,1.143) (7.194,1.151) (7.193,1.158) (7.192,1.166) (7.191,1.174) (7.189,1.181) 
        (7.187,1.189) (7.185,1.196) (7.182,1.204) (7.179,1.211) (7.176,1.218) (7.172,1.226) (7.168,1.233) 
        (7.164,1.240) (7.159,1.247) (7.154,1.254) (7.148,1.261) (7.142,1.268) (7.136,1.275) (7.130,1.281) 
        (7.123,1.288) (7.116,1.295) (7.108,1.301) (7.101,1.308) (7.093,1.315) (7.085,1.321) (7.076,1.328) 
        (7.068,1.334) (7.059,1.340) (7.050,1.347) (7.040,1.353) (7.031,1.359) (7.021,1.365) (7.011,1.371) 
        (7.001,1.377) (6.991,1.384) (6.981,1.390) (6.970,1.395) (6.960,1.401) (6.949,1.407) (6.939,1.413) 
        (6.928,1.419) (6.917,1.425) (6.907,1.430) (6.896,1.436) (6.885,1.442) (6.874,1.447) (6.864,1.453) 
        (6.853,1.458) (6.842,1.464) (6.831,1.469) (6.821,1.475) (6.810,1.480) (6.800,1.486) (6.790,1.491) 
        (6.779,1.496) (6.769,1.501) (6.759,1.507) (6.750,1.512) (6.740,1.517) (6.730,1.522) (6.721,1.527) 
        (6.712,1.532) (6.703,1.538) (6.694,1.543) (6.685,1.548) (6.676,1.553) (6.668,1.558) (6.660,1.562) 
        (6.652,1.567) (6.644,1.572) (6.636,1.577) (6.629,1.582) (6.621,1.587) (6.614,1.592) (6.607,1.596) 
        (6.601,1.601) (6.594,1.606) (6.588,1.611) (6.582,1.615) 
    } 
    -- (6.570,1.620)
    -- (6.450,1.620)
    -- cycle;
    \node[meas] at (7.600,-0.176) {$\nu$};
    \node[dynlab] at (3.612,-0.846) {$X$-dynamics};
    \end{scope}

    \draw[map] (8.150,5.100) to[bend left=20] (8.150,1.300);
    \node[maplab] at (8.900,3.150) {$g$};

    \draw[map] (-2.400,1.300) to[bend left=20] (-2.400,5.100);
    \node[maplab] at (-2.750,3.450) [left] {$A=\big(g\star\gamma_{R_0}\big)^{-1}(X_0)$};

\end{tikzpicture}
\medskip

\caption{An $R$-Bass martingale. The reference model is started in $\kappa$ and
shifted by $A=a(R_0)$; the increasing map $g$ carries the shifted terminal law
$\rho^{a}$ onto $\nu$, and $X_t=\E[g(A+Y_1)\mid A,R_t]$ is the resulting
martingale. The shift is pinned down by the martingale condition
$X_0=(g\star\gamma_{R_0})(A)$. Each fibre over the latent axis carries the
conditional law of the initial value given $L_0$; the paths are started on the
front fibre $L_0=\ell$.}
\label{fig:R.Bass}
\end{figure}

In the case where $R$ is Brownian motion, 
the construction underlying our one-period steps goes back to Bass~\cite{Ba83}, who
transported a Gaussian law onto a prescribed one by a monotone map in order to solve
the Skorokhod embedding problem. {While Bass was only concerned with the deterministic starting case, Bass martingales between general marginals were introduced by 
Backhoff-Veraguas--Beiglb\"ock--Huesmann--K\"allblad~\cite{BaBeHuKa20} and Guo--Loeper--Wang~\cite{GuLoWa19}.} 
Extensions beyond the Brownian case are considered in 
\cite{CoHe21, Ts24, BePaRi24, HaPaTh26, AcMa26a, AcMa26b}. The $R$-Bass martingales of \Cref{def:R.Bass} continue this line of
thought: we replace the Brownian increment by the conditional increment of a general
Markovian reference model, and it is precisely this flexibility which allows the
calibrated model to inherit the dynamics of $R$.

\subsection{The Martingale Sinkhorn algorithm}

The structural characterization of optimal bridges in \Cref{thm:R.Bass.characterization} forms the basis for a particularly simple, yet powerful numerical algorithm, which we introduce below. 
 For this it is useful to define for  a shift $a:\R\times \mathsf E\to\R$, 
the shifted reference terminal law
\begin{equation}\label{eq:shifted.reference.law}
    \rho^a
    := 
    \int
       (a(r)+\,\cdot\,)_{\#}\gamma_r\,
       \kappa(\rmd r) = \Law(a(R_0) + Y_1).
\end{equation}
The initial and terminal constraints of an $R$-Bass martingale can then be expressed concisely in terms of $a$ and $g$.
\medskip

\noindent {\textit{Terminal marginal constraint.}} $ g_\#(\rho^a)
    =\nu$.

    \medskip 
    
\noindent {\textit{Martingale constraint.}} $
    x=
    (g\star \gamma_r)(a(r))
    \quad\text{for $\kappa$-a.e. $r=(x,\ell)$}.$

\begin{figure}
\centering
\begin{tikzpicture}[
        meas/.style ={font=\small},
        arr/.style  ={line width=0.8pt, -{Stealth[length=2.4mm]}},
        lab/.style  ={font=\small},
        note/.style ={font=\footnotesize, color=black!60}
    ]

    \node[meas] (s0) at (0,2.95)    {$\Law(A+Y_0, R_0)$};
    \node[meas] (s1) at (6.80,2.95) {$\rho^{a}=\Law(A+Y_1)$};
    \node[meas] (k)  at (0,0)       {$\kappa=\Law(X_0,L_0)$};
    \node[meas] (nu) at (6.80,0)    {$\nu=\Law(X_1)$};

    \node[note, above=1pt of s0] {unknown};

    \draw[arr] (s0) --
        node[lab,  above, yshift=0.4mm] {$r\mapsto\big(a(r)+\,\cdot\,\big)_{\#}\gamma_r$}
        node[note, below, yshift=-0.4mm]{reference dynamics} (s1);

    \draw[arr] (k) --
        node[lab,  above, yshift=0.4mm] {$R$-Bass martingale $X$}
        node[note, below, yshift=-0.4mm]{\eqref{eq:R.Bass}} (nu);

    \draw[arr] (k) --
        node[lab, left, align=right, xshift=-1mm]
            {$A=\big(g\star\gamma_{R_0}\big)^{-1}(X_0)$\\[1.5pt]
             {\footnotesize\color{black!60}martingale condition}} (s0);

    \draw[arr] (s1) --
        node[lab, right, align=left, xshift=1mm]
            {$g$\\[1.5pt]
             {\footnotesize\color{black!60}terminal marginal}} (nu);

\end{tikzpicture}
\medskip

\caption{The Bass diagram behind \Cref{def:R.Bass}, in the notation of
\Cref{fig:R.Bass}. The prescribed data are the marginals $\kappa$
and $\nu$; the unknown is the shift $a$, equivalently the initial law of the
\emph{shifted} reference process. The martingale condition determines $A$ from
the initial state, running the reference dynamics from $A+Y_0$ produces
$\rho^{a}$, and the increasing map $g$ carries $\rho^{a}$ onto $\nu$. Composing
the three outer arrows gives the $R$-Bass martingale along the bottom.}
\label{fig:bass.diagram}
\end{figure}
    \medskip 

This  observation suggests a natural
fixed-point algorithm. 
{Starting from a shift $a_0: \R\times \mathsf E \to \R$, perform the following two updates.}

\medskip

\noindent {\it Terminal-marginal update.}
Let $g_k$ be the increasing transport from $\rho^{a_k}$ to $\nu$, i.e.
 $
    g_k
    :=
    F_\nu^{-1}\circ F_{\rho^{a_k}},
$

\medskip

\noindent{\it Martingale update.}
{For $\kappa$-a.e.~$r=(x,\ell)$, set $a_{k+1}(x,\ell)= (g_k\star \gamma_{(x,\ell)})^{-1}(x)$.}

\medskip

Thus the first update enforces the terminal marginal and the second
enforces the martingale constraint.

\medskip

{In \Cref{thm:MSA.convergence} below we prove convergence of the Martingale Sinkhorn algorithm to the $R$-Bass martingale in the case where $(\mu, \nu) $ is strictly irreducible and $\nu$ is compactly supported, under the additional reference assumptions and normalizations stated there.} In numerical experiments, we observe linear convergence for generic reference processes. 

\medskip

For the classical Brownian Bass construction the above fixed point scheme was introduced by
Conze--Henry-Labord\`ere~\cite{CoHe21} who demonstrate its striking numerical properties for the approximation of the (Bass) LV model. In the Brownian case, linear convergence was proved by
Acciaio--Marini--Pammer~\cite{AcMaPa23}, and a {measure-preserving}, multidimensional
variant is given 
in~\cite{HaJoLoObPa25}. {In the case where the reference dynamics are Brownian, the iterative scheme above reduces precisely to the Conze--Henry-Labord\`ere~\cite{CoHe21} algorithm.}

\subsection{An instantaneous calibration principle --- from SKR to SLV}

Since local volatility is well known to generate unrealistic forward smiles and
spot-volatility dynamics, it is common practice to superimpose a leverage function or leverage adjustment on
a stochastic volatility reference, which leads to stochastic local volatility (SLV) models.
Their calibration amounts to a McKean--Vlasov equation whose  well-posedness and numerics are somewhat
delicate, see \cite{AbTa10, guyon2011smile,GuHe13, JoZh20, CoMaRe19, LaShZh20}.  The present SKR construction addresses the same calibration task, but replaces the McKean--Vlasov equation by a sequence of 1-d WMOT problems.

\begin{figure}[H]
    \centering
    \begin{minipage}[c]{0.41\textwidth}
        \centering
        \includegraphics[width=\linewidth]{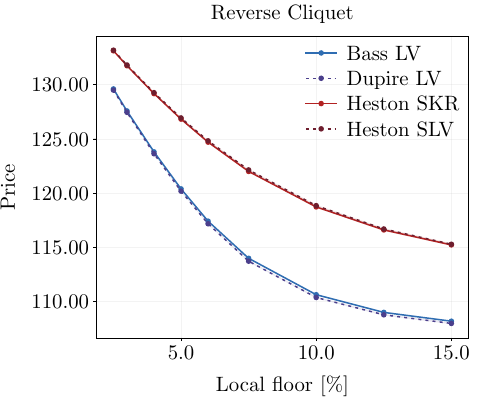}
    \end{minipage}
    \hfill
    \begin{minipage}[c]{0.56\textwidth}
        \centering
        \small
        \sisetup{output-decimal-marker={.},group-digits=false}
        \setlength{\tabcolsep}{3pt}
        \renewcommand{\arraystretch}{1.1}
    
        \begin{tabular}{
            @{}l
            S[table-format=3.3]
            S[table-format=1.3]
            S[table-format=2.3]
            S[table-format=2.4]
            >{\bfseries}r@{}
        }
            \toprule
            & \multicolumn{2}{c}{Valuation}
            & \multicolumn{3}{c}{Time [s]} \\
            \cmidrule(lr){2-3}
            \cmidrule(l){4-6}
            Model & {Price} & {MC SE}
                  & {Calibration} & {Pricing} & {\textbf{Total}} \\
            \midrule
            Bass LV
                & 113.987 & 0.038
                & 0.079 & 0.018 & \textbf{0.097} \\
            Dupire LV
                & 113.879 & 0.038
                & 0.005 & 4.290 & \textbf{4.295} \\
            Heston SKR
                & 122.000 & 0.039
                & 0.197 & 0.076 & \textbf{0.273} \\
            Heston SLV
                & 122.139 & 0.039
                & 18.802 & 6.976 & \textbf{25.778} \\
            \bottomrule
        \end{tabular}
    \end{minipage}

    \caption[Reverse-cliquet comparison of Bass/SKR and LV/SLV]{
        Comparison of Bass LV with Dupire LV and Heston SKR with Heston SLV for the reverse cliquet of \Cref{sec:numerics.cliquets}.
        All models use the SPX-based ESSVI targets of \Cref{sec:numerics.vanillas}, with identical reference parameters for the two Heston models.
        The plot varies the loss cap $L$; the table reports prices, Monte Carlo standard errors, and timings at $L=7.5\%$ for $10^5$ pricing paths.\protect\footnotemark 
    }
    \label{fig:barriers}
\end{figure}
\footnotetext{
    Dupire local volatility is computed analytically, without numerical calibration, from five ESSVI smiles at $0.25$, $0.5$, $1$, $2$, and $3$ years (SPX, 23 June 2020~\cite{FaFeUl22}).
    Heston SLV is calibrated to this local-volatility target using $250\,000$ particles.
    The ESSVI surface calibration from raw quotes is excluded from the timings.
    
    We use QuantLib's C++ implementations of Dupire LV and Heston SLV with $64$ time steps per year, and our C++ implementations of Bass LV and Heston SKR.
    Computations were performed on an Intel Core i5-12500 processor with $32$~GiB RAM under Debian~13.
}

There is a simple probabilistic heuristic which suggests convergence of SKR to SLV. 
We start by interpreting
SLV as the \emph{instantaneously least-distorting Markovian calibration.}
Consider a diffusion reference
$\rmd Y_t=\sigma_R(t,Y_t,L_t)\rmd W_t$
with an autonomous factor $L$, and suppose that a smooth family of target marginals $(\mu_t)$ is prescribed. Write $\sigma_{\mathrm{Dup}}$ for the corresponding Dupire diffusion coefficient, i.e., the volatility depending only on time and state for which the local-volatility martingale
$\rmd Z_t=\sigma_{\mathrm{Dup}}(t,Z_t)\rmd B_t$
has $\Law(Z_t)=\mu_t$.

At a fixed time, retain the joint law of $(X_t,L_t)$ and ask how to modify the reference volatility as little as possible while matching the prescribed marginal evolution. For calibrated dynamics $\rmd X_t=a(t,X_t,L_t)\rmd W_t$, the latter requires
\begin{equation}\label{eq:intro.local.variance}
\E[a(t,X_t,L_t)^2\mid X_t=x]
=\sigma_{\mathrm{Dup}}(t,x)^2.
\end{equation}
Among coefficients satisfying \eqref{eq:intro.local.variance}, the conditional mean-square distortion
$\E[|a(t,x,L_t)-\sigma_R(t,x,L_t)|^2\mid X_t=x]$
is minimized by
\begin{equation}\label{eq:intro.local.projection}
a(t,x,l)=\lambda(t,x)\sigma_R(t,x,l),
\qquad
\lambda(t,x)
=\frac{\sigma_{\mathrm{Dup}}(t,x)}
{\sqrt{\E[\sigma_R(t,X_t,L_t)^2\mid X_t=x]}};
\end{equation}
this is a consequence of Cauchy--Schwarz.
Thus the least-distorting instantaneous adjustment rescales the reference volatility by a factor $\lambda$ depending only on time and price. This is precisely the SLV structure $\rmd X_t=\lambda(t,X_t)\sigma_R(t,X_t,L_t)\rmd W_t.$ 

SKR implements the same principle over finite calibration intervals. Each bridge in \eqref{eq:bridge} compares the calibrated and reference martingale increments, starting from the same price and factor state. Over a short interval, their mean-square difference is governed, to leading order, by the difference of their instantaneous volatilities. Matching the next marginal imposes the variance constraint \eqref{eq:intro.local.variance} to first order. The optimal bridge should therefore select the leverage adjustment \eqref{eq:intro.local.projection}. 
The leverage in \eqref{eq:intro.local.projection} depends on the distribution of the factor $L$ conditional on the current calibrated price. By propagating the joint law of $(X_{T_i},L_{T_i})$, SKR carries this information forward to the next calibration step.

This suggests viewing SKR as a variational time discretization of SLV, with exact calibration at every grid point.
We develop this heuristic in \Cref{sec:SLV_formal2} and establish convergence under regularity assumptions in the online supplement~\cite{BeHaPa26supp}.

\subsection{Multi-asset calibration with SKR}

The SKR construction extends to several assets while retaining the
scalar structure of the calibration problems. We consider a Markov
reference model $R=(Y,L)$ on $\Rd\times\mathsf E$, whose price
coordinates $Y=(Y^1,\ldots,Y^d)$ form a square integrable martingale.
The calibration data consist of the single-asset marginal laws
$X^j_{T_i}\sim\mu^j_i$, where
$\delta_{x_0^j}=\mu^j_0\leq_c\cdots\leq_c\mu^j_n$ have finite second
moments. These constraints leave the inter-asset dependence
undetermined; the reference model provides the dynamics against
which the joint calibration is compared.

As in the single-asset construction, we propagate the full joint
state law
\(
    \kappa_i
    :=\Law(X^1_{T_i},\ldots,X^d_{T_i},L_{T_i}).
\)
To describe one calibration step, rescale the interval to $[0,1]$,
write $\kappa=\kappa_i$ and $\nu^j=\mu^j_{i+1}$, and start the
reference process with $R_0\sim\kappa$. The bridge problem becomes
\begin{equation}\label{eq:multi.Ubridge}
    \inf_{\substack{X:\,X_0=Y_0, 
                    X^j_1\sim\nu^j}}
    \E\Big[\sum_{j=1}^d[X^j-Y^j]_1\Big],
\end{equation}
with the same martingale and filtration requirements as in the
single-asset case.

The reason the preceding theory applies is that, for fixed
$\kappa$, both the distortion objective and the terminal constraints
separate across assets. For asset $j$, we use the single-asset
construction with the remaining reference price coordinates
included in the latent state. In particular, its conditional
reference laws are taken given the full state
$r=(x^1,\ldots,x^d,\ell)$. The resulting scalar optimizers can all
be realized on the same reference process, so they simultaneously
attain the minimum of the sum in \eqref{eq:multi.Ubridge}.

More explicitly, put
$\gamma_r^j:=\Law(Y_1^j\mid R_0=r)$. Under the assumptions of the
single-asset Bass characterization, imposed coordinatewise,
the unique optimizer has the representation
\begin{equation}\label{eq:multi.R.Bass}
    \begin{aligned}
        X_t^j
        &=
        \E\Bigl[
            g^j\bigl(a^j(R_0)+Y_1^j\bigr)
            \,\Big|\,\mathcal F_t
        \Bigr],\qquad
        a^j(r)
        &=
        (g^j\star\gamma_r^j)^{-1}(x^j),
        \qquad j=1,\ldots,d,
    \end{aligned}
\end{equation}
where $(\mathcal F_t)$ is the reference filtration and the increasing
maps $g^j:\R\to[-\infty,\infty]$ enforce the terminal laws $\nu^j$. The shifts
$a^j$ enforce the martingale constraints and may depend
on all components of the initial state.

For fixed incoming state law, the Martingale Sinkhorn iteration
therefore applies separately to each coordinate, using the kernels
$\gamma_r^j$. The coordinate updates can be performed in parallel,
and no $d$-dimensional transport problem needs to be solved.
The precise formulation and the existence and uniqueness result
are given in \Cref{subsec:bridge.multi}.

\subsection{Closely related literature}\label{subsec:related.literature}

SLV models combine the exact fit of local volatility with the richer dynamics of stochastic volatility models. Early constructions include those of Jex--Henderson--Wang~\cite{JeHeWa99}, Lipton~\cite{Li02}, and Ren--Madan--Qian~\cite{ReMaQi07}. The leverage relation connects Dupire's local volatility formula~\cite{Du94} with the mimicking theorem of  Gy\"ongy~\cite{Gy86}. Guyon--Henry-Labord\`ere~\cite{guyon2011smile,GuHe13} developed a particle method for the resulting conditional McKean--Vlasov problem. Its well-posedness is delicate; rigorous results in restricted settings include \cite{AbTa10,JoZh20,LaShZh20}. A variational approach to calibration was developed by Guo--Loeper--Wang, first for local volatility~\cite{GuLoWa19} and subsequently for local-stochastic volatility~\cite{GuLoWa22}; see also Guo--Loeper~\cite{GuLo21} for path-dependent calibration.

The Bass construction goes back to Bass~\cite{Ba83}, who used a monotone transform of Brownian motion to solve the Skorokhod embedding problem. The Bass LV model for general initial laws was introduced independently in \cite{BaBeHuKa20} and in the work of  Loeper and Guo--Loeper--Wang~\cite{Lo18, GuLoWa19}, through probabilistic and PDE approaches to the martingale Benamou--Brenier problem. Conze and  Henry-Labord\`ere~\cite{CoHe21} proposed the fixed-point scheme which makes its calibration particularly efficient. Acciaio--Marini--Pammer~\cite{AcMaPa23} proved existence and uniqueness of the fixed point, up to translation, and geometric convergence under suitable assumptions. The Martingale Sinkhorn formulation and its extension to higher dimensions are developed by Hasenbichler--Joseph--Loeper--Ob{\l}\'oj--Pammer~\cite{HaJoLoObPa25}. The relation between increasingly fine Bass interpolations and the Dupire model is studied in \cite{BePaSc22,AcMaPa23}.

Closer to the present work are extensions of the Brownian reference. Conze--Henry-Labord\`ere~\cite{CoHe21} already discuss non-Brownian kernels for a single Bass step, in a setting different from our bridge problem. Geometric Bass martingales and their change-of-numeraire representation are studied in \cite{BaLoOb24,BePaRi24}, while Tschiderer~\cite{Ts24} introduced $q$-Bass martingales, replacing the Gaussian reference law by a more general measure. Acciaio--Antonio \cite{AcMa26a,AcMa26b} established existence first in the semidiscrete setting and then for general one-dimensional irreducible marginals with absolutely continuous reference law, together with stability and convergence results under additional assumptions.

\section{The bridge problem as a weak martingale transport problem} \label{sec:bridge.wmot}

As in the introduction, we rescale the calibration interval to $[0,1]$. 
The reference process $R=(Y,L)$ has initial law $\kappa\in\cP_2(\R\times\mathsf E)$, whose first marginal we denote by $\mu:=\pr^X_\#\kappa$.
We work with the usual augmentation $(\mathcal F_t)$ of the natural filtration generated by $R$, and assume that $R$ is Markov with respect to this filtration.
For a prescribed $\nu\in\cP_2(\R)$, the class $\Acal(\kappa,\nu)$ consists of square-integrable $(\mathcal F_t)$-martingales $X$ satisfying $X_0=Y_0$ and $X_1\sim\nu$.

For $r=(x,\ell)\in\R\times\mathsf E$, write $\gamma_r:=\Law(Y_1\mid R_0=r)$ for the conditional terminal law of the reference process. We work under the following assumptions on the conditional reference laws:
\begin{enumerate}[label={\rm(A\arabic*)}]
    \item \label{aspt:lsc}
    $(\rho,r)\mapsto\Wcal_2(\rho,\gamma_r)$ is lower semicontinuous on $\cP_2(\R)\times(\R\times\mathsf E)$.
    \item \label{aspt:moments}
    $\iint z^2\,\gamma_r(\rmd z)\,\kappa(\rmd r)<\infty$, and $\bar\gamma_r=x$ for $\kappa$-a.e.~$r=(x,\ell)$.
    \item \label{aspt:atomless}
    $\gamma_r$ is atomless for $\kappa$-a.e.~$r$.
\end{enumerate}

Since $X_0=Y_0$, integration by parts shows that the bridge problem is equivalent to
\begin{equation}\label{eq:wmot.bridge}
    P(\kappa,\nu)
    :=
    \sup_{X\in\Acal(\kappa,\nu)}\E[X_1Y_1].
\end{equation}
To obtain a static formulation, consider an admissible bridge $X$ and set $\pi:=\Law(R_0,X_1)$, with disintegration $\pi(\rmd r,\rmd y)=\kappa(\rmd r)\pi_r(\rmd y)$.
The martingale property implies
\begin{equation}\label{eq:conditional.means}
    \int y\,\pi_r(\rmd y)=x
    \quad\text{for $\kappa$-a.e.~$r=(x,\ell)$.}
\end{equation}
Thus $\pi$ belongs to $\Mcal(\kappa,\nu)$, the set of couplings in $\Cpl(\kappa,\nu)$ satisfying the martingale property \eqref{eq:conditional.means}.

Conditionally on $R_0=r$, the pair $(X_1,Y_1)$ is a coupling of $\pi_r$ and $\gamma_r$ leading us to consider the maximal covariance
\[
    \MCov(\rho,\gamma)
    :=
    \sup_{q\in\Cpl(\rho,\gamma)}
    \int yz\,q(\rmd y,\rmd z).
\]
Adapting the dynamic-to-static reduction used for the martingale Benamou--Brenier problem in \cite{BaBeHuKa20}, we obtain the following static formulation of the bridge problem.
\begin{theorem}[Existence and uniqueness for the primal problem]\label{thm:wmot.bridge.existence.uniqueness}
    Suppose that \ref{aspt:lsc}, \ref{aspt:moments} and \ref{aspt:atomless} hold, and let $\mu \le_{\rm cvx} \nu$. Then
    \begin{equation}\label{eq:wmot.bridge.static}\tag{wBridge}
        P(\kappa,\nu)
        =
        \max_{\pi\in\Mcal(\kappa,\nu)}
        \int\MCov(\pi_r,\gamma_r)\,\kappa(\rmd r).
    \end{equation}
    Moreover, both the dynamic and static problems admit optimizers. The optimizer of \eqref{eq:wmot.bridge.static} is unique, and the optimizer of \eqref{eq:wmot.bridge} is unique in law.
\end{theorem}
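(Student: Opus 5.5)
We prove the two inequalities between the dynamic value $P(\kappa,\nu)$ and the static value, and then settle existence and uniqueness separately for each problem.

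\emph{Upper bound $P\le$ static.} Let $X\in\Acal(\kappa,\nu)$ and put $\pi:=\Law(R_0,X_1)$. As noted before the statement, $\pi\in\Mcal(\kappa,\nu)$. Conditionally on $R_0=r$, the pair $(X_1,Y_1)$ is a coupling of $\pi_r$ and $\gamma_r$. Hence
\[
\E[X_1Y_1\mid R_0=r]\le\MCov(\pi_r,\gamma_r),
\]
and integrating against $\kappa$ gives $\E[X_1Y_1]\le\int\MCov(\pi_r,\gamma_r)\,\kappa(\rmd r)$.

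\emph{Lower bound, static $\le P$.} Fix $\pi\in\Mcal(\kappa,\nu)$. Because $\gamma_r$ is atomless by \ref{aspt:atomless}, the maximal covariance is attained by the comonotone coupling. We therefore set
\[
X_1:=T_{R_0}(Y_1),\qquad T_r:=F_{\pi_r}^{-1}\circ F_{\gamma_r},
\]
which is measurable in $(r,z)$. Then $\Law(X_1\mid R_0=r)=\pi_r$, so $X_1\sim\nu$, and $\E[X_1Y_1\mid R_0=r]=\MCov(\pi_r,\gamma_r)$. Define $X_t:=\E[X_1\mid\mathcal F_t]$. This is square integrable, and $X_0=\E[X_1\mid R_0]=x=Y_0$ by \eqref{eq:conditional.means}, using that $\mathcal F_0$ is generated by $R_0$ up to null sets. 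So $X\in\Acal(\kappa,\nu)$ and attains the static value. The two bounds give the equality \eqref{eq:wmot.bridge.static}. They also show that dynamic optimizers exist once static ones do, and that every dynamic optimizer $X$ yields a static optimizer $\pi=\Law(R_0,X_1)$ with conditional comonotone coupling.

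\emph{Existence for \eqref{eq:wmot.bridge.static}.} The set $\Mcal(\kappa,\nu)$ is nonempty, since it contains e.g.\ $\kappa(\rmd r)\,\pi^0_x(\rmd y)$ for any martingale coupling $\pi^0$ of $\mu\le_c\nu$. It is weakly compact, because the marginals are fixed and the martingale constraint is closed under weak convergence with uniformly integrable second moments. To get upper semicontinuity of the objective, we write
\[
\MCov(\rho,\gamma)=\tfrac12\Big(\int y^2\,\rho(\rmd y)+\int z^2\,\gamma(\rmd z)-\Wcal_2^2(\rho,\gamma)\Big).
\]
The second-moment terms integrate to constants: the first gives $\int y^2\,\rmd\nu$, the second is finite by \ref{aspt:moments}. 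Maximizing is thus equivalent to minimizing the weak transport cost $\int\Wcal_2^2(\pi_r,\gamma_r)\,\kappa(\rmd r)$. This cost is convex in $\pi_r$, and by \ref{aspt:lsc} its integrand is lower semicontinuous. The standard weak optimal transport lower semicontinuity result (Backhoff--Beiglb\"ock--Pammer--Pammer type, applicable since the cost is convex in the measure argument) then gives lower semicontinuity of $\pi\mapsto\int\Wcal_2^2(\pi_r,\gamma_r)\,\kappa(\rmd r)$ on $\Cpl(\kappa,\nu)$. Combined with compactness of $\Mcal(\kappa,\nu)$, a minimizer exists. I expect this step to be the main technical point: one has to check that the lsc theorem applies on the Polish space $\R\times\mathsf E$, with $\cP_2$-topology handled via uniform integrability of the second moments.

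\emph{Uniqueness.} Here we use strict convexity. For atomless $\gamma$, the map $\rho\mapsto\Wcal_2^2(\rho,\gamma)$ is strictly convex: via the quantile representation,
\[
\Wcal_2^2(\rho,\gamma)=\int_0^1\big(F_\rho^{-1}(u)-F_\gamma^{-1}(u)\big)^2\,\rmd u,
\]
and mixtures have $F^{-1}_{(\rho+\rho')/2}\ne\tfrac12\big(F_\rho^{-1}+F_{\rho'}^{-1}\big)$ in general. The cleaner route uses the coupling $q=(F_\gamma(Z)\text{-comonotone})$: if $\pi,\pi'$ are both optimal, then $\tfrac12(\pi+\pi')$ is feasible. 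The coupling of $\tfrac12(\pi_r+\pi'_r)$ with $\gamma_r$ obtained by mixing the two optimal comonotone couplings is optimal only if it is itself supported on a monotone graph; since $\gamma_r$ is atomless, this forces $T_r=T'_r$ $\gamma_r$-a.e., hence $\pi_r=\pi'_r$ for $\kappa$-a.e.\ $r$. Uniqueness in law of the dynamic optimizer then follows. Any optimizer satisfies $X_1=T_{R_0}(Y_1)$ almost surely, because equality in the conditional covariance bound together with atomlessness forces the comonotone coupling. Moreover $X_t=\E[X_1\mid\mathcal F_t]$, so the joint law of $(R,X)$ is determined by the unique $\pi$.
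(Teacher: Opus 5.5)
Your proposal is correct and follows essentially the same route as the paper. You use the conditioning bound $P\le$ static, and existence via $\Wcal_2$-lower semicontinuity of $\pi\mapsto\int\Wcal_2^2(\pi_r,\gamma_r)\,\kappa(\rmd r)$ on the compact set $\Mcal(\kappa,\nu)$. You get attainment through the conditional quantile map $X_1=F^{-1}_{\pi_{R_0}}(F_{\gamma_{R_0}}(Y_1))$ with $X_t=\E[X_1\mid\mathcal F_t]$, and uniqueness from strict concavity of $\rho\mapsto\MCov(\rho,\gamma_r)$ together with uniqueness of the monotone coupling.

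The only difference is that you unpack strict concavity via the mixture of comonotone couplings, which the paper simply invokes. There, the step you leave implicit is that $\tfrac12(\pi+\pi')$ is again optimal, so the mixed coupling attains $\MCov(\tfrac12(\pi_r+\pi'_r),\gamma_r)$ for $\kappa$-a.e.~$r$. Your preliminary remark about quantile functions of mixtures is not itself an argument and can be dropped.
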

The maximization problem on the right-hand side of \eqref{eq:wmot.bridge.static} fits into the weak optimal transport framework of \cite{GoRoSaTe17}.
More specifically, it falls within the extended weak martingale optimal transport framework of \cite[Section~2.1]{JoPa24}, with $\ell$ as the additional information parameter.

\begin{proof}
    Conditioning on $R_0$ and using the definition of $\MCov$ yields
    \begin{equation}\label{ineq:proof:bridge.dynamic.static}
        P(\kappa,\nu)
        \leq
        \sup_{\pi\in\Mcal(\kappa,\nu)}
        \int\MCov(\pi_r,\gamma_r)\,\kappa(\rmd r).
    \end{equation}
    The static objective equals
    $C-\frac12\int\Wcal_2^2(\pi_r,\gamma_r)\,\kappa(\rmd r)$,
    where $C<\infty$ is independent of $\pi$ by \ref{aspt:moments}.
    Since $(r,\rho)\mapsto\Wcal_2^2(\rho,\gamma_r)$ is nonnegative, jointly lower semicontinuous by \ref{aspt:lsc}, and convex in $\rho$, its integral is $\Wcal_2$-lower semicontinuous by \cite[Proposition~2.8]{BaBePa19}.
    Thus the objective is upper semicontinuous on the nonempty $\Wcal_2$-compact set $\Mcal(\kappa,\nu)$ and admits a maximizer $\hat\pi$.

    By \ref{aspt:atomless}, the conditional monotone couplings can be realized on the reference probability space by setting
    \[
        X_1:=F^{-1}_{\hat\pi_{R_0}}\bigl(F_{\gamma_{R_0}}(Y_1)\bigr),
        \qquad X_t:=\E[X_1\mid\mathcal F_t],
    \]
    where $F_\rho$ and $F^{-1}_\rho$ denote the distribution and quantile functions of $\rho$.
    The barycenter constraint gives $X_0=Y_0$, while $X_1\sim\nu$, so $X\in\Acal(\kappa,\nu)$.
    By construction,
    \[
        \E[X_1Y_1]
        =
        \int\MCov(\hat\pi_r,\gamma_r)\,\kappa(\rmd r).
    \]
    Hence equality holds in \eqref{ineq:proof:bridge.dynamic.static}, and both problems are attained.

    By \ref{aspt:atomless}, strict concavity of $\rho\mapsto\MCov(\rho,\gamma_r)$ gives uniqueness of $\hat\pi$.
    Any dynamic optimizer $X$ must satisfy $\Law(R_0,X_1)=\hat\pi$, and its conditional terminal coupling must attain $\MCov(\hat\pi_r,\gamma_r)$.
    This is the unique monotone coupling, so $X_1$ is uniquely determined, up to null sets, as a function of $(R_0,Y_1)$.
    Since $X_t=\E[X_1\mid\mathcal F_t]$, the optimizer is unique, and in particular unique in law.
\end{proof}

In particular, applying \Cref{thm:wmot.bridge.existence.uniqueness} to every bridge between maturities establishes \Cref{thm:skr}, the existence and uniqueness of the SKR model.

We next characterize optimal bridges as $R$-Bass martingales through the dual problem, following the approach of \cite{BaBeScTs26} for the martingale Benamou--Brenier problem.
Under the assumptions below, this gives a concrete representation of the optimal bridge in terms of a single convex dual potential $\psi$, with generating function $g=\partial_x\psi^*$.
Here, $\psi^*$ denotes the convex conjugate of $\psi$.
This representation provides the basis for the martingale Sinkhorn algorithm and its convergence analysis.

Here and below, the inverse in Definition~\ref{def:R.Bass} denotes a measurable choice of a finite shift $a(r)$ satisfying $(g\star\gamma_r)(a(r))=x$ for $\kappa$-a.e.~$r=(x,\ell)$. We require $g(a(R_0)+Y_1)\in L^2$; in particular, the convolution at $a(r)$ is absolutely convergent for $\kappa$-a.e.~$r$.

To derive the dual problem, first let $\psi\in L^1(\nu)$ be proper, convex and lower semicontinuous. For $r=(x,\ell)$, set
\[
    (\psi^*\star\gamma_r)(a)
    :=
    \int\psi^*(a+z)\,\gamma_r(\rmd z),
    \qquad
    \psi^C(r):=(\psi^*\star\gamma_r)^*(x).
\]
Fenchel's inequality, together with the barycenter constraint $\int y\,\pi_r(\rmd y)=x$, yields
\[
    \MCov(\pi_r,\gamma_r)
    \leq
    \int\psi\,\rmd\pi_r-\psi^C(r).
\]
Consequently,
\[
    \sup_{\pi\in\Mcal(\kappa,\nu)} \int\MCov(\pi_r,\gamma_r)\,\kappa(\rmd r)
    \leq
    \int\psi\,\rmd\nu-\int\psi^C\,\rmd\kappa.
\]
Taking the infimum over such $\psi$ gives the associated dual problem.
However, a limiting potential obtained from a minimizing sequence need not belong to $L^1(\nu)$, so the difference $\int\psi\,\rmd\nu-\int\psi^C\,\rmd\kappa$ may no longer be well defined.
We therefore use the generalized-integral construction of Beiglb\"ock and Juillet \cite[Appendix~A.3]{BeJu16}, further developed by Beiglb\"ock, Nutz and Touzi \cite[Section~4]{BeNuTo17}: instead of evaluating the two integrals separately, we integrate the conditional difference.

Write $I_\nu:=\operatorname{int}\co(\supp\nu)$.
A \emph{dual potential} is a proper, convex and lower semicontinuous function $\psi$ that is finite on $I_\nu$ and equal to $+\infty$ outside $\bar I_\nu$.
We refer to this domain restriction as the \emph{support convention}.
Adapting the relaxed dual functional in \cite[Proposition~4.2]{BaBeScTs26} to the present setting, we define, for an irreducible pair $(\mu,\nu)$, a dual potential $\psi$ and $\pi\in\Mcal(\kappa,\nu)$,
\begin{equation}\label{eq:relaxed.functional}
    \Ecal(\psi)
    :=
    \int \Bigl( \int\psi\,\rmd\pi_r-\psi^C(r) \Bigr) \,\kappa(\rmd r).
\end{equation}
\Cref{lem:properties.Ecal} shows that $\Ecal(\psi)$ is well-defined and independent of the choice of $\pi\in\Mcal(\kappa,\nu)$.
For $\psi\in L^1(\nu)$, it reduces to $\int\psi\,\rmd\nu-\int\psi^C\,\rmd\kappa$.

\begin{theorem}[Duality and characterization of $R$-Bass martingales]\label{thm:wmot.bridge.duality}
    Suppose that \ref{aspt:lsc}, \ref{aspt:moments} and \ref{aspt:atomless} hold, and let $(\mu,\nu)$ be irreducible. Then
    \begin{equation}\label{eq:wmot.bridge.dual}\tag{dBridge}
    P(\kappa,\nu)
    =
    \min_{\psi\ {\rm dual\ potential}}
    \Ecal(\psi).
    \end{equation}
    A martingale $X\in\Acal(\kappa,\nu)$ solves \eqref{eq:wmot.bridge} if and only if it is an $R$-Bass martingale.
    
    Moreover, suppose that $\supp \gamma_r$ is either $\R$ or $[0,\infty)$ for $\kappa$-a.e.~$r$. Then the dual minimizer is unique up to addition of an affine function on $I_\nu$.
\end{theorem}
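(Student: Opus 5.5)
The plan has four steps: weak duality, attainment of the dual, the two directions of the characterization, and uniqueness of the dual minimizer.

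\emph{Weak duality.} This is already available from the Fenchel computation above, once it is extended to general dual potentials. By \Cref{lem:properties.Ecal}, $\Ecal(\psi)$ can be evaluated with any $\pi\in\Mcal(\kappa,\nu)$. Choosing $\pi=\hat\pi$, the primal optimizer of \Cref{thm:wmot.bridge.existence.uniqueness}, and applying the conditional Fenchel inequality $\MCov(\hat\pi_r,\gamma_r)\le\int\psi\,\rmd\hat\pi_r-\psi^C(r)$ pointwise in $r$, we obtain $P(\kappa,\nu)\le\Ecal(\psi)$ for every dual potential $\psi$.

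\emph{Dual attainment.} I would take a minimizing sequence $\psi_n$ and normalize it by subtracting affine functions, so that $\psi_n(y_0)=0$ and $\psi_n'(y_0)=0$ at a fixed point $y_0\in I_\nu$. Following \cite[Proposition~4.2]{BaBeScTs26} and the generalized-integral estimates of \cite{BeNuTo17}, irreducibility of $(\mu,\nu)$ should give local uniform bounds on compact subintervals of $I_\nu$, since $\Ecal$ controls the convex-order gap $\int(\psi-\text{chord})\,\rmd\pi_r$. Komlós/Helly-type arguments then yield a pointwise limit $\psi$ on $I_\nu$, extended by the support convention. Lower semicontinuity of $\Ecal$ along the sequence would follow from Fatou applied to the nonnegative conditional integrands $\int\psi\,\rmd\pi_r-\psi^C(r)-\MCov(\pi_r,\gamma_r)\ge0$, together with upper semicontinuity of $\psi\mapsto\psi^C(r)$ under pointwise convergence. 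Strong duality then amounts to showing $\min\Ecal=P$. I expect this to be the main obstacle, because the equality has to be shown without a priori integrability of the limit. The first thing I would try is to exhibit a dual potential attaining equality directly from a Bass-type structure of $\hat\pi$. Alternatively, I would use the standard weak martingale optimal transport duality of \cite{JoPa24} for compactly supported approximations $\nu_k\to\nu$ and pass to the limit using the compactness above.

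\emph{Characterization.} With $\psi$ optimal, equality holds in Fenchel for $\kappa$-a.e.\ $r$. Equality in $\MCov(\hat\pi_r,\gamma_r)=\int\psi\,\rmd\hat\pi_r-\int\psi^*(a+z)\gamma_r(\rmd z)+a x$ at the maximizing shift $a=a(r)$ forces the monotone coupling to satisfy $y\in\partial\psi^*(a(r)+z)$, so $X_1=g(a(R_0)+Y_1)$ with $g=\partial_x\psi^*$. This $g$ is increasing, and it is single valued $\gamma_r$-a.s.\ by \ref{aspt:atomless}. The first-order condition in $a$ gives $(g\star\gamma_r)(a(r))=x$. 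The Markov property then yields \eqref{eq:R.Bass}, so the optimizer is an $R$-Bass martingale. Conversely, given an $R$-Bass martingale with map $g$, I set $\psi:=(\int g)^*$, normalized and restricted to $\bar I_\nu$ via the support convention. Since $X_1=g(A+Y_1)$ is a monotone function of $Y_1$ given $R_0$, equality holds in Fenchel conditionally on $R_0$. Hence $\E[X_1Y_1]=\Ecal(\psi)\ge P$, and $X$ is optimal. Integrability issues are handled through the generalized integral, using $g(A+Y_1)\in L^2$.

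\emph{Uniqueness of the dual minimizer.} Let $\psi$ be any minimizer. It must generate the unique primal optimizer $\hat\pi$, so $\partial\psi^*(a(r)+z)$ is determined on $\supp\gamma_r$ shifted by $a(r)$. When $\supp\gamma_r$ is $\R$ or $[0,\infty)$, these shifted supports cover the whole relevant domain, so $g=\partial\psi^*$ is pinned down up to a translation of its argument. A translation of the argument of $g$ corresponds to adding a linear term to $\psi$, and the additive constant accounts for the remaining affine freedom. Hence $\psi$ is unique up to an affine function on $I_\nu$.
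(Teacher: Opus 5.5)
Your outline matches the paper's structure: weak duality from Fenchel and the coupling-independence of $\Ecal$, affine normalization with local bounds from irreducibility, lower semicontinuity of $\Ecal$, Fenchel equality forcing $X_1=g(a(R_0)+Y_1)$, and the converse via a primitive of $g$. The converse direction uses only weak duality and is fine. However, the step you flag as the main obstacle is genuinely missing, and both \eqref{eq:wmot.bridge.dual} and the implication ``optimal $\Rightarrow$ $R$-Bass'' rest on it. Compactness and lower semicontinuity only show that $\inf\Ecal$ over dual potentials is attained. They do not show that this infimum equals $P(\kappa,\nu)$. Without $\Ecal(\psi)=\int\MCov(\hat\pi_r,\gamma_r)\,\kappa(\rmd r)$ you have no conditional Fenchel equality, and hence no common generating function $g$. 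Your first suggested remedy is circular. The primal optimizer only gives, for each $r$ separately, the monotone map $F^{-1}_{\hat\pi_r}\circ F_{\gamma_r}$; that all these maps have the form $g(a(r)+\cdot)$ for a single increasing $g$ is exactly what a dual optimizer is needed for. Your second remedy is only named. It would still require identifying the weak-transport dual with convex potentials and the transform $(v^*\star\gamma_r)^*(x)$, together with stability of $P(\kappa,\cdot)$ along $\nu_k\to\nu$ and irreducibility of the approximating pairs.

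The missing input is \Cref{lem:bridge.no.duality.gap}. It combines three ingredients: the general weak-transport duality \cite[Theorem~1.3]{BaBePa19} for the cost $\frac12\Wcal_2^2(\rho,\gamma_r)$ under the barycenter constraint; reduction to convex test functions via Pramenkovi\'c's convex-envelope argument, since $\MCov(\cdot,\gamma_r)$ is increasing in convex order; and the conjugation identity $v_r^*=v^*\star\gamma_r$. Together these give $P(\kappa,\nu)=\inf\{\int v\,\rmd\nu-\int v^C\,\rmd\kappa\}$ over $\nu$-integrable convex lsc $v$. With this, your integrability worry disappears. Take the minimizing sequence inside this integrable class (the support convention does not change the infimum), normalize it, and pass to a locally uniform limit $\psi$. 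Then \Cref{lem:Ecal.lsc} gives $\Ecal(\psi)\le P(\kappa,\nu)$, and weak duality gives the reverse; the limit is only ever evaluated through the generalized integral $\Ecal$.

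A smaller point concerns uniqueness. Comparing two minimizers $\psi,v$ yields $\partial\psi^*(a(r)+\cdot)=\partial v^*(b(r)+\cdot)$ on $\supp\gamma_r$, with a translation $b(r)-a(r)$ that may a priori depend on $r$, so ``pinned down up to a translation'' needs an argument. One can be supplied: a monotone function invariant under a nonzero shift on a half-line is constant there, which is incompatible with $\hat\pi_r$ having barycenter $x\in I_\nu$. The paper's route avoids this altogether. \Cref{lem:monotone.transport.uniqueness} makes $\psi-v$ affine on each open interval $\operatorname{int}\co\supp\hat\pi_r$. Irreducibility, via the strict Jensen gap of $|\cdot-y|$, shows these intervals cover $I_\nu$, and a locally affine function on an interval is affine.
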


In particular, under these assumptions, to every $(\kappa,\nu)$ there is at most one $R$-Bass martingale $X \in \Acal(\kappa,\nu)$.
To show \Cref{thm:wmot.bridge.duality}, we begin by identifying the value of the static problem with the dual problem over integrable convex potentials.

\begin{lemma}[No duality gap]\label{lem:bridge.no.duality.gap}
    Suppose that \ref{aspt:lsc} and \ref{aspt:moments} hold and that $\mu\le_{\rm cvx}\nu$.
    Then
    \begin{equation}\label{eq:wmot.bridge.strong.duality}
        \sup_{\pi\in\Mcal(\kappa,\nu)}
        \int\MCov(\pi_r,\gamma_r)\,\kappa(\rmd r)
        =
        \inf_{\substack{v\in L^1(\nu)\\v\ {\rm cvx,\ lsc}}}
        \Bigl\{
            \int v\,\rmd\nu-\int v^C\,\rmd\kappa
        \Bigr\}.
    \end{equation}
\end{lemma}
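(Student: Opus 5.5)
The plan is to derive the no-duality-gap statement from the general duality theory for weak (martingale) optimal transport with an additional information parameter, as in \cite[Section~2.1]{JoPa24}. Convex duality applied to the martingale constraint then converts the resulting dual into one over convex potentials. Write the static problem in cost form,
\[
\sup_{\pi\in\Mcal(\kappa,\nu)}\int\MCov(\pi_r,\gamma_r)\,\kappa(\rmd r)
= C-\tfrac12\inf_{\pi\in\Mcal(\kappa,\nu)}\int \Wcal_2^2(\pi_r,\gamma_r)\,\kappa(\rmd r),
\]
with $C$ finite by \ref{aspt:moments}. The weak cost $c(r,\rho):=\tfrac12\Wcal_2^2(\rho,\gamma_r)$ is nonnegative, jointly lower semicontinuous by \ref{aspt:lsc}, and convex in $\rho$. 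Adding the martingale constraint as the convex indicator $\chi(r,\rho)=0$ if $\bar\rho=x$ and $+\infty$ otherwise preserves these three properties, since $\rho\mapsto\bar\rho$ is $\Wcal_1$-continuous.

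For such costs, weak transport duality (\cite{GoRoSaTe17}, \cite{BaBePa19}, in the extended form of \cite{JoPa24}) gives
\[
\inf_{\pi\in\Cpl(\kappa,\nu)}\int (c+\chi)(r,\pi_r)\,\kappa(\rmd r)=\sup_{\phi}\Big\{\int \phi^{c+\chi}\,\rmd\kappa-\int\phi\,\rmd\nu\Big\},
\]
with $\phi$ ranging over continuous functions of quadratic growth and $\phi^{c+\chi}(r)=\inf_\rho\{\int\phi\,\rmd\rho+c(r,\rho)+\chi(r,\rho)\}$. The dual is then rewritten in covariance form. Put $v:=\tfrac12|\cdot|^2+\phi$ (up to sign conventions), so that $c(r,\rho)+\int\phi\,\rmd\rho$ becomes $\int v\,\rmd\rho-\MCov(\rho,\gamma_r)$ plus a term depending only on $r$. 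The infimum over $\rho$ with $\bar\rho=x$ of $\int v\,\rmd\rho-\MCov(\rho,\gamma_r)$ should equal $v^C(r)=(v^*\star\gamma_r)^*(x)$.

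This identification is the heart of the argument. The inequality $\geq$ is the Fenchel computation already displayed before \Cref{thm:wmot.bridge.duality}. For $\leq$, fix $a$ and take $\rho=(\partial v^*)_\#(a+\cdot)_\#\gamma_r$, i.e.\ the pushforward under a selection of $\partial v^*$ of $\gamma_r$ shifted by $a$. This attains equality $\int v\,\rmd\rho-\MCov(\rho,\gamma_r)=a\bar\rho-(v^*\star\gamma_r)(a)$ because the coupling is monotone. A minimax argument over $a$ (inf over $\rho$ of a Lagrangian with multiplier $a$ for the barycenter constraint) then yields $v^C(r)$. Since $\rho\mapsto\int v\,\rmd\rho-\MCov(\rho,\gamma_r)$ is convex and the constraint is linear, this is a standard Lagrangian duality in one scalar multiplier, so I would invoke it directly. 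Alternatively, the restriction to convex $v$ can be justified by replacing $v$ with its convex envelope: this only lowers $\int v\,\rmd\nu$ under the martingale constraint without changing $v^C$.

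The main obstacle is bookkeeping of integrability and the sign change. The dual potentials from weak transport duality are only of quadratic growth, not convex, and they must be passed to convex lsc $v\in L^1(\nu)$ without increasing the dual value. This uses $\int v\,\rmd\nu\ge\int \hat v\,\rmd\nu$ for the convex envelope $\hat v$ together with $\hat v^C=v^C$, the latter holding because $v^*=\hat v^*$. Finally, weak duality ($\leq$ in \eqref{eq:wmot.bridge.strong.duality}) was already shown before \Cref{thm:wmot.bridge.duality}, so only $\geq$ needs the above. Nonemptiness of $\Mcal(\kappa,\nu)$, which follows from $\mu\le_{\rm cvx}\nu$ via Strassen's theorem applied fibrewise through $x$, ensures both sides are finite.
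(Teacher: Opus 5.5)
Your proposal is correct and follows essentially the same route as the paper. You take the barycenter-constrained cost $\frac12\Wcal_2^2(\rho,\gamma_r)$ into the weak transport duality of \cite{BaBePa19}, rewrite the dual in covariance form, reduce to convex potentials via the convex envelope, and identify the inner infimum with $v^C(r)$ by conjugating away the barycenter constraint.

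Your scalar Lagrangian-duality step is exactly the paper's observation that $v_r^*=v^*\star\gamma_r$, with $v_r$ convex by mixing and finite by the quadratic bounds; that finiteness is the constraint qualification you leave implicit. Because this identification also holds for the non-convex potentials produced by weak duality, your direct envelope argument via $\hat v^*=v^*$ legitimately replaces the paper's appeal to Pramenkovi\'c's convex-order argument.
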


\begin{proof}
    For a Gaussian reference law, this dual representation was established in \cite[Sections~3--4]{BaBeScTs26}.
    We adapt that argument to the conditional reference laws $\gamma_r$.

    The cost $C(r,\rho):=\frac12\Wcal_2^2(\rho,\gamma_r)$ when $\bar\rho=x$, and $+\infty$ otherwise, is nonnegative, jointly lower semicontinuous and convex in $\rho$.
    For such costs, \cite[Theorem~1.3]{BaBePa19} identifies the primal value with a dual over continuous test functions.
    After cancelling the second-moment terms, this gives
    \begin{gather*}
        \sup_{\pi\in\Mcal(\kappa,\nu)}
        \int\MCov(\pi_r,\gamma_r)\,\kappa(\rmd r)
        =
        \inf_v\Bigl\{
            \int v\,\rmd\nu-\int v_r(x)\,\kappa(\rmd r)
        \Bigr\},
        \quad
        v_r(z):=
        \inf_{\substack{\rho\in\cP_2(\R)\\\bar\rho=z}}
        \Bigl\{
            \int v\,\rmd\rho-\MCov(\rho,\gamma_r)
        \Bigr\},
    \end{gather*}
    where $v$ is continuous and satisfies $c+\frac12|\cdot|^2\le v\le a+b|\cdot|^2$ for some $a,b,c\in\R$.

    The argument of Pramenkovi\'c \cite[Theorem~3.2]{Pr25} shows that, for weak costs decreasing in convex order, replacing a dual potential by its convex envelope cannot worsen the dual objective.
    Since $\MCov(\cdot,\gamma_r)$ is increasing in convex order, this argument allows us to restrict the infimum to convex $v$.

    It remains to identify $v_r$. 
    The calculation in the proof of \cite[Proposition~5.6]{BePaRiSc25} removes the barycenter constraint by conjugation and yields, under our quadratic bounds,
    \[
        v_r^*(a)
        =
        \int v^*(a+y)\,\gamma_r(\rmd y)
        =
        (v^*\star\gamma_r)(a).
    \]
    The quadratic bounds make $v_r$ finite, while mixing competitors gives convexity.
    Thus Fenchel--Moreau yields $v_r=(v^*\star\gamma_r)^*$ and hence $v_r(x)=v^C(r)$.

    Finally, Fenchel's inequality gives weak duality for every convex lower semicontinuous $v\in L^1(\nu)$, so enlarging the class of potentials leaves the value unchanged.
    In the irreducible case used below, setting $v=+\infty$ outside $\bar I_\nu$ preserves $\int v\,\rmd\nu$ and can only increase $v^C$, so the support convention also leaves the infimum unchanged.
\end{proof}

The preceding lemma identifies the static value with the dual problem over $\nu$-integrable convex potentials. 
To pass to the relaxed dual problem, we first verify that the functional $\Ecal$ is well defined and does not depend on the coupling $\pi$ used in its definition.

\begin{lemma}\label{lem:properties.Ecal}
    Suppose that $(\mu,\nu)$ is irreducible and that \ref{aspt:moments} holds. Let $v: \R \to \R \cup \{+\infty\}$ be proper, convex and lower semicontinuous with $I_\nu = \operatorname{int} \dom(v)$. Then $\Ecal(v)\in[0,\infty]$ is well defined and is independent of the choice of $\pi\in\Mcal(\kappa,\nu)$.
    Moreover, $\Ecal$ is invariant under addition of affine functions.
\end{lemma}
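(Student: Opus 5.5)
We plan to show that for each $r=(x,\ell)$ the integrand $h_\pi(r):=\int v\,\rmd\pi_r-v^C(r)$ is well defined and nonnegative. We then show that its $\kappa$-integral does not depend on $\pi$, by comparing it with truncated or integrable approximations of $v$.

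\emph{Step 1: pointwise well-definedness and nonnegativity.} Irreducibility gives $\pi_r$ concentrated on $\bar I_\nu$ with barycenter $x\in I_\nu=\operatorname{int}\dom v$ for $\kappa$-a.e.\ $r$. Convexity yields $v\ge \alpha+\beta y$ for an affine minorant, so $\int v\,\rmd\pi_r\in(-\infty,\infty]$ is well defined. On the other side, $v^*\star\gamma_r$ is well defined: $v^*\ge -v(x_1)+x_1(\cdot)$ is bounded below by an affine function, and $\gamma_r$ has a second moment by \ref{aspt:moments}. Hence $v^C(r)=(v^*\star\gamma_r)^*(x)$ makes sense in $(-\infty,\infty]$. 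For finiteness from above we use $x\in I_\nu$: a supporting-line argument gives $v^C(r)\le v(x)<\infty$ (take $a$ near the slope at $x$ and apply Jensen). The Fenchel computation already recorded before the definition of $\Ecal$ then gives
\[
\int v\,\rmd\pi_r-v^C(r)\ \ge\ \MCov(\pi_r,\gamma_r)-\textstyle\int y\,\rmd\pi_r\cdot 0\ \text{(after the affine normalisation)},
\]
more precisely $h_\pi(r)\ge \MCov(\pi_r,\gamma_r)-x\bar\gamma_r$. Since $\MCov(\pi_r,\gamma_r)\ge x\bar\gamma_r$ (the product coupling is admissible and $\bar\pi_r=x$), we get $h_\pi\ge0$. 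Hence $\Ecal(v)=\int h_\pi\,\rmd\kappa\in[0,\infty]$ is well defined.

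\emph{Step 2: affine invariance.} Adding $\lambda(y)=c+by$ changes $\int v\,\rmd\pi_r$ by $c+bx$ and shifts $v^*$ to $v^*(\cdot-b)-c$. Then $v^*\star\gamma_r$ becomes $(v^*\star\gamma_r)(\cdot-b)-c$, whose conjugate at $x$ is $v^C(r)+bx+c$. So $h_\pi$ is unchanged pointwise.

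\emph{Step 3: independence of $\pi$, which is the main obstacle.} If $v\in L^1(\nu)$, then $\int h_\pi\,\rmd\kappa=\int v\,\rmd\nu-\int v^C\,\rmd\kappa$ holds whenever the last integral is finite; in the infinite case both sides equal $+\infty$ by the lower bound $v^C\ge$ affine. This expression does not involve $\pi$. For general $v$ we follow \cite[Appendix~A.3]{BeJu16} and \cite[Section~4]{BeNuTo17}. Approximate $v$ by convex $v_n\uparrow v$ that are finite and affine outside compact subintervals $[a_n,b_n]\uparrow I_\nu$, for instance by replacing $v$ outside $[a_n,b_n]$ with its tangent lines. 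Then $v_n\in L^1(\nu)$, and the difference $v-v_n\ge0$ is convex and vanishes on $[a_n,b_n]$. Monotone convergence gives $\int v_n\,\rmd\pi_r\uparrow\int v\,\rmd\pi_r$. Since $v_n^*\downarrow$ and Step 1 supplies integrable bounds, we also expect $v_n^C\to v^C$ pointwise. The delicate point is controlling the $v^C$ part uniformly enough to pass to the limit. To handle it, we plan to write $h_\pi=\bigl(\int (v-v_n)\,\rmd\pi_r\bigr)+\bigl(\int v_n\,\rmd\pi_r-v_n^C\bigr)-\bigl(v^C-v_n^C\bigr)$. The middle term integrates to a $\pi$-independent quantity. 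The first term integrates to $\int (v-v_n)\,\rmd\nu$, which is $\pi$-independent since it is nonnegative and the $\nu$-marginal is fixed. The last term depends only on $r$, not on $\pi$. Each term is nonnegative or bounded below, so summing the integrals is legitimate, possibly with values $+\infty$. This shows that $\int h_\pi\,\rmd\kappa$ is the same for all $\pi\in\Mcal(\kappa,\nu)$.
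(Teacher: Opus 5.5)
\textbf{There is a genuine gap in Step 3.} Steps 1 and 2 are essentially right, up to three small slips. First, $v^C(r)$ can equal $-\infty$; this is harmless. Second, the sharp bound is $v^C(r)\le v(x)-x^2$, which follows from $v^*\star\gamma_r\ge v^*(\cdot+x)$ by Jensen and $\bar\gamma_r=x$. Third, in the case $v\in L^1(\nu)$ of Step 3, it is this integrable \emph{upper} bound on $v^C$ that gives $\int v^C\,\rmd\kappa\in[-\infty,\infty)$, not an affine lower bound.

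The gap is Step 3 for $v\notin L^1(\nu)$, which is the only case needing an argument. Your three-term decomposition relocates the $\infty-\infty$ problem rather than resolving it. The third term enters with a minus sign and $v^C-v_n^C\ge0$, so ``each term is nonnegative or bounded below'' does not license adding the three integrals.

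In fact the decomposition must fail exactly when it matters. If $\int h_\pi\,\rmd\kappa<\infty$ but $v\notin L^1(\nu)$, then $\int(v-v_n)\,\rmd\nu=+\infty$ for every $n$. This forces $\int(v^C-v_n^C)\,\rmd\kappa=+\infty$.

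Your particular truncations make things worse, already pointwise. Since $v_n$ is affine outside $[a_n,b_n]$, $\dom v_n^*$ is a bounded interval. So whenever $\gamma_r$ has unbounded support (e.g.\ support $\R$ or $[0,\infty)$), $v_n^*\star\gamma_r\equiv+\infty$ and $v_n^C(r)=-\infty$. Your middle and last terms are then both $+\infty$. The expected convergence $v_n^C\to v^C$ also fails: take $v(y)=y^2/2$ and Gaussian $\gamma_r$, where $v^C(r)=-\tfrac12\int z^2\,\gamma_r(\rmd z)$ is finite.

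\textbf{The missing idea is to split at $v(x)$ instead of at $v_n$:}
\[
    h_\pi(r)=\Bigl(\int v\,\rmd\pi_r-v(x)\Bigr)+\bigl(v(x)-v^C(r)\bigr).
\]
Both brackets lie in $[0,\infty]$: the first by Jensen since $\bar\pi_r=x$, the second by $v^C(r)\le v(x)-x^2$. So the $\kappa$-integral splits legitimately.

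The second bracket does not involve $\pi$. The first, after averaging over $\ell$, equals $\int\bigl(\int v\,\rmd\chi_x-v(x)\bigr)\mu(\rmd x)$ for the martingale coupling $\chi(\rmd x,\rmd y):=\int\pi_{x,\ell}(\rmd y)\,\kappa(\rmd x,\rmd\ell)$ of $(\mu,\nu)$. Its independence of $\chi$ is \cite[Lemma A.15]{BeJu16}, which the paper invokes.

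Your tangent-line truncation proves this directly, because here it involves only $v$ and never $v^*$. Since $v-v_n$ and $v_n-v_m$ ($n\ge m$) are convex, the Jensen gaps of $v_n$ increase to that of $v$. For each $n$ the integrated gap equals $\int v_n\,\rmd\nu-\int v_n\,\rmd\mu$, which does not depend on the coupling.
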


\begin{proof}
    Let $\pi \in \Mcal(\kappa,\nu)$.
    For $\kappa$-a.e.~$r=(x,\ell)$, we have $(v^*\star\gamma_r)^*(x) \le v(x) - x^2$. By Jensen's inequality,
    \[
        \int v(y)\,\pi_r(\rmd y) - (v^*\star\gamma_r)^*(x) \ge x^2,
    \]
    and so $\Ecal(v) \in [0,\infty]$. Moreover, define $\chi(\rmd x,\rmd y) := \int \pi_{x,\ell}(\rmd y) \, \kappa(\rmd x,\rmd \ell)$, which is a martingale coupling of $(\mu,\nu)$. Then
    \begin{align*}
        \Ecal(v)
        &=
        \int\Bigl(\int v(y)\,\pi_{x,\ell}(\rmd y)-v(x)\Bigr)\kappa(\rmd x,\rmd\ell)
        +
        \int\bigl(v(x)-v^C(x,\ell)\bigr)\,\kappa(\rmd x,\rmd\ell)
        \\
        &=
        \int\Bigl(\int v(y)\,\chi_{x}(\rmd y)-v(x)\Bigr)\mu(\rmd x)
        +
        \int\bigl(v(x)-v^C(x,\ell)\bigr)\,\kappa(\rmd x,\rmd\ell).
    \end{align*}
    \cite[Lemma A.15]{BeJu16} asserts that the first integral depends only on $v$ and the marginals $(\mu,\nu)$, but not on the particular coupling $\chi$. 
    Since the second integral is independent of $\pi$ as well, $\Ecal(v)$ is independent of the choice $\pi \in \Mcal(\kappa,\nu)$.

    Finally, let $c(x):=kx+d$ for some $k,d\in\R$. Then $ (v+c)^*(y)=v^*(y-k)-d$, and hence
    \[
        ((v+c)^*\star\gamma_r)(y)
        =
        (v^*\star\gamma_r)(y-k)-d.
    \]
    Taking conjugates gives
    \[
        ((v+c)^*\star\gamma_r)^*(x)
        =
        (v^*\star\gamma_r)^*(x)+kx+d
        =
        v^C(r)+c(x).
    \]
    Since $\bar\pi_r=x$, we therefore obtain
    \[
        \int (v+c)\,\rmd\pi_r-(v+c)^C(r)
        =
        \int v\,\rmd\pi_r-v^C(r),
    \]
    for $\kappa$-a.e.~$r=(x,\ell)$. Integrating with respect to $\kappa$ yields $\Ecal(v+c)=\Ecal(v)$.
\end{proof}

To obtain a dual optimizer, we will pass to a locally uniform limit of normalized convex potentials on $I_\nu$. The next lemma shows that $\Ecal$ is lower semicontinuous with respect to such sequences.

\begin{lemma}\label{lem:Ecal.lsc}
    Suppose that $(\mu,\nu)$ is irreducible and that \ref{aspt:moments} holds. Let $\hat\psi$ and $(\hat\psi_n)_{n\in\N}$ be dual potentials with $\hat\psi,\hat\psi_n\ge0$ and $\hat\psi(\bar\mu)=\hat\psi_n(\bar\mu)=0$ for every $n$.
    Suppose that $\hat\psi_n\to\hat\psi$ locally uniformly on $I_\nu$.
    Then
    \[
        \Ecal(\hat\psi)
        \le
        \liminf_{n\to\infty}\Ecal(\hat\psi_n).
    \]
\end{lemma}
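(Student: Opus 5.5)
We propose to exploit the decomposition used in the proof of \Cref{lem:properties.Ecal}, which writes $\Ecal$ as the sum of two nonnegative-integrand terms, and to apply Fatou's lemma to each. Fix $\pi\in\Mcal(\kappa,\nu)$ and let $\chi$ be the induced martingale coupling of $(\mu,\nu)$. For any dual potential $v$ we use
\[
    \Ecal(v)=\int\Bigl(\int v\,\rmd\chi_x-v(x)\Bigr)\mu(\rmd x)+\int\bigl(v(x)-v^C(r)\bigr)\,\kappa(\rmd r).
\]
The first integrand is nonnegative by Jensen. The second is bounded below by $x^2$, since $v^C(r)\le v(x)-x^2$, and $x^2$ is $\kappa$-integrable. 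Because $(\mu,\nu)$ is irreducible, $\mu(I_\nu)=1$ and $\mu$ is concentrated on $I$. Hence both integrands are evaluated where $\hat\psi_n\to\hat\psi$ locally uniformly, so $\hat\psi_n(x)\to\hat\psi(x)$ for $\mu$-a.e.\ $x$.

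\emph{First term.} For $\mu$-a.e.\ $x$, the kernel $\chi_x$ may charge the endpoints of $\bar I_\nu$. There we only know that $\hat\psi_n\ge0$ and that $\hat\psi$ is lower semicontinuous, with $\hat\psi$ at an endpoint equal to the limit of its values from inside. We get pointwise $\liminf_n\hat\psi_n(y)\ge\hat\psi(y)$ for all $y\in\bar I_\nu$: inside by uniform convergence, and at endpoints by convexity. Indeed, $\hat\psi_n(y)\ge\hat\psi_n(y')+\partial$-terms; more simply, lsc of $\hat\psi$ combined with monotonicity of convex functions near the boundary, using nonnegativity and the normalization at $\bar\mu$, gives $\hat\psi_n(e)\ge\limsup$ of interior values in the appropriate sense. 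Fatou applied to $\int\hat\psi_n\,\rmd\chi_x$ then gives $\liminf_n\int\hat\psi_n\,\rmd\chi_x\ge\int\hat\psi\,\rmd\chi_x$, since $\hat\psi_n\ge0$. Subtracting the convergent $\hat\psi_n(x)$ shows that the first integrand is pointwise $\liminf$-dominated, and Fatou in $\mu$ applies because the integrand is nonnegative.

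\emph{Second term.} The key claim is pointwise upper semicontinuity of the $C$-transform, $\limsup_n\hat\psi_n^C(r)\le\hat\psi^C(r)$ for $\kappa$-a.e.\ $r$. Since $\hat\psi_n^C(r)=\sup_a\{ax-(\hat\psi_n^*\star\gamma_r)(a)\}$, it suffices to show $\liminf_n(\hat\psi_n^*\star\gamma_r)(a)\ge(\hat\psi^*\star\gamma_r)(a)$ uniformly enough to pass to the sup. For this we would use the following facts:
\begin{itemize}
    \item Locally uniform convergence on $I_\nu$ of convex functions gives epi-convergence of $\hat\psi_n$ to $\hat\psi$ on $\bar I_\nu$ (using the $+\infty$ support convention). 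By Wijsman's theorem this yields epi-convergence, hence pointwise convergence on the interior of the domain, of the conjugates $\hat\psi_n^*\to\hat\psi^*$. Since $\hat\psi_n^*$ is finite everywhere and grows at most linearly with slope in $\bar I_\nu$, we get pointwise convergence on $\R$.
    \item The normalization $\hat\psi_n\ge0=\hat\psi_n(\bar\mu)$ gives the uniform lower bound $\hat\psi_n^*(b)\ge b\bar\mu$. This bound is $\gamma_r$-integrable by \ref{aspt:moments}, so Fatou gives $\liminf_n(\hat\psi_n^*\star\gamma_r)\ge\hat\psi^*\star\gamma_r$ pointwise in $a$.
    \item To upgrade this to the conjugate, we restrict the supremum to $a$ in a compact set. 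This uses that the maximizing $a$ stay bounded: $x\in I$ lies strictly inside $\bar I_\nu$, so $(\hat\psi_n^*\star\gamma_r)(a)-ax$ grows linearly with uniform slopes as $|a|\to\infty$. Lower-semicontinuous convex minorants that converge pointwise $\liminf$ from below, on a compact set, with equi-Lipschitz bounds, give $\limsup\sup\le\sup$.
\end{itemize}

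With this claim, the second integrand satisfies $\liminf_n(\hat\psi_n(x)-\hat\psi_n^C(r))\ge\hat\psi(x)-\hat\psi^C(r)$. Fatou applies with the integrable lower bound $x^2$, and adding the two terms proves the lemma.

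The main obstacle is the second term, specifically the uniform-in-$n$ coercivity needed to localize the supremum over $a$ in the conjugate. This relies on $x$ lying strictly inside the support interval of $\nu$, which irreducibility provides. The endpoint behaviour in the first term is a secondary technical point.
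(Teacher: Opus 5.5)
Your plan matches the paper's. The core is the pointwise bound $\limsup_n\hat\psi_n^C(r)\le\hat\psi^C(r)$, obtained from bounded near-maximizers and Fatou with the minorant $\hat\psi_n^*(b)\ge b\bar\mu$, followed by Fatou in the outer integrals. Splitting through $\chi$ is harmless; the paper applies Fatou directly to $\int\hat\psi_n\,\rmd\pi_r-\hat\psi_n^C(r)\ge x^2$. Your endpoint argument, once cleaned up (monotonicity of $\hat\psi_n$ between $\bar\mu$ and the endpoint, plus continuity of a closed convex function along segments of its domain), is also fine.

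\textbf{The gap.} It lies in passing the $\liminf$ through the supremum over $a$. You justify this by asserting that $\hat\psi_n^*$ is finite everywhere with linear growth, and that $F_n:=\hat\psi_n^*\star\gamma_r$ is equi-Lipschitz on a compact set. Both claims require $\supp\nu$ to be bounded, i.e.\ assumption~\ref{aspt:MSA.compact}. That is not a hypothesis of this lemma, and the lemma is used without it in the proof of \Cref{thm:wmot.bridge.duality}.

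If $I_\nu$ is unbounded, $\hat\psi_n^*$ can equal $+\infty$ off a bounded interval, with unbounded slopes near its ends. For example, take $\hat\psi_n(y)=\sqrt{1+(y-\bar\mu)^2}-1$ on $I_\nu=\R$. So the Lipschitz control you invoke is not available.

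Without it, the pointwise bound $\liminf_n F_n(a)\ge F(a)$, where $F:=\hat\psi^*\star\gamma_r$, does not imply $\liminf_n F_n(a_n)\ge F(a)$ for $a_n\to a$. This fails even for convex functions: take $F_n(a)=(na-1)^2-1$, $F=0$ and $a_n=1/n$.

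\textbf{The fix.} Drop the Lipschitz bounds entirely.
\begin{itemize}
\item If $\limsup_n\hat\psi_n^C(r)=-\infty$, there is nothing to prove.
\item Otherwise, along a subsequence realizing the $\limsup$ the values are bounded below. Your coercivity estimate then bounds the near-maximizers $a_n$, and you may assume $a_n\to a$.
\item For each $y\in I_\nu$, we have $\hat\psi_n^*(a_n+z)\ge(a_n+z)y-\hat\psi_n(y)\to(a+z)y-\hat\psi(y)$. Taking the supremum over $y$ gives $\liminf_n\hat\psi_n^*(a_n+z)\ge\hat\psi^*(a+z)$. This sequential inequality is exactly the $\liminf$ half of the epi-convergence you already invoked.
\item Fatou with the integrable minorant $(a_n+z)\bar\mu$ gives $\liminf_n F_n(a_n)\ge F(a)$. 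Hence $\limsup_n\hat\psi_n^C(r)\le ax-F(a)\le\hat\psi^C(r)$.
\end{itemize}
This is precisely how the paper proceeds.
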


\begin{proof}
    Fix $r=(x,\ell)$ with $x\in I_\nu$, $\gamma_r\in\cP_2(\R)$ and $\bar\gamma_r=x$, as holds for $\kappa$-a.e.~$r$. We first show that
    \[
        \limsup_{n \to \infty} (\hat\psi_n^*\star\gamma_r)^*(x) \le (\hat\psi^*\star\gamma_r)^*(x).
    \]
    Pass to a subsequence along which the limsup is attained as a limit. If this limit is $-\infty$, there is nothing to show. Otherwise, $(\hat\psi_n^*\star\gamma_r)^*(x)$ is bounded below for all sufficiently large $n$.
    
    Choose $\delta > 0$ such that $[x-\delta,x+\delta]\subset I_\nu$. Since $(\hat\psi_n^*\star\gamma_r)^*(z) \le \hat\psi_n(z) - xz$ for every $z \in I_\nu$, $(\hat\psi_n^*\star\gamma_r)^*$ is uniformly bounded above on $[x-\delta,x+\delta]$.
    Let $a_n$ such that $(\hat\psi_n^*\star\gamma_r)^*(x) \le a_n x - (\hat\psi_n^*\star\gamma_r)(a_n) + \frac1n$ for every $n\in\N$.  Since $(\hat\psi_n^*\star\gamma_r)^*(z) \ge a_n z - (\hat\psi_n^*\star\gamma_r)(a_n)$ for every $z \in I_\nu$, the sequence $(a_n)_{n \in \N}$ is bounded. Passing to a further subsequence, let $a_n \to a$. Then, for every $z$ and every $y \in I_\nu$, 
    \[
        \liminf_{n \to \infty} \hat\psi_n^*(a_n+z) \ge \liminf_{n \to \infty} \bigl((a_n+z)y - \hat\psi_n(y)\bigr) = (a+z)y - \hat\psi(y),
    \]
    and so $\liminf_{n \to \infty} \hat\psi_n^*(a_n+z) \ge \hat\psi^*(a+z)$. Moreover, $\hat\psi_n^*(y) \ge y \bar\mu$, and therefore Fatou's lemma gives 
    \[
        \int \hat\psi^*(a+y) \,\gamma_r(\rmd y) \le \liminf_{n \to \infty} \int \hat\psi_n^*(a_n+y) \,\gamma_r(\rmd y).
    \]
    Consequently, 
    \[
        \limsup_{n \to \infty} (\hat\psi_n^*\star\gamma_r)^*(x) 
        \le
        a x - \int \hat\psi^*(a+y) \,\gamma_r(\rmd y) 
        \le 
        (\hat\psi^*\star\gamma_r)^*(x).
    \]
    
    Let $\pi \in \Mcal(\kappa,\nu)$. By convexity and lower semicontinuity, $\hat\psi(y)\le\liminf_n\hat\psi_n(y)$ also at the endpoints of $I_\nu$. Applying Fatou's lemma two more times then yields
    \[
        \Ecal(\hat\psi) 
        = 
        \int \Bigl(\int \hat \psi \,\rmd\pi_{x,\ell} - (\hat\psi^*\star\gamma_r)^*(x)\Bigr) \, \kappa(\rmd x, \rmd \ell)
        \le
        \liminf_{n \to \infty} \Ecal(\hat\psi_n). \qedhere
    \]
\end{proof}

The following elementary lemma will be used for showing uniqueness of the dual optimizer.

\begin{lemma}\label{lem:monotone.transport.uniqueness}
    Let $\gamma\in\cP(\R)$ be atomless and suppose that $\supp\gamma$ is an interval. Let $\psi,v:\R\to\R\cup\{+\infty\}$ be proper, convex and lower semicontinuous. Suppose that, for some $a,b\in\R$ and $\rho\in\cP(\R)$,
    \[
        \bigl(\partial_x\psi^*(a+\cdot)\bigr)_\#\gamma
        =
        \bigl(\partial_x v^*(b+\cdot)\bigr)_\#\gamma
        =
        \rho.
    \]
    Then $\psi-v$ is affine on $I_\rho:=\operatorname{int}\operatorname{co}(\operatorname{supp}\rho)$.
\end{lemma}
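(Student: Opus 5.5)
The plan is to show that the two increasing maps coincide, integrate that equality to relate the conjugates $\psi^*$ and $v^*$ on the relevant interval, and then conjugate back on $I_\rho$ using the subdifferential correspondence.

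\textbf{Step 1: the two maps agree.} Let $J:=\operatorname{int}\supp\gamma$, an open interval. The maps $T_1(z):=\partial_x\psi^*(a+z)$ and $T_2(z):=\partial_x v^*(b+z)$ are increasing. A convex function is differentiable off a countable set, and $\gamma$ is atomless, so both maps are single-valued $\gamma$-a.e. Since $\rho$ is a probability measure, $T_1$ and $T_2$ are finite $\gamma$-a.e.; hence $\partial\psi^*(a+z)\neq\emptyset$ for $\gamma$-a.e.\ $z$. The set of such $z$ is dense in $J$, and $\dom\partial\psi^*$ is convex up to endpoints, so $a+J\subset\operatorname{int}\dom\psi^*$. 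Likewise $b+J\subset\operatorname{int}\dom v^*$.

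By uniqueness of the monotone transport from an atomless measure, both maps equal $F_\rho^{-1}\circ F_\gamma$ $\gamma$-a.e. Hence $T_1=T_2$ $\gamma$-a.e. Two increasing functions that agree on a dense subset of $J$ agree at every point of $J$ where both are continuous. Thus $(\psi^*)'(a+z)=(v^*)'(b+z)$ for Lebesgue-a.e.\ $z\in J$.

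\textbf{Step 2: integrate.} The functions $z\mapsto\psi^*(a+z)$ and $z\mapsto v^*(b+z)$ are convex, hence locally Lipschitz on $J$, and their derivatives agree a.e. Therefore
\[
    \psi^*(s)=v^*(s+b-a)+c\qquad\text{for all } s\in a+J
\]
and some constant $c$. Two convex functions that differ by a constant on an open interval have the same subdifferentials there. So $\partial\psi^*(s)=\partial v^*(s+b-a)$ for every $s\in a+J$, including the multivalued points.

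\textbf{Step 3: conjugate back on $I_\rho$.} Fix $y\in I_\rho$. I claim there is some $s\in a+J$ with $y\in\partial\psi^*(s)$.

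To see this, note that the graph of $\partial\psi^*$ restricted to $a+J$ is a connected monotone curve. Its range $\bigcup_{s\in a+J}\partial\psi^*(s)$ is therefore an interval. This interval contains $T_1(z)$ for $\gamma$-a.e.\ $z$, and these points are $\rho$-full, so they accumulate at both ends of $\co\supp\rho$. Hence the range contains $I_\rho$.

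With such an $s$, we have $s\in\partial\psi(y)$, so $\psi(y)=sy-\psi^*(s)$. By Step 2, also $y\in\partial v^*(s+b-a)$, so
\[
    v(y)=(s+b-a)y-v^*(s+b-a)=(s+b-a)y-\psi^*(s)+c=\psi(y)+(b-a)y+c .
\]
Since $a,b,c$ do not depend on $y$, this shows that $\psi-v$ is affine on $I_\rho$.

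\textbf{Main obstacle.} The step I expect to need the most care is the covering claim in Step 3, together with the domain claim in Step 1. Jumps of $\partial\psi^*$ correspond to gaps of $\supp\rho$ and must be filled by the set-valued subdifferential. The endpoints of $J$ must also be handled, using that we only need the interior $I_\rho$. I would argue both via maximal monotonicity of $\partial\psi^*$: the graph of a maximal monotone operator on $\R$ is a connected curve.
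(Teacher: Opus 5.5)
Your proposal is correct and follows the paper's proof step for step: uniqueness of the monotone transport from the atomless $\gamma$ gives a.e.\ equality of the two maps. This integrates to $\psi^*(a+\cdot)=v^*(b+\cdot)+c$ on $\operatorname{int}\supp\gamma$, every $y\in I_\rho$ lies in some $\partial\psi^*(a+z)$, and Fenchel equality finishes the argument. The differences are cosmetic (the paper absorbs the shifts by passing to $\psi-a\,\mathrm{id}$ and $v-b\,\mathrm{id}$), and you add care on the domain inclusion $a+J\subset\operatorname{int}\dom\psi^*$ and on the covering step, both of which the paper leaves implicit.
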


\begin{proof}
    Set $\psi_a:=\psi-a\,\operatorname{id}$ and $v_b:=v-b\,\operatorname{id}$, so that $\psi_a^*=\psi^*(a+\cdot)$ and
    $v_b^*=v^*(b+\cdot)$.
    Since $\gamma$ is atomless, the monotone transport from $\gamma$ to $\rho$ is $\gamma$-a.e.~unique \cite{Mc95}. Hence $\partial_x\psi_a^*=\partial_x v_b^*$, $\gamma$-a.e.
    As $J:=\supp\gamma$ is an interval, every set of full $\gamma$-measure is dense in $J$. Monotonicity of the derivatives of convex functions therefore yields $\psi_a^*=v_b^*+c$ on $\operatorname{int} J$ for some $c\in\R$.
    
    Now fix $y\in I_\rho$. Then $\rho((-\infty,y))>0$ and $\rho((y,\infty))>0$. 
    Since $\rho=(\partial_x\psi_a^*)_\#\gamma$, monotonicity of $\partial\psi_a^*$ yields some $z\in \operatorname{int} J$ such that $y\in\partial\psi_a^*(z)$. 
    As $\psi_a^*=v_b^*+c$ on $\operatorname{int} J$, also $y\in\partial v_b^*(z)$. Fenchel equality therefore gives
    \[
        \psi_a(y)+\psi_a^*(z)=yz
        =v_b(y)+v_b^*(z),
    \]
    and hence $\psi_a(y)-v_b(y)=-c$. Thus $\psi(y)-v(y)=(a-b)y-c$ for every $y \in I_\rho$.
\end{proof}

The shift in the $R$-Bass representation will be obtained from the subgradient of $(\psi^*\star\gamma_r)^*$. Whenever $\supp \gamma_r$ is either $\R$ or $[0,\infty)$ for $\kappa$-a.e.~$r$, this subgradient is in fact a singleton on the relevant interval, as shown next.

\begin{lemma}\label{lem:c-conjugate.differentiable}
    Let $\gamma\in\cP_1(\R)$ have support either $\R$ or $[0,\infty)$, and let $\psi:\R\to\R\cup\{+\infty\}$ be proper, convex and lower semicontinuous.
    Suppose that $(\psi^*\star\gamma)^*$ is proper.
    Then $(\psi^*\star\gamma)^*(z)\le\psi(z)-z\bar\gamma$ for every $z\in\R$ and, in particular, $\dom\psi\subset\dom(\psi^*\star\gamma)^*$.
    Moreover, $(\psi^*\star\gamma)^*$ is differentiable on $\operatorname{int}\dom\psi$.
\end{lemma}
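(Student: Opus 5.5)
The proof has three parts: the upper bound, then properness and finiteness on $\dom\psi$, then differentiability, which is the heart of the matter.

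\textbf{Upper bound.} Since $\psi$ is proper, convex and lower semicontinuous, Fenchel--Moreau gives $\psi^*(a+y)\ge (a+y)z-\psi(z)$ for every $z\in\R$. Integrating against $\gamma$ (which lies in $\cP_1$) yields
\[
(\psi^*\star\gamma)(a)\ge az+z\bar\gamma-\psi(z).
\]
Hence $az-(\psi^*\star\gamma)(a)\le \psi(z)-z\bar\gamma$. Taking the supremum over $a$ gives $(\psi^*\star\gamma)^*(z)\le\psi(z)-z\bar\gamma$. Because $(\psi^*\star\gamma)^*$ is proper, it never takes the value $-\infty$, so it is finite on $\dom\psi$. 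This proves the inclusion $\dom\psi\subset\dom(\psi^*\star\gamma)^*$.

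\textbf{Reduction for differentiability.} Set $\phi:=\psi^*\star\gamma$, which is convex and lower semicontinuous; it is proper, since otherwise $\phi^*$ would not be. A convex function $\phi^*$ on $\R$ fails to be differentiable at an interior point $z$ exactly when $\partial\phi^*(z)=[a_1,a_2]$ with $a_1<a_2$. By Fenchel duality this means $z\in\partial\phi(a)$ for all $a\in[a_1,a_2]$, so $\phi$ is affine with slope $z$ on $[a_1,a_2]$. It therefore suffices to show that $\phi$ is \emph{not} affine on any nondegenerate interval contained in $\dom\phi$ whose subgradient slope $z$ lies in $\operatorname{int}\dom\psi$.

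\textbf{Main step: strict convexity along the relevant region.} For $a_1<a_2$ in $\dom\phi$, the function $\psi^*(\cdot+y)$ is finite at both endpoints for $\gamma$-a.e.\ $y$. If $\phi$ is affine on $[a_1,a_2]$, then, since each $a\mapsto\psi^*(a+y)$ is convex and the convexity gap integrates to zero, $\psi^*$ must be affine on $[a_1+y,a_2+y]$ for $\gamma$-a.e.\ $y$. Because $\gamma$-a.e.\ points are dense in $\supp\gamma$, which is $\R$ or $[0,\infty)$, lower semicontinuity and convexity force $\psi^*$ to be affine on $[a_1+\inf\supp\gamma,\infty)$. In the $[0,\infty)$ case this is $[a_1,\infty)$; in the $\R$ case it is all of $\R$.

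Let $s$ be the slope. Then $\partial\psi^*(b)=\{s\}$ for all $b>a_1$. I will check that this slope must be $s=z$: the identity $\partial\phi=\int\partial\psi^*(\cdot+y)\,\gamma(\rmd y)$ holds at interior points, and there every subgradient equals $s$, so $z=s$.

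Affine behaviour of $\psi^*$ on a half-line $(c,\infty)$ with slope $z$ means that $\psi$ has a subdifferential at $z$ that contains the whole half-line $(c,\infty)$. Equivalently, $\psi(w)=+\infty$ for all $w>z$, so $z$ is the right endpoint of $\dom\psi$. This contradicts $z\in\operatorname{int}\dom\psi$. In the $\R$ case the same argument applies in both directions: $\psi$ is then finite only at $z$, again a contradiction.

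I expect the delicate points to be the measure-theoretic justification of ``affine $\gamma$-a.e.\ implies affine on the whole translated support,'' and the handling of $\dom\phi$ boundaries. The latter requires ensuring that the interval $[a_1,a_2]$ can be taken inside $\operatorname{int}\dom\phi$, shrinking it slightly if necessary.
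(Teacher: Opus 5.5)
Your proposal is correct and follows essentially the same route as the paper: the Fenchel--Young (equivalently Jensen) upper bound, the reduction to $\psi^*\star\gamma$ being affine with slope $z$ on $[a_1,a_2]$, the pointwise vanishing of the convexity gap that makes $\psi^*$ affine on $a_1+\supp\gamma$, and the resulting contradiction $z=\sup\dom\psi$. The only minor difference is that you identify the slope through the subdifferential-of-integral formula, whereas the paper integrates $\psi^*(a+y)=s(a+y)+d$ directly to get $(\psi^*\star\gamma)(a)=sa+s\bar\gamma+d$ on $[a_1,a_2]$; also, no shrinking of $[a_1,a_2]$ is needed, since $a_1,a_2\in\partial(\psi^*\star\gamma)^*(z)$ already lie in $\dom(\psi^*\star\gamma)$.
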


\begin{proof}
    Since $(\psi^*\star\gamma)^*$ is proper, so is $\psi^*\star\gamma$.
    For every $a\in\R$, Jensen's inequality gives $(\psi^*\star\gamma)(a)\ge\psi^*(a+\bar\gamma)$.
    Taking convex conjugates yields $(\psi^*\star\gamma)^*(z)\le\psi(z)-z\bar\gamma$ for every $z\in\R$.
    In particular, $\dom\psi\subset\dom(\psi^*\star\gamma)^*$.

    Suppose that $(\psi^*\star\gamma)^*$ is not differentiable at some $x\in \operatorname{int}\dom\psi$, and choose $a_1<a_2$ in $\partial(\psi^*\star\gamma)^*(x)$.
    By Fenchel--Moreau,
    \[
        (\psi^*\star\gamma)(a_1)-a_1x
        =
        -(\psi^*\star\gamma)^*(x)
        =
        (\psi^*\star\gamma)(a_2)-a_2x.
    \]
    Since $a\mapsto(\psi^*\star\gamma)(a)-ax$ is convex and attains its minimum at both $a_1$ and $a_2$, it is constant on $[a_1,a_2]$.
    Thus there exists $b\in\R$ such that
    \begin{equation}\label{eq:c-conjugate.differentiable:affine}
        (\psi^*\star\gamma)(a)=xa+b,
        \qquad a\in[a_1,a_2].
    \end{equation}
    In particular,
    \[
        \int\psi^*\Bigl(\frac{a_1+a_2}{2}+z\Bigr)\,\gamma(\rmd z)
        =
        \int\Bigl(
            \frac12\psi^*(a_1+z)
            +
            \frac12\psi^*(a_2+z)
        \Bigr)\,\gamma(\rmd z).
    \]
    By convexity, equality holds pointwise for $\gamma$-a.e.~$z$, and hence $\psi^*$ is affine on $[a_1+z,a_2+z]$ for every such $z$.

    Set $J:=a_1+\supp\gamma$.
    The set of such $z$ is dense in $\supp\gamma$, so the interiors of these intervals cover $\operatorname{int}J$.
    The affine representations agree on overlaps, and therefore $\psi^*$ is affine on $\operatorname{int}J$ and, by lower semicontinuity, on $J$.
    Write $\psi^*(u)=su+d$ for $u\in J$.
    Since $J$ is unbounded above, Fenchel--Moreau gives $\psi(y)\ge\sup_{u\in J}\{u(y-s)-d\}=+\infty$ for every $y>s$.
    Hence $\sup\dom\psi\le s$.
    On the other hand, for every $u\in\operatorname{int}J$, we have $s\in\partial\psi^*(u)$ and hence $u\in\partial\psi(s)$, so $s\in\dom\psi$.
    It follows that $s=\sup\dom\psi$.

    Since $x\in\operatorname{int}\dom\psi$, we have $x<s$.
    Moreover, $a+\supp\gamma\subset J$ for every $a\in[a_1,a_2]$, so
    \[
        (\psi^*\star\gamma)(a)
        =
        \int\psi^*(a+z)\,\gamma(\rmd z)
        =
        sa+s\bar\gamma+d,
        \qquad a\in[a_1,a_2].
    \]
    Comparing with \eqref{eq:c-conjugate.differentiable:affine} gives $s=x$, a contradiction.
\end{proof}

We now turn to the proof of \Cref{thm:wmot.bridge.duality}. We first prove attainment of the relaxed dual problem and show that a dual optimizer $\psi$ generates, through $g=\partial_x\psi^*$, an $R$-Bass martingale attaining the static value. The same structure also yields the converse: if an admissible $R$-Bass martingale is generated by an increasing function $g$, then the corresponding convex potential $\psi$ with $g=\partial_x\psi^*$ is a dual optimizer.

\begin{proof}[Proof of \Cref{thm:wmot.bridge.duality}]
    By \Cref{thm:wmot.bridge.existence.uniqueness}, there exists $\hat\pi\in\Mcal(\kappa,\nu)$ attaining the static problem. Together with \Cref{lem:bridge.no.duality.gap}, this gives
    \begin{equation}\label{ineq:proof:bridge.initial}
        P(\kappa,\nu)
        =
        \int\MCov(\hat\pi_r,\gamma_r)\,\kappa(\rmd r)
        =
        \inf_{\substack{v\in L^1(\nu)\\ v\ {\rm cvx,\ lsc}}}\Ecal(v).
    \end{equation}
    
    We first prove dual attainment in~\eqref{eq:wmot.bridge.dual}. Let $(\psi_n)_{n\in\N}$ be a minimizing sequence in \eqref{ineq:proof:bridge.initial}, chosen according to the support convention. 
    Since $\Ecal$ is invariant under addition of affine functions by \Cref{lem:properties.Ecal}, we may normalize it as follows: Choose $s_n\in\partial\psi_n(\bar\mu)$ and set $c_n(x):=\psi_n^*(s_n)-s_nx$ as well as $\hat\psi_n:=\psi_n+c_n$.
    Then $\hat\psi_n\geq0$, $\hat\psi_n(\bar\mu)=0$, and $\Ecal(\hat\psi_n)=\Ecal(\psi_n)$.
    
    To extract a converging subsequence, we use that $(\mu,\nu)$ is irreducible. By \cite[Proposition 3.3]{HaJoLoObPa25}, there exists a martingale coupling $\pi'$ of $(\mu,\nu)$ whose conditional laws charge every open set charged by $\nu$. 
    We regard it as an element of $\Mcal(\kappa,\nu)$ by setting $\pi'_{x,\ell}:=\pi'_x$. Since $(\hat\psi_n^*\star\gamma_{x,\ell})^*(x) \leq\hat\psi_n(x)-x^2$, \Cref{lem:properties.Ecal} gives
    \begin{equation}\label{inequ:proof:thm.E&U.controlled.potentials}
        0
        \leq
        \sup_n \int \Bigl( \int\hat\psi_n\,\rmd\pi'_x-\hat\psi_n(x) \Bigr) \, \mu(\rmd x)
        \leq
        \sup_n\Ecal(\hat\psi_n) - \int x^2\,\mu(\rmd x)
        <\infty.
    \end{equation}
    The tightness result \cite[Lemma 3.9]{HaJoLoObPa25} states that normalized proper, convex and lower semicontinuous functions satisfying this bound are locally bounded on $I_\nu$. After passing to a subsequence, we therefore obtain a dual potential $\psi$ such that
    $\hat\psi_n\to\psi$ locally uniformly on $I_\nu$.
    By \Cref{lem:Ecal.lsc},
    \[
        \Ecal(\psi)
        \leq
        \liminf_{n\to\infty}\Ecal(\hat\psi_n)
        =
        \int\MCov(\hat\pi_r,\gamma_r)\,\kappa(\rmd r).
    \]
    Conversely, Fenchel's inequality gives $\MCov(\hat\pi_r,\gamma_r) \leq \int\psi\,\rmd\hat\pi_r-\psi^C(r)$, and hence the reverse inequality after integration. Thus $\psi$ attains \eqref{eq:wmot.bridge.dual}, and moreover
    \begin{equation}\label{eq:proof:bridge.conditional.equality}
        \MCov(\hat\pi_r,\gamma_r)
        =
        \int\psi \,\rmd\hat\pi_r
        -
        (\psi^*\star\gamma_r)^*(x)
    \end{equation}
    for $\kappa$-a.e.~$r=(x,\ell)$.
    
    \medskip
    
    We next use \eqref{eq:proof:bridge.conditional.equality} to construct the optimal $R$-Bass martingale. 
    Since $x\in I_\nu\subset\operatorname{int}\dom\psi$, the subdifferential $\partial(\psi^*\star\gamma_r)^*(x)$ is nonempty. Choose measurably $a(r)\in\partial(\psi^*\star\gamma_r)^*(x)$.
    Fenchel equality gives
    \[
        \MCov(\hat\pi_r,\gamma_r)
        =
        \int\bigl(\psi(y)-a(r)y\bigr)\,\hat\pi_r(\rmd y)
        +
        \int\psi^*(a(r)+z)\,\gamma_r(\rmd z).
    \]
    Hence $\psi-a(r)\operatorname{id}$ is a dual optimizer of $\MCov(\hat\pi_r,\gamma_r)$.
    Let $g$ be the left derivative of $\psi^*$ on $J:=\operatorname{int}\dom\psi^*$, extended monotonically to $\R$ by the values $-\infty$ and $+\infty$ to the left and right of $J$, respectively. The preceding equality implies that $\psi^*(a(r)+z)$ is finite for $\gamma_r$-a.e.~$z$. By \ref{aspt:atomless}, $a(r)+z$ therefore belongs to $J$ and avoids the countable set of nondifferentiability points of $\psi^*$ for $\gamma_r$-a.e.~$z$. Equality in the preceding covariance duality thus yields, for $\kappa$-a.e.~$r$,
    \begin{equation}\label{eq:proof:bridge.existence:char.R-Bass}
        \bigl(g(a(r)+\cdot)\bigr)_\#\gamma_r
        =
        \hat\pi_r.
    \end{equation}
    Set $A:=a(R_0)$ and $X_t:=\E[g(A+Y_1)\mid A,R_t]$.
    By \eqref{eq:proof:bridge.existence:char.R-Bass}, $\Law(R_0,X_1)=\hat\pi$ and $g(A+Y_1)\in L^2$. Since $\int y \, \hat\pi_r(\rmd y)=x$, we have $X_0=Y_0$ a.s., and $X_1\sim\nu$.
    Thus $X\in\Acal(\kappa,\nu)$ and is an $R$-Bass martingale. Moreover, $\Law(X_1,Y_1\mid R_0=r)$ is the monotone coupling of $\hat\pi_r$ and $\gamma_r$, and hence
    \[
        \E[X_1Y_1]
        =
        \int\MCov(\hat\pi_r,\gamma_r)\,\kappa(\rmd r)
        =
        P(\kappa,\nu).
    \]
    In particular, $X$ attains \eqref{eq:wmot.bridge}.
    
    \medskip
    
    Conversely, let $X\in\Acal(\kappa,\nu)$ be an $R$-Bass martingale, generated by $g$ and $A=a(R_0)$, and set $\pi:=\Law(R_0,X_1)$.
    Let $h$ be the lower semicontinuous convex extension of a primitive of $g$ on $J:=\operatorname{int}\{g\in\R\}$, with $h=+\infty$ outside $\bar J$. By \ref{aspt:atomless} and finiteness of $g(A+Y_1)$, $J$ is nonempty and $a(r)+z\in J$ for $\gamma_r$-a.e.~$z$, for $\kappa$-a.e.~$r$.
    Writing $y=g(a(r)+z)$, we have $y\in\partial h(a(r)+z)$ and hence
    \[
        h^*(y)+h(a(r)+z)=(a(r)+z)y.
    \]
    In particular, $h^*$ is finite $\nu$-a.e., and hence on $I_\nu$. Define $\psi$ to agree with $h^*$ on $\bar I_\nu$ and to equal $+\infty$ elsewhere. Then $\psi$ is a dual potential. Since $y\in\bar I_\nu$ almost surely, the preceding equality implies $\psi^*(a(r)+z)=h(a(r)+z)$ and
    \begin{equation}\label{eq:proof:bridge.converse.fenchel}
        \psi(y)+\psi^*(a(r)+z)=(a(r)+z)y,
        \qquad y\in\partial\psi^*(a(r)+z).
    \end{equation}

    Fix $r=(x,\ell)$ outside a $\kappa$-null set. The product $(a(r)+z)y$ is $\gamma_r$-integrable by the conditional second-moment bounds. Since $\psi$ and $\psi^*$ admit affine lower bounds, \eqref{eq:proof:bridge.converse.fenchel} shows that both terms on its left-hand side are integrable. Integrating the subgradient inequality and using $\int y\,\pi_r(\rmd y)=x$ gives
    \[
        x\in\partial(\psi^*\star\gamma_r)(a(r)),
        \qquad
        \psi^C(r)=a(r)x-\int\psi^*(a(r)+z)\,\gamma_r(\rmd z).
    \]
    Moreover, the increasing map $z\mapsto g(a(r)+z)$ induces the monotone coupling of $\gamma_r$ and $\pi_r$. Integrating \eqref{eq:proof:bridge.converse.fenchel} therefore yields
    \[
        \MCov(\pi_r,\gamma_r)
        =
        \int\psi\,\rmd\pi_r-\psi^C(r).
    \]
    By \Cref{lem:properties.Ecal}, we may evaluate $\Ecal(\psi)$ using this particular $\pi$. Hence
    \[
        \int\MCov(\pi_r,\gamma_r)\,\kappa(\rmd r)
        =
        \Ecal(\psi).
    \]
    Since the static and dual values coincide, $\psi$ is a dual minimizer and $\pi$ is the static maximizer. Finally, conditionally on $R_0=r$, the pair
    $(X_1,Y_1)$ is precisely the monotone coupling above, and therefore
    \[
        \E[X_1Y_1]
        =
        \int\MCov(\pi_r,\gamma_r)\,\kappa(\rmd r)
        =
        P(\kappa,\nu).
    \]
    Thus $X$ attains \eqref{eq:wmot.bridge}.
    
    \medskip
    
    It remains to prove uniqueness of the dual minimizer under the additional assumption that $\supp \gamma_r$ is either $\R$ or $[0,\infty)$.
    Let $v$ be another minimizer of \eqref{eq:wmot.bridge.dual}. By \Cref{lem:c-conjugate.differentiable}, $b(r):=\partial_x(v^*\star\gamma_r)^*(x)$ is well defined for $\kappa$-a.e.~$r=(x,\ell)$. As above,
    \[
        \bigl(\partial_x v^*(b(r)+\cdot)\bigr)_\#\gamma_r
        =
        \hat\pi_r
        =
        \bigl(\partial_x\psi^*(a(r)+\cdot)\bigr)_\#\gamma_r.
    \]
    By \Cref{lem:monotone.transport.uniqueness}, $\psi-v$ is affine on $I_r:=\operatorname{int}\operatorname{co}(\supp\hat\pi_r)$.
    Fix $y\in I_\nu$. Irreducibility gives
    \[
        0
        <
        \int|z-y|\,\nu(\rmd z) - \int|z-y|\,\mu(\rmd z)
        =
        \int \Bigl( \int|z-y|\,\hat\pi_{x,\ell}(\rmd z)-|x-y| \Bigr) \kappa(\rmd x,\rmd\ell).
    \]
    The integrand is nonnegative by Jensen's inequality, so it is strictly positive on a set of positive $\kappa$-measure. For every such $r$, $\hat\pi_r(-\infty,y) > 0$ and $\hat\pi_r(y,\infty) > 0$ as otherwise $z \mapsto |z-y|$ is affine on its support and Jensen's inequality is an equality. Therefore $y\in I_r$ for some $r$, and thus every $y\in I_\nu$ has a neighbourhood on which $\psi-v$ is affine. Since $I_\nu$ is an interval, $\psi-v$ is affine on $I_\nu$. By convexity and lower semicontinuity, this identifies the two dual optimizers on $\co\supp(\nu)$ up to addition of an affine function.
\end{proof}

\section{Multi-asset calibration with SKR}
\label{sec:multi.SKR}
\label{subsec:bridge.multi}

Let $R=(Y,L)$ be a reference process on $\Rd\times\mathsf E$, Markov with respect to the reference filtration $(\mathcal F_t)$, whose price coordinates $Y=(Y^1,\ldots,Y^d)$ form a square-integrable martingale.
We fix one calibration interval, rescale it to $[0,1]$, and start $R$ with law $\kappa\in\cP_2(\Rd\times\mathsf E)$.
As in the single-asset case, $(\mathcal F_t)$ is the usual augmentation of the natural filtration generated by $R$.
Write $\mu^j:=\pr^{X^j}_\#\kappa$ and let $\nu^1,\ldots,\nu^d\in\cP_2(\R)$ be the prescribed terminal marginals.

The class $\Acal(\kappa,\nu^1,\ldots,\nu^d)$ consists of square-integrable $\Rd$-valued martingales $X$ on the same filtered probability space as $R$, satisfying $X_0=Y_0$ and $X_1^j\sim\nu^j$ for every $j=1,\ldots,d$.
We emphasize that only the 1-dimensional terminal marginals are prescribed, but not the joint law of $X_1$.
As before, minimizing $\E[\sum_{j=1}^d[X^j-Y^j]_1]$ is equivalent to
\begin{equation}\label{eq:multi-bridge}\tag{multi-Bridge}
    P(\kappa,\nu^1,\ldots,\nu^d)
    :=
    \sup_{X\in\Acal(\kappa,\nu^1,\ldots,\nu^d)}
    \E[X_1\cdot Y_1].
\end{equation}
For $r=(x^1,\ldots,x^d,\ell) \in \R^d \times \mathsf E$, write
\[
    \gamma_r:=\Law(Y_1\mid R_0=r),
    \qquad
    \gamma_r^j:=\Law(Y_1^j\mid R_0=r).
\]
Thus the reference law and its one-dimensional projections are conditioned on the full initial state.

For an admissible bridge $X$, the coupling $\pi^j:=\Law(R_0,X_1^j)$ has disintegration $\pi^j(\rmd r,\rmd y)=\kappa(\rmd r)\pi_r^j(\rmd y)$ with
\[
    \int y\,\pi_r^j(\rmd y)=x^j
    \quad\text{for $\kappa$-a.e.~$r$.}
\]
We denote by $\Mcal^j(\kappa,\nu^j)$ the set of couplings in $\Cpl(\kappa,\nu^j)$ satisfying this martingale property.
For fixed $\kappa$, both the objective and the terminal constraints separate across assets.
We therefore apply the single-asset construction to each coordinate, including the remaining reference price coordinates in the latent state, to obtain the following result.

\begin{theorem}[Existence and uniqueness for the primal problem]
\label{thm:multi-bridge.existence}
    Suppose that \ref{aspt:lsc}, \ref{aspt:moments} and \ref{aspt:atomless} hold coordinatewise, and that $\mu^j\le_{\rm cvx}\nu^j$ for every $j=1,\ldots,d$. Then
    \begin{equation}\label{eq:multi-bridge.weak}\tag{multi-wBridge}
        P(\kappa,\nu^1,\ldots,\nu^d)
        =
        \sum_{j=1}^d
        \max_{\pi^j\in\Mcal^j(\kappa,\nu^j)}
        \int\MCov(\pi_r^j,\gamma_r^j)\,\kappa(\rmd r).
    \end{equation}
    Every single-asset problem admits a unique optimizer $\hat\pi^j$. The dynamic problem admits an optimizer, which is unique in law.
\end{theorem}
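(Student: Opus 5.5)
The argument reduces the multi-asset problem to $d$ single-asset problems on a common probability space and then invokes \Cref{thm:wmot.bridge.existence.uniqueness} coordinatewise. For each $j$, regard $R=(Y^j,(Y^{-j},L))$ as a single-asset reference model. Its price coordinate is $Y^j$ and its latent state is $\tilde L^j:=(Y^{-j},L)$, taking values in the Polish space $\R^{d-1}\times\mathsf E$. The filtration is the same, the Markov property is unchanged, and the conditional terminal law given $R_0=r$ is exactly $\gamma^j_r$. Hence the coordinatewise versions of \ref{aspt:lsc}, \ref{aspt:moments} and \ref{aspt:atomless} are precisely the single-asset hypotheses for this reformulated model. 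The initial law is $\kappa$, whose first marginal is $\mu^j$, and by assumption $\mu^j\le_{\rm cvx}\nu^j$.

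\textbf{Upper bound.} Let $X\in\Acal(\kappa,\nu^1,\ldots,\nu^d)$. Then $\E[X_1\cdot Y_1]=\sum_j\E[X^j_1Y^j_1]$, and each $X^j$ lies in the single-asset class $\Acal(\kappa,\nu^j)$ of the reformulated model: it is a square-integrable $(\mathcal F_t)$-martingale with $X^j_0=Y^j_0$ and $X^j_1\sim\nu^j$. Conditioning on $R_0$ as in the proof of \Cref{thm:wmot.bridge.existence.uniqueness} gives
\[
\E[X^j_1Y^j_1]\le\int\MCov(\pi^j_r,\gamma^j_r)\,\kappa(\rmd r),\qquad \pi^j:=\Law(R_0,X^j_1)\in\Mcal^j(\kappa,\nu^j).
\]
Summing over $j$ shows that $P$ is at most the right-hand side of \eqref{eq:multi-bridge.weak}. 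The single-asset theorem also provides, for each $j$, existence and uniqueness of the static maximizer $\hat\pi^j$.

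\textbf{Attainment.} The key observation is that the single-asset optimizers are all realized as functionals of the same process $R$. By \ref{aspt:atomless}, define for each $j$
\[
X^j_1:=F^{-1}_{\hat\pi^j_{R_0}}\bigl(F_{\gamma^j_{R_0}}(Y^j_1)\bigr),\qquad X^j_t:=\E[X^j_1\mid\mathcal F_t].
\]
The vector $X=(X^1,\ldots,X^d)$ is then a square-integrable $\R^d$-valued martingale on the reference space. It satisfies $X_0=Y_0$ by the barycenter constraints and $X^j_1\sim\nu^j$. Moreover, $\E[X^j_1Y^j_1]=\int\MCov(\hat\pi^j_r,\gamma^j_r)\,\kappa(\rmd r)$ because, conditionally on $R_0$, the pair $(X^j_1,Y^j_1)$ is the monotone coupling. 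Summing over $j$ gives equality in \eqref{eq:multi-bridge.weak} and shows that the dynamic problem is attained. Note that no constraint couples the coordinates, so placing all of them on the same probability space is legitimate.

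\textbf{Uniqueness.} Let $X$ be any dynamic optimizer. Since the sum of the coordinatewise upper bounds is attained, each coordinate must attain its own bound, so $X^j$ is optimal for the single-asset problem. By the uniqueness part of \Cref{thm:wmot.bridge.existence.uniqueness}, $\Law(R_0,X^j_1)=\hat\pi^j$, and the conditional coupling of $(X^j_1,Y^j_1)$ given $R_0$ is the unique monotone one. Hence $X^j_1$ is almost surely the function of $(R_0,Y^j_1)$ displayed above. All coordinates are therefore determined almost surely as functions of $R$, so $X_t=\E[X_1\mid\mathcal F_t]$ is unique, and in particular its joint law with $R$ is unique.

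\textbf{Main obstacle.} The delicate point is checking that the reformulation with latent state $(Y^{-j},L)$ satisfies the standing framework: the filtration must be the one generated by $R$, and the lower semicontinuity in \ref{aspt:lsc} must hold on the enlarged state space. Both follow directly because the full state $r$ is unchanged; only its labelling into price and latent parts differs.
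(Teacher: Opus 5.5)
Your proposal is correct and follows essentially the same route as the paper: you bound each coordinate by its single-asset static value, realize all $d$ monotone-coupling optimizers $F^{-1}_{\hat\pi^j_{R_0}}\circ F_{\gamma^j_{R_0}}(Y^j_1)$ simultaneously on the common reference space, and deduce uniqueness from coordinatewise optimality together with uniqueness of the monotone coupling under \ref{aspt:atomless}. Your explicit check that the relabelled model with latent state $(Y^{-j},L)$ satisfies the single-asset hypotheses only spells out what the paper states in passing before the theorem.
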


\begin{proof}
    Apply \Cref{thm:wmot.bridge.existence.uniqueness} to each single-asset problem, and denote its unique static optimizer by $\hat\pi^j$.
    Each coordinate of an admissible multi-asset bridge is admissible for the corresponding single-asset problem, so
    \begin{equation}\label{ineq:proof:multi-bridge.existence:trivial}
        P(\kappa,\nu^1,\ldots,\nu^d)
        \leq
        \sum_{j=1}^d
        \int\MCov(\hat\pi_r^j,\gamma_r^j)\,\kappa(\rmd r).
    \end{equation}
    On the common filtered probability space, define
    \[
        T_r^j:=F^{-1}_{\hat\pi_r^j}\circ F_{\gamma_r^j},
        \qquad
        X_t^j:=\E[T_{R_0}^j(Y_1^j)\mid\mathcal F_t],
        \qquad j=1,\ldots,d.
    \]
    By \ref{aspt:atomless}, $\Law(X_1^j\mid R_0=r)=\hat\pi_r^j$. The barycenter constraints give $X_0=Y_0$, and each coordinate has terminal law $\nu^j$ and attains its single-asset problem.
    Thus $X=(X^1,\ldots,X^d)$ is admissible and attains the bound in \eqref{ineq:proof:multi-bridge.existence:trivial}.

    Finally, let $\tilde X$ be another optimizer.
    It must attain every single-asset optimum, so $\Law(R_0,\tilde X_1^j)=\hat\pi^j$ for every $j$.
    Its conditional terminal coupling with $Y_1^j$ must also be optimal. By \ref{aspt:atomless}, this is the unique monotone coupling, and hence
    \[
        \tilde X_1^j=T_{R_0}^j(Y_1^j)=X_1^j
        \quad\text{a.s.,}
        \qquad j=1,\ldots,d.
    \]
    Since $X$ and $\tilde X$ are martingales with respect to the same filtration, their c\`adl\`ag versions are indistinguishable.
    Hence, the optimizer is unique in law.
\end{proof}

We next characterize optimal bridges as multi-$R$-Bass martingales through the coordinatewise dual problems.
Write $I_{\nu^j}:=\operatorname{int}\co(\supp\nu^j)$. A dual potential for $\nu^j$ is a proper, convex and lower semicontinuous function $v$ that is finite on $I_{\nu^j}$ and equal to $+\infty$ outside $\bar I_{\nu^j}$.
For such a potential, set
\[
    (v^*\star\gamma_r^j)(a):=\int v^*(a+z)\,\gamma_r^j(\rmd z),
    \qquad
    v^{C,j}(r):=(v^*\star\gamma_r^j)^*(x^j).
\]
For an irreducible pair $(\mu^j,\nu^j)$, the corresponding relaxed objective is
\[
    \Ecal^j(v)
    :=
    \int\Bigl(
        \int v\,\rmd\pi_r^j
        -
        v^{C,j}(r)
    \Bigr)\,\kappa(\rmd r),
    \qquad \pi^j\in\Mcal^j(\kappa,\nu^j).
\]
By Lemma~\ref{lem:properties.Ecal}, $\Ecal^j(v)$ is well-defined, independent of $\pi^j$ and invariant under addition of affine functions.

\begin{definition}[Multi-$R$-Bass martingale]\label{def:multi.R.Bass}
    An admissible martingale $X\in\Acal(\kappa,\nu^1,\ldots,\nu^d)$ is called a \emph{multi-$R$-Bass martingale}, or a \emph{separable $R$-Bass martingale}, if there exist increasing functions $g^1,\ldots,g^d:\R\to[-\infty,\infty]$ and measurable shifts $a^j:\Rd\times\mathsf E\to\R$ such that
    \[
        (g^j\star\gamma_r^j)(a^j(r))=x^j
        \quad\text{for $\kappa$-a.e.~$r$,}
    \]
    and, writing $A^j:=a^j(R_0)$ and $A:=(A^1,\ldots,A^d)$, we have
    \begin{equation}\label{eq:multi.R.Bass.definition}
        X_t^j
        =
        \E[g^j(A^j+Y_1^j)\mid\mathcal F_t]
        =
        \E[g^j(A^j+Y_1^j)\mid A,R_t],
        \qquad 0\leq t\leq1,
    \end{equation}
    for every $j=1,\ldots,d$.
\end{definition}

The shifts enforce the martingale property. 
As in the single-asset case, we write $a^j(r)=(g^j\star\gamma_r^j)^{-1}(x^j)$, where the inverse denotes a measurable choice of the inverse.
The $L^2$ condition ensures that the convolution at this shift is absolutely convergent for $\kappa$-a.e.~$r$.
The second equality in \eqref{eq:multi.R.Bass.definition} follows from the Markov property of the reference process.
Although each generating function $g^j$ is one-dimensional, its shift $a^j$ may depend on all components of the initial state.

\begin{theorem}[Duality and characterization of multi-$R$-Bass martingales]
\label{thm:multi-bridge.duality}
    Suppose that \ref{aspt:lsc}, \ref{aspt:moments} and \ref{aspt:atomless} hold coordinatewise, and that $(\mu^j,\nu^j)$ is irreducible for every $j=1,\ldots,d$. Then
    \begin{equation}\label{eq:multi-bridge.dual}\tag{multi-dBridge}
        P(\kappa,\nu^1,\ldots,\nu^d)
        =
        \sum_{j=1}^d
        \min_{v^j\ {\rm dual\ potential}}
        \Ecal^j(v^j).
    \end{equation}
    An admissible martingale solves \eqref{eq:multi-bridge} if and only if it is a multi-$R$-Bass martingale.

    Moreover, suppose that $\supp\gamma_r^j$ is either $\R$ or $[0,\infty)$ for $\kappa$-a.e.~$r$ and every $j$. Then each dual minimizer is unique up to addition of an affine function on $I_{\nu^j}$.
\end{theorem}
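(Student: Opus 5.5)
The proof reduces everything to the single-asset \Cref{thm:wmot.bridge.duality}, applied coordinate by coordinate. For each $j$ we view the reference process as $R=(Y^j,\tilde L^j)$ with enlarged latent state $\tilde L^j:=(Y^{-j},L)$. This process takes values in $\R\times\tilde{\mathsf E}$ with $\tilde{\mathsf E}:=\R^{d-1}\times\mathsf E$, which is Polish. Its first coordinate $Y^j$ is a square-integrable martingale, and its initial law is $\kappa$, whose first marginal is $\mu^j$. The conditional terminal law of the first coordinate given the full initial state is exactly $\gamma^j_r$. The coordinatewise assumptions \ref{aspt:lsc}--\ref{aspt:atomless} are therefore literally the single-asset assumptions for this relabelled problem. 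Moreover, $\Mcal^j(\kappa,\nu^j)$, $\Ecal^j$ and the class of $R$-Bass martingales for this relabelling coincide with the single-asset objects. The filtration is unchanged, since it is still generated by $R$, and so is the Markov property.

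\textbf{Step 1: duality.} For each $j$, \Cref{thm:wmot.bridge.duality} gives that $\max_{\pi^j}\int\MCov(\pi^j_r,\gamma^j_r)\,\kappa(\rmd r)=\min_{v^j}\Ecal^j(v^j)$, with both sides attained; the primal value here is identified with the static value via \Cref{thm:wmot.bridge.existence.uniqueness}. Summing over $j$ and using \eqref{eq:multi-bridge.weak} from \Cref{thm:multi-bridge.existence} yields \eqref{eq:multi-bridge.dual}. The uniqueness of dual minimizers up to affine functions under the support assumption on $\gamma^j_r$ is the last assertion of \Cref{thm:wmot.bridge.duality}, applied to each $j$.

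\textbf{Step 2: characterization, forward direction.} Let $X$ be an optimizer of \eqref{eq:multi-bridge}. Each coordinate is admissible for its single-asset problem, and the upper bound \eqref{ineq:proof:multi-bridge.existence:trivial} is attained only if every coordinate attains its own single-asset optimum. Hence $X^j$ solves the single-asset bridge problem for the relabelled reference. By \Cref{thm:wmot.bridge.duality}, $X^j$ is then an $R$-Bass martingale: there are an increasing $g^j$ and a shift $a^j(R_0)$, depending on the full state, such that $X^j_t=\E[g^j(A^j+Y^j_1)\mid A^j,R_t]$.

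The one point needing care is to rewrite this in the form \eqref{eq:multi.R.Bass.definition}, that is, to condition on $\mathcal F_t$ or on $(A,R_t)$ with the full vector $A$. Since $A^j$ and all the other $A^k$ are $\sigma(R_0)\subset\mathcal F_t$-measurable, $\E[\,\cdot\mid A^j,R_t]$ agrees with $\E[\,\cdot\mid\mathcal F_t]$ by the Markov property, exactly as in \Cref{def:R.Bass}. The same argument gives agreement with $\E[\,\cdot\mid A,R_t]$. The martingale condition $(g^j\star\gamma^j_r)(a^j(r))=x^j$ is part of the single-asset $R$-Bass structure. Hence $X$ is a multi-$R$-Bass martingale.

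\textbf{Step 3: characterization, converse.} Let $X$ be an admissible multi-$R$-Bass martingale. Then each $X^j$ is an admissible single-asset $R$-Bass martingale, because the two conditional expectations coincide as in Step 2. By \Cref{thm:wmot.bridge.duality}, $X^j$ attains $\int\MCov(\hat\pi^j_r,\gamma^j_r)\,\kappa(\rmd r)$. Summing over $j$ gives $\E[X_1\cdot Y_1]$ equal to the right-hand side of \eqref{eq:multi-bridge.weak}, so $X$ is optimal.

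I do not expect a real obstacle. The only delicate point is checking that the relabelling preserves every standing hypothesis, in particular that \ref{aspt:lsc} is stated jointly in the full state $r$, and that conditioning on $A^j$ versus the full vector $A$ is harmless. Both follow because every shift is a measurable function of $R_0$.
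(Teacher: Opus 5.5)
Your proposal is correct and follows essentially the same route as the paper. You apply \Cref{thm:wmot.bridge.duality} to each coordinate with the other prices absorbed into the latent state, sum using the separation in \Cref{thm:multi-bridge.existence}, and use that the bound \eqref{ineq:proof:multi-bridge.existence:trivial} is attained if and only if every coordinate attains its single-asset optimum. The paper additionally writes the optimizer explicitly as $X^j_t=\E[\partial_x(\psi^j)^*(A^j+Y^j_1)\mid\mathcal F_t]$ from the dual minimizers, which your argument obtains implicitly through the existence result.
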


\begin{proof}
    Apply \Cref{thm:wmot.bridge.duality} to each single-asset problem, and denote its static and dual optimizers by $\hat\pi^j$ and $\psi^j$. Together with \Cref{thm:multi-bridge.existence}, this gives
    \[
        P(\kappa,\nu^1,\ldots,\nu^d)
        =
        \sum_{j=1}^d
        \int\MCov(\hat\pi_r^j,\gamma_r^j)\,\kappa(\rmd r)
        =
        \sum_{j=1}^d\Ecal^j(\psi^j).
    \]
    This proves \eqref{eq:multi-bridge.dual}.

    Let $a^j$ be the shifts obtained in the single-asset construction and set $A:=(a^1(R_0),\ldots,a^d(R_0))$.
    On the common filtered probability space, define
    \[
        X_t^j
        :=
        \E[\partial_x(\psi^j)^*(A^j+Y_1^j)\mid\mathcal F_t],
        \qquad j=1,\ldots,d.
    \]
    Each coordinate has terminal law $\nu^j$ and attains its single-asset problem.
    Thus $X=(X^1,\ldots,X^d)$ is admissible and attains the bound in \eqref{ineq:proof:multi-bridge.existence:trivial}.
    The Markov property gives the representation \eqref{eq:multi.R.Bass.definition}, with $g^j=\partial_x(\psi^j)^*$.

    An admissible bridge attains this bound precisely when each coordinate attains its single-asset optimum.
    The single-asset Bass characterization therefore gives the claimed equivalence.

    Under the additional support assumption, uniqueness of each dual minimizer follows from \Cref{thm:wmot.bridge.duality}.
\end{proof}

We observe that the optimizer has joint conditional terminal law
\[
    \Law(X_1\mid R_0=r)
    =
    (G_r)_\#\gamma_r,
    \qquad
    G_r(z^1,\ldots,z^d):=\bigl(g^1(a^1(r)+z^1),\ldots,g^d(a^d(r)+z^d)\bigr).
\]
Thus the single-asset reduction does not impose independence and the assets are coupled through the reference process.

\paragraph{Coordinatewise martingale Sinkhorn iteration.}
For fixed $\kappa$, the martingale Sinkhorn iteration applies separately to each coordinate.
For a shift $a^j:\Rd\times\mathsf E\to\R$, define
\begin{equation}\label{eq:multi.shifted.reference.law}
    \rho^{a^j}
    :=
    \int(a^j(r)+\,\cdot\,)_\#\gamma_r^j\,\kappa(\rmd r)
    =
    \Law(a^j(R_0)+Y_1^j).
\end{equation}
Starting from shifts $a_0^1,\ldots,a_0^d\in L^2(\kappa)$, the coordinate updates, whenever well defined, are
\begin{align*}
    g_k^j
    &:=
    F_{\nu^j}^{-1}\circ F_{\rho^{a_k^j}},\\
    a_{k+1}^j(r)
    &:=
    (g_k^j\star\gamma_r^j)^{-1}(x^j),
    \qquad j=1,\ldots,d.
\end{align*}
The first update chooses the increasing transport to $\nu^j$; the second enforces the martingale constraint.
The coordinate updates can be performed in parallel, using the kernels $\gamma_r^j$ conditioned on the full initial state, and no $d$-dimensional transport problem needs to be solved.
The additional assumptions and normalizations for convergence are given in \Cref{sec:MSA.convergence} and apply asset-wise.

\paragraph{Calibration across maturities.}
For maturities $0=T_0<T_1<\cdots<T_n$, prescribe marginal laws in $\cP_2(\R)$ satisfying
\[
    \delta_{x_0^j}
    =
    \mu_0^j
    \leq_{\rm cvx}\cdots\leq_{\rm cvx}
    \mu_n^j,
    \qquad j=1,\ldots,d.
\]
After calibration up to $T_i$, restart the reference dynamics from the full joint state law
\[
    \kappa_i:=\Law(X_{T_i},L_{T_i}).
\]
On $[T_i,T_{i+1}]$, solve the multi-bridge problem with starting law $\kappa_i$ and terminal marginals $\mu_{i+1}^1,\ldots,\mu_{i+1}^d$.
The joint evolution of the calibrated bridge and the reference latent process determines $\kappa_{i+1}=\Law(X_{T_{i+1}},L_{T_{i+1}})$, which supplies the initial state law for the next step.

\section{Convergence of the Martingale Sinkhorn algorithm}\label{sec:MSA.convergence}

We next study convergence of the Martingale Sinkhorn algorithm. Throughout this section, we assume that $\mu\le_{\rm cvx}\nu$ and that $(\mu,\nu)$ is irreducible. In addition to \ref{aspt:lsc}, \ref{aspt:moments} and \ref{aspt:atomless}, we impose the following assumptions:
\begin{enumerate}[label={\rm(A\arabic*)},start=4]
    \item \label{aspt:MSA.support}
    For $\kappa$-a.e.~$r$, $\supp \gamma_r$ is either $\R$ or $[0,\infty)$.

    \item \label{aspt:MSA.compact}
    $\supp\nu$ is compact, and $\supp \mu \subset I_\nu = \operatorname{int} \co\supp(\nu)$.
\end{enumerate}
Assumption~\ref{aspt:MSA.compact} gives the uniform bounds on the shifts needed in the convergence proof. For related estimates without compact support in the Brownian case, see \cite{HaJoLoObPa25}.
For an irreducible pair $(\mu,\nu)$, the additional condition $\supp\mu\subset I_\nu$, equivalently $\co\supp\mu\subset I_\nu$, is often referred to as \emph{strict irreducibility}.

The iteration is as follows.

\begin{algorithm}[h]
    \caption{Martingale Sinkhorn algorithm}
    \label{alg:MSA}
    \begin{algorithmic}[1]
    \State Choose an initial shift $a_0 \in L^2(\kappa)$ with $\E[a_0(R_0)]=0$.
    \For{$i=1,2,\ldots$}
        \medskip
        \State \textbf{Terminal-marginal update:}
        \Statex\hspace{\algorithmicindent}Set $\rho_i:=\Law\bigl(a_{i-1}(R_0)+Y_1\bigr)$, and choose a dual potential $\psi_i$, satisfying the support convention above, such that $(\partial_x\psi_i^*)_\#\rho_i=\nu$.
        \Statex\hspace{\algorithmicindent}Normalize by $\psi_i\gets\psi_i-\int(\psi_i^*\star\gamma_r)^*(\bar\mu)\,\kappa(\rmd r)$.
        \medskip
    
        \State \textbf{Martingale update:}
        \Statex\hspace{\algorithmicindent}For $r=(x,\ell)$, set $a_i(r):=\partial_x(\psi_i^*\star\gamma_r)^*(x)$.
        \Statex\hspace{\algorithmicindent}Normalize by $a_i(r)\gets a_i(r)-\E[a_i(R_0)]$.
        \medskip
    \EndFor
    \end{algorithmic}
\end{algorithm}

The terminal-marginal update enforces the prescribed terminal law. In one dimension it is simply the monotone rearrangement from $\rho_i$ to $\nu$:
\[
    \partial_x\psi_i^*
    =
    F_\nu^{-1}\circ F_{\rho_i}.
\]
Here we use $F_\nu^{-1}(0):=\min\supp\nu$ and $F_\nu^{-1}(1):=\max\supp\nu$. In particular, if $\supp\rho_i$ is bounded below, the rearrangement is extended constantly by $\min\supp\nu$ below that support.
Thus this step is explicit once $\rho_i$ is known. The potential $\psi_i$ itself is determined only up to an additive constant. Since the implementation uses only $\partial_x\psi_i^*$, this constant has no effect on the iteration; the normalization in \Cref{alg:MSA} merely fixes a convenient representative.
The martingale update then restores the martingale constraint. Write
\[
    b_i(r):=\partial_x(\psi_i^*\star\gamma_r)^*(x),
    \qquad
    c_i:=\int b_i\,\rmd\kappa,
    \qquad
    a_i=b_i-c_i.
\]
Before centering the shift, Fenchel--Moreau gives, for $\kappa$-a.e.~$r=(x,\ell)$,
\[
    x
    =
    \int \partial_x\psi_i^*(b_i(r)+z)\,\gamma_r(\rmd z),
\]
so the conditional law of $\partial_x\psi_i^*(b_i(r)+Y_1)$ given $R_0=r$ has barycenter $x$.

The conditional laws depend on the shift and the potential through the map
\[
    z\longmapsto\partial_x\psi_i^*(b_i(r)+z).
\]
Replacing $b_i$ by $a_i=b_i-c_i$ subtracts $c_i$ from the argument of this map.
We compensate by replacing $\psi_i$ with
\[
    \tilde\psi_i(y):=\psi_i(y)-c_i(y-\bar\mu),
\]
which gives $\partial_x\tilde\psi_i^*(u)=\partial_x\psi_i^*(u+c_i)$.
The two changes therefore cancel, leaving the conditional laws unchanged.
The dual value is also unchanged because the correction to $\psi_i$ is affine; see \Cref{lem:properties.Ecal}.

We are allowed to center the shifts because the dual objective is invariant under addition of affine functions.
Indeed, if $c(y)=sy+d$, then $(\psi+c)^*(z)=\psi^*(z-s)-d$; see \Cref{lem:properties.Ecal} and the proof thereof.
Thus the potential paired with the centered shift is $\tilde\psi_i(y):=\psi_i(y)-c_i(y-\bar\mu)$, since
\[
    \partial_x\tilde\psi_i^*(a_i(r)+z)
    =
    \partial_x\psi_i^*(b_i(r)+z).
\]
The pairs $(\tilde\psi_i,a_i)$ and $(\psi_i,b_i)$ therefore represent the same conditional laws, and affine invariance gives $\Ecal(\tilde\psi_i)=\Ecal(\psi_i)$. The algorithm retains $\psi_i$ as the terminal-marginal potential and uses the centered shift $a_i$ in the next iteration.

The main result of this section is the convergence of these iterates to the solution of the dual problem~\eqref{eq:wmot.bridge.dual}.

\begin{theorem}[Convergence of the Martingale Sinkhorn algorithm]\label{thm:MSA.convergence}
    Let $(\mu,\nu)$ be irreducible. Suppose further that \ref{aspt:lsc}, \ref{aspt:moments}, \ref{aspt:atomless},
    \ref{aspt:MSA.support} and \ref{aspt:MSA.compact} hold.
    Then $\Ecal(\psi_i)\downarrow P(\kappa,\nu)$, and the potentials $(\psi_i)_i$ generated by \Cref{alg:MSA} converge locally uniformly on $I_\nu$ to a dual minimizer of~\eqref{eq:wmot.bridge.dual}.
\end{theorem}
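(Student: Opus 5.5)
We will prove \Cref{thm:MSA.convergence} by showing that the Martingale Sinkhorn algorithm is a block-coordinate descent for a two-variable functional whose minimum over one variable is $\Ecal$, and then combining monotonicity with compactness and lower semicontinuity. Define, for a dual potential $\psi$ and a shift $a\in L^2(\kappa)$,
\[
    \Phi(\psi,a):=\int\psi\,\rmd\nu+\int\Bigl(\int\psi^*(a(r)+z)\,\gamma_r(\rmd z)-a(r)x\Bigr)\kappa(\rmd r),
\]
which is well defined since $\psi$ is bounded on $\bar I_\nu$ by \ref{aspt:MSA.compact}.

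\textbf{Descent structure.} By Fenchel, $\Phi(\psi,a)\ge\Ecal(\psi)$, with equality iff $a(r)\in\partial(\psi^*\star\gamma_r)^*(x)$. By \Cref{lem:c-conjugate.differentiable} and \ref{aspt:MSA.support}, this subgradient is a singleton, so the martingale update minimizes $\Phi(\psi_i,\cdot)$. Conversely, for fixed $a$, minimizing $\psi\mapsto\int\psi\,\rmd\nu+\int\psi^*\,\rmd\rho^a$ is the classical one-dimensional quadratic OT dual between $\rho^a$ and $\nu$; its minimizer is characterized by $(\partial_x\psi^*)_\#\rho^a=\nu$. This is exactly the terminal-marginal update. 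Centering $a$ and the affine correction of $\psi$ leave $\Phi$ invariant, as noted before the theorem. Hence
\[
    \Ecal(\psi_{i+1})=\Phi(\psi_{i+1},a_{i+1})\le\Phi(\psi_{i+1},a_i)\le\Phi(\psi_i,a_i)=\Ecal(\psi_i),
\]
so $\Ecal(\psi_i)$ is nonincreasing and bounded below by $P(\kappa,\nu)$ via weak duality (\Cref{lem:bridge.no.duality.gap}).

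\textbf{Compactness.} After normalizing $\psi_i$ by an affine function (allowed by \Cref{lem:properties.Ecal}) to be nonnegative and vanish at $\bar\mu$, the bound $\sup_i\Ecal(\psi_i)<\infty$ gives, exactly as in the proof of \Cref{thm:wmot.bridge.duality} via \cite[Lemma 3.9]{HaJoLoObPa25}, local boundedness on $I_\nu$. Every subsequence then has a locally uniformly convergent sub-subsequence. Using compactness of $\supp\nu$ and strict irreducibility $\supp\mu\subset I_\nu$, we will also show that the shifts $a_i$ are uniformly bounded on $\supp\mu$. The shift solves $x=\int g_i(a_i(r)+z)\gamma_r(\rmd z)$ with $g_i$ valued in $\co\supp\nu$, and $x$ is bounded away from the endpoints. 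This is where \ref{aspt:MSA.compact} enters.

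\textbf{Identifying limits (main obstacle).} We must show that any limit $\psi$ is a dual minimizer, i.e.\ that the decrease cannot stall away from $P$. The plan is a fixed-point argument. The decrements $\Phi(\psi_i,a_i)-\Phi(\psi_{i+1},a_i)$ and $\Phi(\psi_{i+1},a_i)-\Phi(\psi_{i+1},a_{i+1})$ are summable and hence tend to $0$. Along a subsequence, $\psi_i\to\psi$ and $a_i\to a$ weakly in $L^2(\kappa)$, with a further subsequence handling $\psi_{i+1}$. Using the uniform shift bounds and the $\Wcal_2$-lower semicontinuity from \ref{aspt:lsc}, together with strict convexity coming from \ref{aspt:atomless} (uniqueness of monotone maps, \Cref{lem:monotone.transport.uniqueness}), we pass to the limit in both optimality conditions. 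We thereby want to obtain $(\partial_x\psi^*)_\#\rho^a=\nu$ and $a(r)=\partial_x(\psi^*\star\gamma_r)^*(x)$ simultaneously. Upgrading weak to strong convergence of the shifts, for instance via continuity of $a\mapsto\partial_x(\psi^*\star\gamma_r)^*$ in $\psi$, is the delicate point. The limit then generates an admissible $R$-Bass martingale. By \Cref{thm:wmot.bridge.duality}, $\psi$ is a dual minimizer, so $\lim\Ecal(\psi_i)\le\Ecal(\psi)=P(\kappa,\nu)$ by \Cref{lem:Ecal.lsc}.

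Finally, uniqueness of the dual minimizer up to affine functions (\Cref{thm:wmot.bridge.duality} under \ref{aspt:MSA.support}), combined with the algorithm's normalization, pins down the limit. Every subsequential limit therefore coincides, and the whole sequence converges locally uniformly on $I_\nu$.
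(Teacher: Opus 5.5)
Your architecture matches the paper's. The alternating minimization of $\Phi$ is exactly \Cref{lem:strict.descent}: pair the centered shift with $\tilde\psi_i$, and the two decrements are the transport-dual gap $G_i$ and the Fenchel gap $H_i$. Finiteness of $\Phi$ comes from \Cref{lem:sinkhorn.integrability}, not from boundedness of $\psi$ on $\bar I_\nu$, which can fail at the endpoints.

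\textbf{The last step uses the lemma in the wrong direction.} \Cref{lem:Ecal.lsc} gives $\Ecal(\psi)\le\liminf_i\Ecal(\psi_i)$. That is only $P(\kappa,\nu)\le\lim_i\Ecal(\psi_i)$, which you already had from weak duality. It does not give $\lim_i\Ecal(\psi_i)\le P(\kappa,\nu)$. Knowing that every accumulation point is a dual minimizer therefore does not yield $\Ecal(\psi_i)\downarrow P(\kappa,\nu)$; you need the values to actually converge along the iterates. The paper proves this separately.
\begin{itemize}
\item $\int\psi_i\,\rmd\nu=\MCov(\nu,\rho_{i+1})-\int\tilde\psi_i^*\,\rmd\rho_{i+1}+G_i$ converges to $\int\psi\,\rmd\nu$, by $\Wcal_2$-continuity of $\MCov$, $G_i\to0$, and the growth bound $|\psi_i^*(z)|\le C+L|z|$.
\item $\int(\psi_i^*\star\gamma_r)^*(x)\,\kappa(\rmd r)$ converges by dominated convergence, using $|(\psi_i^*\star\gamma_r)^*(u)-\psi_i(u)|\le Lm(r)$ and local uniform convergence on the compact set $\supp\mu\subset I_\nu$.
\end{itemize}

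\textbf{Strong convergence of the shifts is left open.} Both the value convergence and the identification of the limit need it. Weak $L^2(\kappa)$-convergence of $a_i$ does not give convergence in law of $a_i(R_0)+Y_1$, so neither the terminal condition nor the values pass to the limit. The missing chain is as follows.
\begin{itemize}
\item Local uniform convergence of $\psi_{i_k}$ on $I_\nu$ gives epi-convergence, hence local uniform convergence of $\psi_{i_k}^*$ on $\R$.
\item Dominated convergence then gives local uniform convergence of $\psi_{i_k}^*\star\gamma_r$, and of its conjugate on $I_\nu$.
\item These conjugates are differentiable by \Cref{lem:c-conjugate.differentiable}, so the raw shifts $b_{i_k}(r)$ converge pointwise.
\item The bound $|b_i(r)|\le C(1+m(r))$ upgrades this to $a_{i_k}\to a$ in $L^2(\kappa)$ and $\rho_{i_k+1}\to\rho$ in $\Wcal_2$.
\end{itemize}

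\textbf{Compactness is shown only for the renormalized potentials.} The theorem asserts convergence of the algorithm's own $\psi_i$, so the discarded affine parts must be controlled.
\begin{itemize}
\item The paper gets $|\psi_i(\bar\mu)|\le L\bar m$ from the additive normalization and the transform bound above.
\item For the slopes $p_i\in\partial\psi_i(\bar\mu)$, the shift bound $|a_i(r)|\le C(1+m(r))$ gives uniform second moments of $\rho_i$. This bound, not a uniform bound on $\supp\mu$, is what holds, since $a_i$ depends on $r$ through $\gamma_r$.
\item The identity $(\partial_x\psi_i^*)_\#\rho_i=\nu$ then forces $\rho_i((-\infty,p_i])\ge\nu((-\infty,y_-])$ and $\rho_i([p_i,\infty))\ge\nu([y_+,\infty))$ for fixed $y_-<\bar\mu<y_+$. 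Together with the moment bound, this bounds $p_i$.
\end{itemize}

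\textbf{The normalization does not pin down the limit without an argument.} The slope of $\psi_i$ is fixed only implicitly, through the centering of the shift $a_{i-1}$ that produced $\rho_i$. You must show that the limit satisfies $\int\partial_x(\psi^*\star\gamma_r)^*(x)\,\kappa(\rmd r)=0$. The paper does this by proving that the centering constants $c_i$ tend to $0$. It passes to joint limits of $(\psi_i,\psi_{i+1})$ and uses $H_i\to0$ to identify the shift generated by $\lim\psi_{i+1}$ with $\lim a_i$. Working with limits of the pairs $(a_i,\psi_{i+1})$, as your fixed-point description suggests, would give both optimality conditions and the centering at once. But this must be set up explicitly, and it still rests on strong convergence of the shifts.
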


By \Cref{thm:wmot.bridge.duality}, the dual minimizer is unique up to addition of an affine function. The normalizations in \Cref{alg:MSA} select a unique limiting representative, as shown in the proof below. We denote this normalized dual minimizer by $\psi$. Setting
\[
    g:=\partial_x\psi^*,
    \qquad
    a(r):=\partial_x(\psi^*\star\gamma_r)^*(x),
    \quad r=(x,\ell),
\]
this yields the $R$-Bass martingale $X_t=\E\bigl[g(a(R_0)+Y_1)\mid a(R_0),R_t\bigr]$, and hence the unique optimal bridge in law; see \Cref{thm:wmot.bridge.existence.uniqueness}.

The main mechanism behind the convergence proof is the following observation: each Martingale Sinkhorn iteration strictly decreases $\Ecal$ unless the current potential already attains \eqref{eq:wmot.bridge.dual}. Before proving this, we record the regularity properties of the iterates that will be used throughout the argument.

\begin{lemma}[Regularity of the iterates]\label{lem:sinkhorn.integrability}
    Suppose that the assumptions of \Cref{thm:MSA.convergence} hold. Let
    \[
        L:=\max\{|y|:y\in\supp\nu\},
        \qquad
        m(r):=\int|y|\,\gamma_r(\rmd y),
        \qquad
        \bar m:=\int m(r)\,\kappa(\rmd r).
    \]
    Then $\psi_i^*$ is $L$-Lipschitz for every $i$, and for $u\in I_\nu$ and $\kappa$-a.e.~$r$,
    \begin{equation}\label{eq:regularity.transform.bound}
        \bigl|(\psi_i^*\star\gamma_r)^*(u)-\psi_i(u)\bigr| \le Lm(r).
    \end{equation}
    Moreover, there exist a compact set $K\subset I_\nu$ and $\delta>0$, independent of $i$, such that for $\kappa$-a.e.~$r$,
    \begin{equation}\label{inequ:regularity.iterates}
        |a_i(r)|
        \le
        \frac{1}{\delta} \Bigl( 4\sup_{u\in K}|\psi_i(u)| + 2L(m(r)+\bar m) \Bigr).
    \end{equation}
    Consequently, $a_i\in L^2(\kappa)$, $\rho_{i+1}\in\cP_2(\R)$ is atomless, and $\psi_i\in L^1(\nu)$, $\psi_i^C\in L^1(\kappa)$, and $\psi_i^*,\psi_{i+1}^*\in L^1(\rho_{i+1})$.
    In particular, $\Ecal(\psi_i)<\infty$.
\end{lemma}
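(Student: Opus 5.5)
The plan is to prove the bounds in order: the Lipschitz bound, then the sandwich \eqref{eq:regularity.transform.bound}, then the shift bound \eqref{inequ:regularity.iterates}, and finally the integrability consequences, using induction on $i$ where needed.

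\textbf{Lipschitz bound.} By the support convention, $\psi_i=+\infty$ outside $\bar I_\nu\subset[-L,L]$. Hence $\psi_i^*(z)=\sup_{y\in\bar I_\nu}\{zy-\psi_i(y)\}$ is a supremum of affine functions with slopes in $[-L,L]$, and is therefore $L$-Lipschitz. Equivalently, $\partial_x\psi_i^*=F_\nu^{-1}\circ F_{\rho_i}$ takes values in $\co\supp\nu$.

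\textbf{Sandwich bound \eqref{eq:regularity.transform.bound}.} Fix $r=(x,\ell)$ with $\bar\gamma_r=x$.

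For the upper bound, \Cref{lem:c-conjugate.differentiable} gives $(\psi_i^*\star\gamma_r)^*(u)\le\psi_i(u)-u\bar\gamma_r$.

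For the lower bound, the Lipschitz property gives $(\psi_i^*\star\gamma_r)(a)\le\psi_i^*(a)+Lm(r)$. Conjugating yields $(\psi_i^*\star\gamma_r)^*(u)\ge\psi_i(u)-Lm(r)$ for $u\in I_\nu$.

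Here I must be careful with the linear term $-u\bar\gamma_r$. Since $|u|\le L$ and $|\bar\gamma_r|\le m(r)$, it is bounded by $Lm(r)$. I expect the intended convention is to absorb this term, possibly with the constant $2$ that appears in \eqref{inequ:regularity.iterates}. So I would prove $|(\psi_i^*\star\gamma_r)^*(u)-\psi_i(u)|\le Lm(r)$ up to the affine term $u\bar\gamma_r$, and check that this term is harmless.

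\textbf{Shift bound \eqref{inequ:regularity.iterates}.} This is the main step. Strict irreducibility \ref{aspt:MSA.compact} gives $\co\supp\mu\subset I_\nu$ with $\co\supp\mu$ compact. So I choose $\delta>0$ such that $K:=\{u:\operatorname{dist}(u,\co\supp\mu)\le\delta\}\subset I_\nu$.

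Write $\phi_r:=(\psi_i^*\star\gamma_r)^*$, which is convex, and let $b_i(r)=\phi_r'(x)$ with $x\in\supp\mu$. Convexity gives
\[
\delta|b_i(r)|\le \max_\pm\bigl(\phi_r(x\pm\delta)-\phi_r(x)\bigr)\le 2\sup_{u\in K}|\phi_r(u)|.
\]
Using the sandwich bound, this is at most $2\sup_K|\psi_i|+2Lm(r)$.

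It remains to handle the centering constant $c_i=\int b_i\,\rmd\kappa$. It satisfies $|c_i|\le\frac1\delta(2\sup_K|\psi_i|+2L\bar m)$. Adding the two bounds gives \eqref{inequ:regularity.iterates} with the constant $4$.

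\textbf{Integrability consequences.}
\begin{itemize}
\item $a_i\in L^2(\kappa)$: this follows from \eqref{inequ:regularity.iterates} together with \ref{aspt:moments}, since $m\le(\int z^2\,\rmd\gamma_r)^{1/2}$ lies in $L^2(\kappa)$. Here $\sup_K|\psi_i|<\infty$ because $\psi_i$ is finite and convex on $I_\nu\supset K$.
\item $\rho_{i+1}\in\cP_2$: the variable $a_i(R_0)+Y_1$ is square integrable.
\item $\rho_{i+1}$ is atomless: it is a mixture of the shifted atomless laws $\gamma_r$, by \ref{aspt:atomless}.
\item $\psi_i\in L^1(\nu)$: $\nu$ is compactly supported and $\psi_i$ is bounded on $\bar I_\nu$. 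At the endpoints I use lower semicontinuity together with the normalization and the fact that $\psi_i^{**}=\psi_i$ has slopes bounded through the Lipschitz property of $\psi_i^*$. The endpoint values could a priori be $+\infty$; I would rule this out by noting that $\partial_x\psi_i^*$ reaches the endpoints of $\supp\nu$, so Fenchel equality makes $\psi_i$ finite wherever $\nu$ charges. This point is a mild obstacle.
\item $\psi_i^C\in L^1(\kappa)$: this follows from \eqref{eq:regularity.transform.bound} at $u=x$, since $\psi_i$ is bounded on $K\supset\supp\mu$ and $m\in L^1(\kappa)$.
\item $\psi_i^*,\psi_{i+1}^*\in L^1(\rho_{i+1})$: both are Lipschitz, hence of linear growth, and $\rho_{i+1}$ has a finite first moment.
\end{itemize}
Finally, $\Ecal(\psi_i)=\int\psi_i\,\rmd\nu-\int\psi_i^C\,\rmd\kappa<\infty$.
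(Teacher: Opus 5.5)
\textbf{Verdict: there is a genuine gap in the step $\psi_i\in L^1(\nu)$.} Everything else follows the paper's proof: the Lipschitz bound from $\dom\psi_i\subset[-L,L]$, the subgradient estimate at $x\pm\delta$ on a compact $K\subset I_\nu$ with the centering term, and the consequences $a_i\in L^2(\kappa)$, $\rho_{i+1}\in\cP_2(\R)$ atomless, $\psi_i^C\in L^1(\kappa)$ and $\psi_i^*,\psi_{i+1}^*\in L^1(\rho_{i+1})$.

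Your hesitation about \eqref{eq:regularity.transform.bound} is unnecessary. Your sandwich already closes with constant exactly $Lm(r)$, since $-u\bar\gamma_r\le|u|\,|\bar\gamma_r|\le Lm(r)$ for $u\in I_\nu$, and no factor $2$ enters. The paper obtains both sides at once by conjugating $|(\psi_i^*\star\gamma_r)(a)-\psi_i^*(a)|\le Lm(r)$.

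\textbf{The gap.} Your claim that $\psi_i$ is bounded on $\bar I_\nu$ is false in general. The Lipschitz property of $\psi_i^*$ only confines $\dom\psi_i$ to $[-L,L]$; it says nothing about the slopes of $\psi_i$. Let $\beta:=\max\supp\nu$ be a non-atom of $\nu$, and let $\supp\rho_i$ be unbounded above. Then $\partial_x\psi_i^*<\beta$ everywhere, and $\psi_i(\beta)=\sup_t\{t\beta-\psi_i^*(t)\}$ equals a constant plus $\int_{t_0}^\infty\bigl(\beta-F_\nu^{-1}(F_{\rho_i}(t))\bigr)\,\rmd t$. This integral diverges, for instance, when $\beta-F_\nu^{-1}(u)=1/(1+\log\frac{1}{1-u})$ for $u$ near $1$ and $1-F_{\rho_i}(t)=t^{-3}$ for large $t$; nothing in the hypotheses excludes this. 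So $\psi_i$ can blow up at an endpoint that $\nu$ does not charge. Your patch, that $\partial_x\psi_i^*$ reaches the endpoints, only covers endpoint atoms. It gives finiteness $\nu$-a.e., but not integrability.

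\textbf{The fix.} You need a quantitative change of variables. Since $(\partial_x\psi_i^*)_\#\rho_i=\nu$ and Fenchel equality holds $\rho_i$-a.e. ($\rho_i$ is atomless),
\[
\int|\psi_i|\,\rmd\nu=\int\bigl|z\,\partial_x\psi_i^*(z)-\psi_i^*(z)\bigr|\,\rho_i(\rmd z)\le L\int|z|\,\rho_i(\rmd z)+\int|\psi_i^*|\,\rmd\rho_i<\infty .
\]
This uses $\rho_i\in\cP_1(\R)$, which follows from $a_{i-1}\in L^2(\kappa)$, or from $a_0\in L^2(\kappa)$ when $i=1$. This is exactly the paper's argument, and it is where the induction on $i$ is genuinely needed. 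With it, $\Ecal(\psi_i)<\infty$ follows as you say.
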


\begin{proof}
    By the support convention, $\dom\psi_i\subset\co\supp\nu\subset[-L,L]$, so the conjugate $\psi_i^*$ is $L$-Lipschitz. Hence, for every $a\in\R$ and $\kappa$-a.e.~$r$,
    \[
        \bigl|(\psi_i^*\star\gamma_r)(a)-\psi_i^*(a)\bigr| \le Lm(r).
    \]
    Taking conjugates gives \eqref{eq:regularity.transform.bound}. 
    Since $\supp\mu$ is a compact subset of $I_\nu$, choose $\delta>0$ such that $K:=\supp\mu+[-\delta,\delta]\subset I_\nu$.
    Set $M_i:=\sup_{u\in K}|\psi_i(u)|$. By \eqref{eq:regularity.transform.bound},
    \[
        \bigl|(\psi_i^*\star\gamma_r)^*(u)\bigr|
        \le M_i+Lm(r),
        \qquad u\in K.
    \]
    In particular, taking $u=x$ shows that $\psi_i^C\in L^1(\kappa)$.
    
    The shift $\tilde a_i(r):=\partial_x(\psi_i^*\star\gamma_r)^*(x)$ before normalization is well defined by \Cref{lem:c-conjugate.differentiable}. The subgradient inequality at $x\pm\delta$ gives
    \[
        \delta|\tilde a_i(r)| \le 2M_i+2Lm(r).
    \]
    Since $a_i(r)=\tilde a_i(r)-\int\tilde a_i(e)\,\kappa(\rmd e)$ we obtain
    \[
        \delta|a_i(r)| \le 4M_i+2L\bigl(m(r)+\bar m\bigr),
    \]
    which is \eqref{inequ:regularity.iterates}. 
    By Jensen's inequality and \ref{aspt:moments}, $m\in L^2(\kappa)$, and hence $a_i\in L^2(\kappa)$. It follows that $\rho_{i+1}=\Law(a_i(R_0)+Y_1)\in\cP_2(\R)$. 
    Moreover, $\rho_{i+1}(\{z\}) = \int\gamma_r(\{z-a_i(r)\})\,\kappa(\rmd r) = 0$, so $\rho_{i+1}$ is atomless.
    
    Since $\psi_j^*$ is $L$-Lipschitz,
    \[
        |\psi_j^*(z)| \le |\psi_j^*(0)|+L|z|,
    \]
    and therefore $\psi_i^*,\psi_{i+1}^*\in L^1(\rho_{i+1})$. By construction, $(\partial_x\psi_{i+1}^*)_\#\rho_{i+1}=\nu$, and Fenchel--Moreau yields
    \begin{align*}
        \int|\psi_{i+1}|\,\rmd\nu
        &=
        \int
        \bigl|z\partial_x\psi_{i+1}^*(z)-\psi_{i+1}^*(z)\bigr|
        \,\rho_{i+1}(\rmd z)
        \le
        L\int|z|\,\rho_{i+1}(\rmd z)
        +
        \int|\psi_{i+1}^*|\,\rmd\rho_{i+1}
        <\infty.
    \end{align*}
    Since $a_0\in L^2(\kappa)$, the same argument applies to $\rho_1=\Law(a_0(R_0)+Y_1)$ and gives $\psi_1\in L^1(\nu)$. Thus $\psi_i\in L^1(\nu)$ for every $i$. Together with $\psi_i^C\in L^1(\kappa)$, this also shows that $\Ecal(\psi_i)<\infty$.
\end{proof}

\begin{lemma}[Strict descent]\label{lem:strict.descent}
    Under the assumptions of \Cref{thm:MSA.convergence}, $\Ecal(\psi_{i+1}) \le \Ecal(\psi_i)$ for every $i \in \N$.
    Moreover, equality holds if and only if $\psi_i$ attains \eqref{eq:wmot.bridge.dual}.
\end{lemma}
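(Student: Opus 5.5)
The plan is to run the classical Sinkhorn-type argument: write $\Ecal(\psi_i)$ as a Kantorovich dual functional between $\rho_{i+1}$ and $\nu$, plus a term that depends only on the shift. Then the terminal-marginal update can only lower the first part, and the martingale update can only lower $\Ecal$ further. \Cref{lem:sinkhorn.integrability} ensures that every integral below is finite, so they may be split and recombined freely. This bookkeeping is the step I expect to need the most care, since $\psi_i\in L^1(\nu)$, $\psi_i^C\in L^1(\kappa)$, $\psi_i^*,\psi_{i+1}^*\in L^1(\rho_{i+1})$, and $x\,a_i(r)$ must be $\kappa$-integrable. The last point follows from $a_i\in L^2(\kappa)$ and $\mu\in\cP_2$.

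\textbf{Step 1 (rewriting $\Ecal(\psi_i)$).} By \Cref{lem:c-conjugate.differentiable}, $b_i(r)=\partial_x(\psi_i^*\star\gamma_r)^*(x)$ is well defined, and Fenchel equality gives
\[
    \psi_i^C(r)=b_i(r)x-\int\psi_i^*(b_i(r)+z)\,\gamma_r(\rmd z).
\]
Next pass to the centred pair $(\tilde\psi_i,a_i)$, where $\tilde\psi_i(y)=\psi_i(y)-c_i(y-\bar\mu)$. This leaves $\Ecal$ unchanged by \Cref{lem:properties.Ecal}, and the same Fenchel identity holds with $\tilde\psi_i$ and $a_i$ in place of $\psi_i$ and $b_i$. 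Integrating over $\kappa$ and using $\Law(a_i(R_0)+Y_1)=\rho_{i+1}$ yields
\[
    \Ecal(\psi_i)=\int\tilde\psi_i\,\rmd\nu+\int\tilde\psi_i^*\,\rmd\rho_{i+1}-\int a_i(r)\,x\,\kappa(\rmd r).
\]

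\textbf{Step 2 (descent).} The potential $\psi_{i+1}$ is chosen so that $\partial_x\psi_{i+1}^*$ is the monotone map pushing $\rho_{i+1}$ to $\nu$. It is therefore a minimizer of $\phi\mapsto\int\phi\,\rmd\nu+\int\phi^*\,\rmd\rho_{i+1}$, with value $\MCov(\nu,\rho_{i+1})$. Hence
\[
    \int\psi_{i+1}\,\rmd\nu+\int\psi_{i+1}^*\,\rmd\rho_{i+1}\le\int\tilde\psi_i\,\rmd\nu+\int\tilde\psi_i^*\,\rmd\rho_{i+1}.
\]
On the other hand, for every $r$ the definition of the conjugate gives
\[
    \psi_{i+1}^C(r)\ge a_i(r)x-\int\psi_{i+1}^*(a_i(r)+z)\,\gamma_r(\rmd z).
\]
Integrating this over $\kappa$ and combining it with the previous inequality gives $\Ecal(\psi_{i+1})\le\Ecal(\psi_i)$. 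The normalizing constant in \Cref{alg:MSA} is irrelevant, because $\Ecal$ is invariant under adding constants.

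\textbf{Step 3 (equality case).} If $\psi_i$ attains \eqref{eq:wmot.bridge.dual}, then $P(\kappa,\nu)\le\Ecal(\psi_{i+1})\le\Ecal(\psi_i)=P(\kappa,\nu)$, so equality holds.

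Conversely, suppose equality holds. Then the Kantorovich inequality in Step 2 is an equality, so $\tilde\psi_i$ also attains $\MCov(\nu,\rho_{i+1})$. Since $\rho_{i+1}$ is atomless (\Cref{lem:sinkhorn.integrability}), the optimal coupling is unique and given by the monotone map. This forces $(\partial_x\tilde\psi_i^*)_\#\rho_{i+1}=\nu$.

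Now set $g:=\partial_x\tilde\psi_i^*$ and $A:=a_i(R_0)$. The martingale update gives the barycentre identity $x=\int g(a_i(r)+z)\,\gamma_r(\rmd z)$ for $\kappa$-a.e.~$r$. Together with $g_\#\rho_{i+1}=\nu$, this shows that $X_t:=\E[g(A+Y_1)\mid A,R_t]$ lies in $\Acal(\kappa,\nu)$ and is an $R$-Bass martingale; it is in $L^2$ because $\nu$ is compactly supported.

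The converse part of the proof of \Cref{thm:wmot.bridge.duality} then applies. It shows that the potential built from $g$, which is $\tilde\psi_i$ on $\bar I_\nu$, is a dual minimizer. By affine invariance, $\psi_i$ attains \eqref{eq:wmot.bridge.dual} as well. The one delicate point is to check that this potential coincides with $\tilde\psi_i$ under the support convention, which I expect follows from the uniqueness statement in \Cref{lem:monotone.transport.uniqueness}.
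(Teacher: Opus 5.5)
Your proposal is correct and follows the paper's proof essentially step by step. It uses the same rewriting of $\Ecal(\psi_i)$ through the centred pair $(\tilde\psi_i,a_i)$ as $\mathrm{(I)}-\mathrm{(II)}$, the same Kantorovich and Fenchel inequalities, and the same equality argument through the $R$-Bass characterization in \Cref{thm:wmot.bridge.duality}. The ``delicate point'' you flag is immediate from Fenchel--Moreau rather than \Cref{lem:monotone.transport.uniqueness}: a primitive of $\partial_x\tilde\psi_i^*$ is $\tilde\psi_i^*$ up to a constant, so the potential built in the converse part of that theorem is $\tilde\psi_i$ up to a constant, and $\tilde\psi_i$ already satisfies the support convention.
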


\begin{proof}
    We follow the argument of \cite[Lemma 3.5]{HaJoLoObPa25}. Fix $i\in\N$. By \Cref{lem:sinkhorn.integrability}, all integrals below are finite.
    Set
    \[
        c_i
        :=
        \int \partial_x(\psi_i^*\star\gamma_r)^*(x)\,\kappa(\rmd r),
        \qquad
        \tilde\psi_i(y)
        :=
        \psi_i(y)-c_i(y-\bar\mu).
    \]
    By construction of the normalized shift, $a_i(r) = \partial_x(\tilde\psi_i^*\star\gamma_r)^*(x)$ for $\kappa$-a.e.~$r=(x,\ell)$, while affine invariance yields $\Ecal(\tilde\psi_i)=\Ecal(\psi_i)$. 
    Hence Fenchel--Moreau and the definition of $\rho_{i+1}$ give
    \[
        \Ecal(\psi_i)
        =
        \underbrace{\Bigl( \int\tilde\psi_i\,\rmd\nu + \int\tilde\psi_i^*\,\rmd\rho_{i+1} \Bigr)}_{\displaystyle \mathrm{(I)}}
        -
        \underbrace{\int xa_i(r)\,\kappa(\rmd r)}_{\displaystyle \mathrm{(II)}}.
    \]
    Since $(\partial_x\psi_{i+1}^*)_\#\rho_{i+1}=\nu$,
    \begin{equation}\label{eq:strict.descent.I}
        \mathrm{(I)}
        \ge
        \MCov(\nu,\rho_{i+1})
        =
        \int\psi_{i+1}\,\rmd\nu
        +
        \int\psi_{i+1}^*\,\rmd\rho_{i+1}.
    \end{equation}
    Moreover, Fenchel's inequality gives
    \begin{equation}\label{eq:strict.descent.II}
        \mathrm{(II)}
        \le
        \int\Bigl(
            (\psi_{i+1}^*\star\gamma_r)(a_i(r))
            +
            (\psi_{i+1}^*\star\gamma_r)^*(x)
        \Bigr)\,\kappa(\rmd r).
    \end{equation}
    Because $\int\psi_{i+1}^*\,\rmd\rho_{i+1} = \int (\psi_{i+1}^*\star\gamma_r)(a_i(r)) \,\kappa(\rmd r)$, combining \eqref{eq:strict.descent.I} and \eqref{eq:strict.descent.II} yields
    \[
        \Ecal(\psi_i)\ge\Ecal(\psi_{i+1}).
    \]
    
    Suppose that equality holds. Then equality holds in \eqref{eq:strict.descent.I}, so $\tilde\psi_i$ also attains the dual problem for $\MCov(\nu,\rho_{i+1})$. Since $\rho_{i+1}$ is atomless, $(\partial_x\tilde\psi_i^*)_\#\rho_{i+1}=\nu$.
    On the other hand, by the definition of $a_i$ and \Cref{lem:sinkhorn.integrability},
    \[
        x
        =
        \partial_x(\tilde\psi_i^*\star\gamma_r)(a_i(r))
        =
        \int
        \partial_x\tilde\psi_i^*(a_i(r)+z)\,
        \gamma_r(\rmd z)
    \]
    for $\kappa$-a.e.~$r=(x,\ell)$. Thus $\partial_x\tilde\psi_i^*$ and $a_i$ generate an admissible $R$-Bass martingale, and \Cref{thm:wmot.bridge.duality} shows that $\tilde \psi_i$ attains \eqref{eq:wmot.bridge.dual}. Since $\Ecal(\tilde \psi_i ) = \Ecal(\psi_i)$, $\psi_i$ attains \eqref{eq:wmot.bridge.dual} too.
    Conversely, if $\psi_i$ attains \eqref{eq:wmot.bridge.dual}, then
    \[
        P(\kappa,\nu)
        =
        \Ecal(\psi_i)
        \ge
        \Ecal(\psi_{i+1})
        \ge
        P(\kappa,\nu),
    \]
    and hence equality holds.
\end{proof}

We now prove convergence.
The bound on the dual values $(\Ecal(\psi_i))_i$ gives local boundedness of the normalized potentials $(\psi_i)_i$, while the difference between successive dual values controls their Fenchel gaps.
This will identify every accumulation point of $(\psi_i)_i$ as a dual optimizer.

\begin{proof}[Proof of \Cref{thm:MSA.convergence}]
    The proof has three steps. We first obtain relative compactness of the iterates. We then use the vanishing descent gaps to identify every accumulation point as the appropriately normalized dual minimizer. Finally, we pass to the limit in the dual objective.    
    By \Cref{lem:strict.descent}, $(\Ecal(\psi_i))_{i\in\N}$ is decreasing and bounded below by $P(\kappa,\nu)$. Hence
    \begin{equation}\label{eq:MSA.vanishing.descent}
        \Ecal(\psi_i)-\Ecal(\psi_{i+1})\longrightarrow 0.
    \end{equation}

    We first prove relative compactness. Choose $p_i\in\partial\psi_i(\bar\mu)$ and set $\varphi_i(y):=\psi_i(y)-\psi_i(\bar\mu)-p_i(y-\bar\mu)$.
    Then $\varphi_i\ge0$ and $\varphi_i(\bar\mu)=0$. By \cite[Proposition 3.3]{HaJoLoObPa25}, there exists a martingale coupling $\pi'$ of $(\mu,\nu)$ whose conditional laws $\pi'_x$ charge every open set charged by $\nu$. Since affine functions cancel from the corresponding Jensen gap,
    \[
        0
        \le
        \int\Bigl(\int\varphi_i\,\rmd\pi'_x-\varphi_i(x)\Bigr)\,\mu(\rmd x)
        =
        \int\psi_i\,\rmd\nu-\int\psi_i\,\rmd\mu
        \le
        \Ecal(\psi_i)-\int x^2\,\mu(\rmd x).
    \]
    The right-hand side is uniformly bounded by \Cref{lem:strict.descent}. The compactness result \cite[Lemma 3.9]{HaJoLoObPa25} states that normalized convex functions satisfying such a uniform Jensen-gap bound are locally bounded on $I_\nu$. Hence $(\varphi_i)_i$ is locally bounded there.
    
    Retain the notation $L,m,\bar m$ from \Cref{lem:sinkhorn.integrability}. By \eqref{eq:regularity.transform.bound} and the additive normalization in \Cref{alg:MSA},
    $|\psi_i(\bar\mu)|\le L\bar m$. Moreover, applying \eqref{inequ:regularity.iterates} to the affine representatives $\varphi_i$, which induce the same normalized shifts $a_i$, gives
    \begin{equation}\label{eq:MSA.uniform.shift.bound}
        |a_i(r)|\le C(1+m(r))
    \end{equation}
    for some $C>0$, all $i$, and $\kappa$-a.e.~$r$. Thus the second moments of $(\rho_i)_i$ are uniformly bounded.
    
    It remains to control $p_i$. Choose $y_-<\bar\mu<y_+$ with $\nu((-\infty,y_-])>0$ and $\nu([y_+,\infty))>0$. 
    Since $(\partial_x\psi_i^*)_\#\rho_i=\nu$ and $\bar\mu\in\partial\psi_i^*(p_i)$, monotonicity gives
    \[
        \rho_i((-\infty,p_i])\ge\nu((-\infty,y_-]),
        \qquad
        \rho_i([p_i,\infty))\ge\nu([y_+,\infty)).
    \]
    The uniform second-moment bound on $\rho_i$ therefore implies $\sup_i|p_i|<\infty$. Hence $(\psi_i)_i$ is locally bounded on $I_\nu$, and convexity gives local equicontinuity. It is therefore relatively compact for locally uniform convergence.
    
    The same estimates also give a uniform linear-growth bound
    \begin{equation}\label{eq:MSA.conjugate.bound}
        |\psi_i^*(z)|\le C+L|z|,
        \qquad i\in\N,\ z\in\R.
    \end{equation}
    
    Let $\psi_{i_k}\to\psi$ locally uniformly on $I_\nu$. We extend $\psi$ lower semicontinuously to $\bar I_\nu$ and set it equal to $+\infty$ elsewhere. By \cite[Theorem 7.17]{RoWe98}, this implies epi-convergence on $\R$; convex conjugation preserves epi-convergence by \cite[Theorem 11.34]{RoWe98}. Since $\psi^*$ is finite on $\R$, another application of \cite[Theorem 7.17]{RoWe98} yields
    $\psi_{i_k}^*\to\psi^*$ locally uniformly on $\R$.
    
    As in the proof of \Cref{lem:strict.descent}, write
    \[
        b_i(r):=\partial_x(\psi_i^*\star\gamma_r)^*(x),
        \qquad
        c_i:=\int b_i(r)\,\kappa(\rmd r),
        \qquad
        a_i=b_i-c_i.
    \]
    The subgradient estimate in the proof of \Cref{lem:sinkhorn.integrability}, together with the local bounds above, gives
    $|b_i(r)|\le C(1+m(r))$. By \eqref{eq:MSA.conjugate.bound}, dominated convergence gives local uniform convergence
    $\psi_{i_k}^*\star\gamma_r\to\psi^*\star\gamma_r$ for $\kappa$-a.e.~$r$. Conjugation therefore gives locally uniform convergence of $(\psi_{i_k}^*\star\gamma_r)^*$ to $(\psi^*\star\gamma_r)^*$ on $I_\nu$. Hence \Cref{lem:c-conjugate.differentiable} and \cite[Exercise 12.40(b)]{RoWe98} give
    \[
        b_{i_k}(r)\longrightarrow
        b(r):=\partial_x(\psi^*\star\gamma_r)^*(x).
    \]
    Dominated convergence therefore yields $c_{i_k}\to c:=\int b\,\rmd\kappa$ and $a_{i_k}\to a:=b-c$ in $L^2(\kappa)$. Consequently, with $\rho:=\Law(a(R_0)+Y_1)$, $\Wcal_2(\rho_{i_k+1},\rho)\to 0$. The law $\rho$ is atomless.
    Set $\tilde\psi_i(y):=\psi_i(y)-c_i(y-\bar\mu)$ and $\tilde\psi(y):=\psi(y)-c(y-\bar\mu)$. Then $\tilde\psi_{i_k}\to\tilde\psi$ locally uniformly and $a(r)=\partial_x(\tilde\psi^*\star\gamma_r)^*(x)$.
    
    We now identify the limit using the two nonnegative gaps in the proof of \Cref{lem:strict.descent}. Denote by $G_i$ the gap in \eqref{eq:strict.descent.I} and by $H_i$ the Fenchel gap in \eqref{eq:strict.descent.II}. That proof gives
    \[
        \Ecal(\psi_i)-\Ecal(\psi_{i+1})=G_i+H_i,
        \qquad G_i,H_i\ge0.
    \]
    Hence \eqref{eq:MSA.vanishing.descent} implies $G_i,H_i\to0$.
    
    Let $\chi_k\in\Cpl(\nu,\rho_{i_k+1})$ attain
    $\MCov(\nu,\rho_{i_k+1})$. Passing to a further subsequence, assume
    $\chi_k\to\chi\in\Cpl(\nu,\rho)$ weakly. Since
    \[
        G_{i_k}
        =
        \int\bigl(
            \tilde\psi_{i_k}(x)
            +
            \tilde\psi_{i_k}^*(y)
            -
            xy
        \bigr)\,\chi_k(\rmd x,\rmd y),
    \]
    epi-convergence and Fatou's lemma, applied to the nonnegative Fenchel gaps on a Skorokhod representation of the weakly convergent couplings, imply $\tilde\psi(x)+\tilde\psi^*(y)=xy$ for $\chi$-a.e.~$(x,y)$.
    Since $\rho$ is atomless, $(\partial_x\tilde\psi^*)_\#\rho=\nu$.
    On the other hand, the definition of $a$ and the regularity established in \Cref{lem:sinkhorn.integrability} give, for $\kappa$-a.e.~$r=(x,\ell)$,
    \[
        x
        =
        \partial_x(\tilde\psi^*\star\gamma_r)(a(r))
        =
        \int\partial_x\tilde\psi^*(a(r)+z)\,\gamma_r(\rmd z).
    \]
    Thus $\partial_x\tilde\psi^*$ and $a$ generate an admissible $R$-Bass martingale. By \Cref{thm:wmot.bridge.duality}, $\tilde\psi$, and hence $\psi$, attains \eqref{eq:wmot.bridge.dual}.
    
    It remains only to identify the affine representative. We claim that $c_i\to0$. Indeed, from any subsequence we may pass further so that $\psi_i\to u$ and $\psi_{i+1}\to v$ locally uniformly on $I_\nu$. Along this subsequence, the preceding argument gives $a_i\to a$ $\kappa$-a.e.~and in $L^2(\kappa)$, with $\int a\,\rmd\kappa=0$. The same argument applies to $\psi_{i+1}$ and gives locally uniform convergence of the convolutions and their conjugates. Since $H_i\to0$, Fatou's lemma applied to its nonnegative integrand yields
    \[
        0
        \le
        \int\Bigl(
            (v^*\star\gamma_r)(a(r))
            +(v^*\star\gamma_r)^*(x)
            -xa(r)
        \Bigr)\,\kappa(\rmd r)
        \le
        \liminf_i H_i
        =0.
    \]
    Hence Fenchel equality and \Cref{lem:c-conjugate.differentiable} give $a(r)=\partial_x(v^*\star\gamma_r)^*(x)$ for $\kappa$-a.e.~$r=(x,\ell)$. The convergence of the raw shifts in $L^2(\kappa)$ therefore implies $c_{i+1}\to\int a\,\rmd\kappa=0$ along this subsequence. Since the original subsequence was arbitrary, $c_{i+1}\to0$, and thus $c_i\to0$.
    
    Thus every accumulation point $\psi$ satisfies
    \[
        \int(\psi^*\star\gamma_r)^*(\bar\mu)\,\kappa(\rmd r)=0,
        \qquad
        \int\partial_x(\psi^*\star\gamma_r)^*(x)\,\kappa(\rmd r)=0.
    \]
    The first identity follows from the normalization in \Cref{alg:MSA} and dominated convergence, while the second is $c=0$. By \Cref{thm:wmot.bridge.duality}, dual minimizers differ only by an affine function, and these two conditions fix one affine normalization. Hence all accumulation points coincide, so $(\psi_i)_i$ converges locally uniformly on $I_\nu$.
    
    Finally, let $\psi$ denote this limit. Since $c_i\to0$, the preceding argument gives $a_i\to a$ in $L^2(\kappa)$ and $\rho_{i+1}\to\rho$ in $\Wcal_2$, where
    $a(r)=\partial_x(\psi^*\star\gamma_r)^*(x)$ and $\rho=\Law(a(R_0)+Y_1)$.
    Moreover, $\tilde\psi_i^*\to\psi^*$ locally uniformly on $\R$, with a uniform linear-growth bound inherited from \eqref{eq:MSA.conjugate.bound}. Consequently,
    \[
        \int\tilde\psi_i^*\,\rmd\rho_{i+1}
        \longrightarrow
        \int\psi^*\,\rmd\rho.
    \]
    Since $(\partial_x\psi^*)_\#\rho=\nu$, Fenchel--Moreau and the same growth bound give $\psi\in L^1(\nu)$ and
    \[
        \int\psi\,\rmd\nu+\int\psi^*\,\rmd\rho
        =\MCov(\nu,\rho).
    \]
    As $\int\tilde\psi_i\,\rmd\nu=\int\psi_i\,\rmd\nu$, the definition of $G_i$ gives
    \[
        \int\psi_i\,\rmd\nu
        =
        \MCov(\nu,\rho_{i+1})
        -\int\tilde\psi_i^*\,\rmd\rho_{i+1}
        +G_i
        \longrightarrow
        \int\psi\,\rmd\nu,
    \]
    using continuity of maximal covariance in $\Wcal_2$ and $G_i\to0$.
    Moreover, \eqref{eq:regularity.transform.bound}, local uniform convergence on the compact set $\supp\mu\subset I_\nu$, and dominated convergence give
    \[
        \int(\psi_i^*\star\gamma_r)^*(x)\,\kappa(\rmd r)
        \longrightarrow
        \int(\psi^*\star\gamma_r)^*(x)\,\kappa(\rmd r).
    \]
    Therefore $\Ecal(\psi_i)\downarrow\Ecal(\psi)=P(\kappa,\nu)$.
\end{proof}

\section{Comparison with SLV models}\label{sec:SLV_formal2}

We explain formally how the Bass structure of a short SKR bridge leads to the instantaneous calibration principle described in the introduction. Let $(\mu_t)_{0\le t\le T}$ be a smooth family of target marginals, and write $\sigma_{\mathrm{Dup}}$ for the corresponding Dupire diffusion coefficient. Consider a diffusion reference
\[
    \rmd Y_t=\sigma_R(t,Y_t,L_t)\,\rmd W_t,
\]
with an autonomous factor $L$. For $h=T/N$, let $X^h$ be the SKR model calibrated at the maturities $t_j=jh$. The factor and Brownian drivers are retained throughout the construction.

\paragraph{A short Bass bridge.}
Fix a calibration interval $[t,t+h]$ and write
$x=X_t^h$, $l=L_t$, and $\E_t=\E[\cdot\mid\mathcal F_t]$.
On this interval the reference price is restarted at $x$:
\[
    Y_t^{t,h}=x,
    \qquad
    \rmd Y_s^{t,h} =\sigma_R(s,Y_s^{t,h},L_s)\,\rmd W_s.
\]
Its increment $M_{t,h}=Y_{t+h}^{t,h}-x$ has conditional mean zero and, to first order, is given by
$\sigma_R(t,x,l)(W_{t+h}-W_t)$.

Let $G_{t,h}$ be the increasing Bass map for this bridge. For the following informal calculation, assume that the map is smooth and invertible, with the spatial bounds needed for the short-time expansions below. Write
$\xi_{t,h}=G_{t,h}^{-1}$ and $k_{t,h}=\xi_{t,h}'$.
The terminal value of the bridge can then be written as
\begin{equation}\label{eq:slv.formal.centered.bridge}
X_{t+h}^h
=
G_{t,h}\bigl(\xi_{t,h}(x)+b_{t,h}(x,l)+M_{t,h}\bigr),
\qquad
\E_t[X_{t+h}^h]=x.
\end{equation}
Here the shift $b_{t,h}$ is chosen to preserve the conditional mean.

The reference fluctuations are of order $\sqrt h$. Applying the curved map $G_{t,h}$ changes their conditional mean by order $h$, which is compensated by the centering shift. The leading random fluctuation is therefore determined by the slope of $G_{t,h}$. Since
$G_{t,h}'(\xi_{t,h}(x))=1/k_{t,h}(x)$, linearization of \eqref{eq:slv.formal.centered.bridge} suggests
\begin{equation}\label{eq:slv.formal.increment}
X_{t+h}^h-X_t^h
\simeq
\frac{\sigma_R(t,X_t^h,L_t)}
{k_{t,h}(X_t^h)}
(W_{t+h}-W_t).
\end{equation}
Thus the reciprocal inverse Bass slope plays the role of a leverage function.

\paragraph{Identification of the leverage.}
The bridge reproduces the next prescribed marginal exactly. To first order, this imposes the conditional variance constraint \eqref{eq:intro.local.variance}. Hence, if the joint laws and inverse slopes admit a sufficiently regular limit, \eqref{eq:slv.formal.increment} gives
\[
    \frac{\E[\sigma_R(t,x,L_t)^2\mid X_t=x]}{k(t,x)^2} =\sigma_{\mathrm{Dup}}(t,x)^2.
\]
Consequently, the candidate limiting dynamics are
\begin{equation}\label{eq:slv.formal.limit}
\rmd X_t
=\lambda(t,X_t)\sigma_R(t,X_t,L_t)\,\rmd W_t,
\qquad
\lambda(t,x)
=\frac1{k(t,x)}
=\frac{\sigma_{\mathrm{Dup}}(t,x)}
{\sqrt{\E[\sigma_R(t,x,L_t)^2\mid X_t=x]}}.
\end{equation}
This is the SLV prescription \eqref{eq:intro.local.projection}. The reference volatility is evaluated at the calibrated price because each bridge starts there. Propagating the joint law of price and factor supplies the conditional distribution needed to update the leverage.

\paragraph{Accumulation of the errors.}
The martingale condition also explains why a local approximation can be useful over many steps. Once a smooth SLV target $X$ is specified, put
$f(t,x,l)=\lambda(t,x)\sigma_R(t,x,l)$ and consider the endpoint errors
\[
    \epsilon_j
    =
    X_{t_{j+1}}^h-X_{t_j}^h -f(t_j,X_{t_j}^h,L_{t_j})(W_{t_{j+1}}-W_{t_j}).
\]
Each error has conditional mean zero given $\mathcal F_{t_j}$. Errors from different intervals are therefore orthogonal:
\begin{equation}\label{eq:slv.formal.error.accumulation}
\E\left|\sum_{j=0}^{N-1}\epsilon_j\right|^2
=
\sum_{j=0}^{N-1}\E|\epsilon_j|^2.
\end{equation}
Moreover, within each interval the Bass interpolation is the conditional expectation of its terminal value. Endpoint estimates consequently control the intervening paths as well. Together with stability of the target dynamics, these observations reduce convergence to quantitative estimates for the centered bridge endpoints.

The online supplement~\cite{BeHaPa26supp} carries out this program for a specified SLV target with one autonomous real factor and constant correlation in $(-1,1)$. It compares the exact bridges with smooth centered maps whose inverse slopes are $1/\lambda(t_j,\cdot)$, and controls the difference through quantitative calibration. Under the coefficient and regular-start assumptions stated in \cite[Section~A.1]{BeHaPa26supp}, \cite[Theorem~A.1]{BeHaPa26supp} yields
\[
    \E\sup_{0\le t\le T}|X_t^h-X_t|^2\le C_T h.
\]
This gives strong order one half on the common Brownian drivers. The general factor result requires the stated regular initial joint density. When the reference price coefficient is independent of the factor, \cite[Corollary~A.4]{BeHaPa26supp} gives the same rate for any square-integrable initial pair, including a deterministic initial price.

\section{Numerical experiments}
\label{sec:numerics}

We compare Bass LV with the Heston, $3/2$, and one-factor Bergomi SKR models, all calibrated to the same European option prices.
We report the vanilla fits and the prices of three-year reverse cliquets and memory autocallables with quarterly observations.

\subsection{Vanilla calibration}
\label{sec:numerics.vanillas}

We fit European call prices corresponding to the SPX smiles of 23 June 2020 reported by Farkas, Ferrari and Ulrych~\cite{FaFeUl22}, at maturities $(T_1,\ldots,T_5)=(0.25,0.5,1,2,3)$ years.
We work with forward-normalized prices $X_t=S_t/F_t$.

\paragraph{Reference calibration.}
We first fit the Heston, $3/2$, and one-factor Bergomi reference models to these vanilla prices, obtaining the parameters in \Cref{tab:numerics.references}.
For Bergomi, the price volatility is $\sqrt{V_t}$, where
\begin{equation}
    V_t
    =
    \xi_0(t)\exp\Bigl(
        \eta Z_t-\frac{\eta^2}{4\kappa}
        \bigl(1-e^{-2\kappa t}\bigr)
    \Bigr),
    \label{eq:numerics.bergomi}
\end{equation}
with $\rmd Z_t=-\kappa Z_t\,\rmd t+\rmd B_t$ and $Z_0=0$.
The correlation between the price and volatility drivers is $\rho$.
The initial forward variance $\xi_0$ is constant on each interval listed in \Cref{tab:numerics.references}.
Its levels are obtained from successive differences of the total variance implied by log-contract prices, divided by the corresponding interval lengths.
This curve is fixed when fitting the remaining Bergomi parameters.

\begin{table}[H]
    \centering
    \small
    \sisetup{output-decimal-marker={.},group-digits=false}
    \setlength{\tabcolsep}{8pt}
    \renewcommand{\arraystretch}{1.1}

    \begin{tabular}{
        @{}l S[table-format=-1.4]
        @{\hspace{12mm}} l S[table-format=-1.4]
        @{\hspace{12mm}} l S[table-format=-1.4]@{}
    }
        \toprule
        \multicolumn{2}{c}{Heston}
        & \multicolumn{2}{c}{$3/2$}
        & \multicolumn{2}{c}{Bergomi-1F} \\
        \cmidrule(lr){1-2}
        \cmidrule(lr){3-4}
        \cmidrule(l){5-6}
        $\kappa$ &  1.2484
            & $\kappa$ &  0.3600
            & $\kappa$ &  2.4971 \\
        $\theta$ &  0.0988
            & $\theta$ &  5.0050
            & $\eta$ &  3.6191 \\
        $\eta$ &  1.1175
            & $\eta$ &  7.6870
            & $\rho$ & -0.8796 \\
        $\rho$ & -0.8038
            & $\rho$ & -1.0000
            & & \\
        $v_0$ &  0.1020
            & $v_0$ &  0.0877
            & & \\
        \bottomrule
    \end{tabular}

    \medskip

    \begin{tabular}{@{}l *{5}{S[table-format=1.8]}@{}}
        \toprule
        $t$ [years]
            & {$[0,0.25)$}
            & {$[0.25,0.5)$}
            & {$[0.5,1)$}
            & {$[1,2)$}
            & {$[2,3]$} \\
        \midrule
        $\xi_0(t)$
            & 0.10209961
            & 0.11963007
            & 0.09081965
            & 0.08462666
            & 0.08383247 \\
        \bottomrule
    \end{tabular}

    \caption{
        Fitted reference parameters and the piecewise-constant initial forward-variance curve for Bergomi.
        Forward-variance levels are annualized variances.
    }
    \label{tab:numerics.references}
\end{table}

\paragraph{Marginal calibration.}
With the reference parameters fixed, we apply \Cref{alg:MSA} successively between calibration maturities, alternating the terminal-marginal and martingale updates.
At each maturity, the joint law $\kappa_i=\Law(X_{T_i},L_{T_i})$ is passed to the next calibration interval.
Values at intermediate observation dates are obtained from the conditional-expectation representation~\eqref{eq:R.Bass}.
Only the five stated maturities impose marginal constraints.
\Cref{fig:numerics.smiles} reports the repriced calls in implied-volatility terms.
All four models closely reproduce the target smiles over the displayed strike ranges.

\begin{figure}[H]
    \centering
    \includegraphics[width=0.41\textwidth]{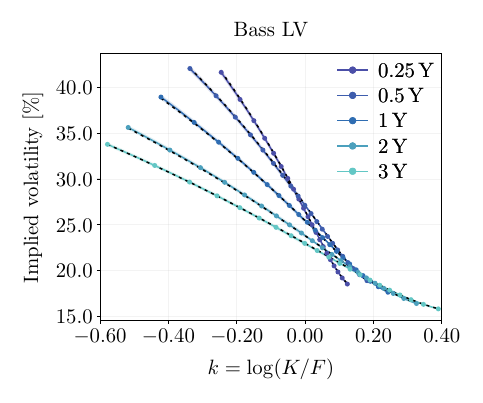}
    \hspace{1em}
    \includegraphics[width=0.41\textwidth]{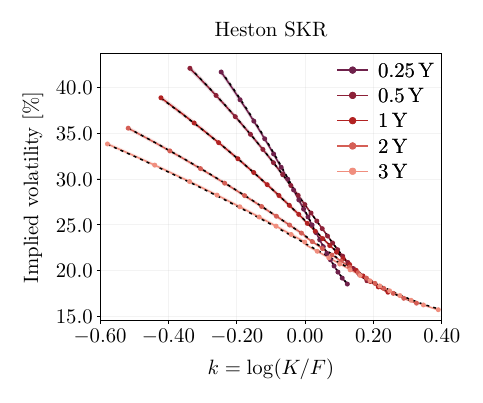}

    \includegraphics[width=0.41\textwidth]{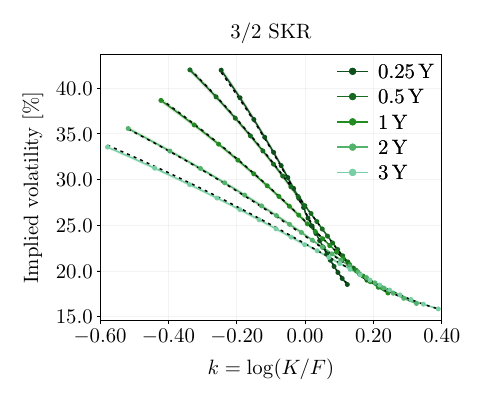}
    \hspace{1em}
    \includegraphics[width=0.41\textwidth]{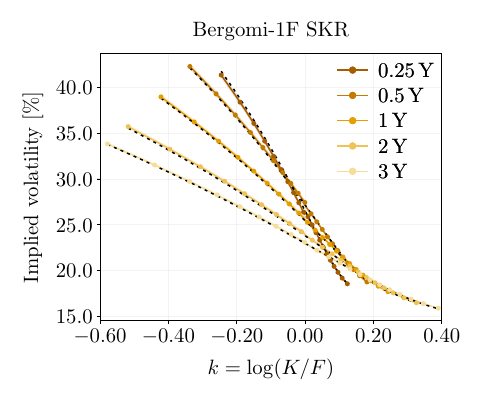}

    \caption{
        Fit of Bass LV and the three SKR models to the SPX implied-volatility smiles.
        Black dashed curves show the ESSVI targets; coloured curves with markers show implied volatilities recovered from Monte Carlo option prices under the calibrated models.
        Within each panel, darker to lighter shades indicate maturities $0.25$, $0.5$, $1$, $2$, and $3$ years.
        Shading denotes pointwise $95\%$ Monte Carlo confidence bands.
    }
    \label{fig:numerics.smiles}
\end{figure}

For both products below, we take $S_0=1$, a constant interest rate $r=2\%$, and no dividends, with discount factor $D(t)=e^{-rt}$.
All product prices use $10^5$ Monte Carlo paths per model.
\Cref{tab:numerics.timings} reports reference path simulation and fixed-point iteration separately, together with the pricing times for both products.

\begin{table}[H]
    \centering
    \small
    \sisetup{output-decimal-marker={.},group-digits=false}
    \setlength{\tabcolsep}{8pt}
    \renewcommand{\arraystretch}{1.1}

    \begin{tabular}{
            @{}l 
            S[table-format=1.4] 
            S[table-format=1.4] 
            S[table-format=1.4] 
            S[table-format=1.4]@{}
        }
        \toprule
        & \multicolumn{2}{c}{Calibration [s]} & \multicolumn{2}{c}{Pricing [s]} \\
        \cmidrule(lr){2-3}
        \cmidrule(l){4-5}
        Model & {Reference simulation} & {Fixed-point iteration} & {Cliquet} & {Autocall} \\
        \midrule
        Bass LV        & 0.0011 & 0.0801 & 0.0220 & 0.0219 \\
        Heston SKR     & 0.0334 & 0.1626 & 0.0757 & 0.0756 \\
        $3/2$ SKR      & 0.1421 & 0.1952 & 0.0753 & 0.0753 \\
        Bergomi-1F SKR & 0.0313 & 0.9204 & 1.0074 & 1.0073 \\
        \bottomrule
    \end{tabular}

    \caption[Calibration and pricing times]{
        Calibration and pricing times using $10^5$ paths.\protect\footnotemark
        Pricing reuses the reference paths generated during calibration.
        Each product's pricing time includes evaluating the model price process at the observation dates and averaging the discounted payoffs.
    }
    \label{tab:numerics.timings}
\end{table}
\footnotetext{
    Computations were performed on an Intel Core i5-12500 processor with $32$~GiB RAM under Debian~13.
}

\subsection{Reverse cliquets}
\label{sec:numerics.cliquets}

We consider a three-year reverse cliquet with notional $N=100$ and quarterly observation dates $t_j=j/4$, $j=0,\ldots,12$.
Writing $R_j=S_{t_j}/S_{t_{j-1}}-1$, its terminal payoff is
\begin{equation}
    \Pi_T^{\mathrm{cliq}}
    =
    N\Bigl[
        1+
        \Bigl(
            C_0-\sum_{j=1}^{12}\min\{(-R_j)^+,L\}
        \Bigr)^+
    \Bigr],
    \qquad T=t_{12},
    \label{eq:numerics.cliquet}
\end{equation}
where $C_0$ is the aggregate coupon budget and $L>0$ is the local loss cap, corresponding to a local return floor of $-L$.
The price is $D(T)\E[\Pi_T^{\mathrm{cliq}}]$.
\Cref{tab:numerics.cliquet} reports prices and Monte Carlo standard errors for $C_0=50\%$ and $L=7.5\%$.

\begin{table}[H]
    \centering
    \small
    \sisetup{output-decimal-marker={.},group-digits=false}
    \setlength{\tabcolsep}{8pt}
    \renewcommand{\arraystretch}{1.1}

    \begin{tabular}{
        @{}l
        S[table-format=3.3, table-number-alignment=right]
        S[table-format=1.3, table-number-alignment=right]@{}
    }
        \toprule
        Model & {Price} & {MC SE} \\
        \midrule
        Bass LV
            & 113.987 & 0.038 \\
        Heston SKR
            & 122.000 & 0.039 \\
        $3/2$ SKR
            & 116.991 & 0.035 \\
        Bergomi-1F SKR
            & 120.144 & 0.036 \\
        \bottomrule
    \end{tabular}

    \caption{
        Three-year quarterly reverse cliquet with $N=100$, $C_0=50\%$, and $L=7.5\%$.
        MC SE denotes the Monte Carlo standard error.
    }
    \label{tab:numerics.cliquet}
\end{table}

\Cref{fig:numerics.cliquet} varies $L$ at fixed $C_0=50\%$.
Each quarterly deduction is a put spread on $R_j$ with strikes $0$ and $-L$, so the comparison tests the downside part of the forward smile.
Raising $L$ increases deductions in quarters whose loss exceeds the cap and hence lowers the price.
Once $12L>C_0$, the aggregate coupon floor also makes the price sensitive to how losses accumulate across quarters.
At $L=7.5\%$, the SKR models generate fewer negative quarters and lower expected clipped losses than Bass LV.
The reduction in expected clipped losses explains their higher prices; the smaller aggregate-floor contribution partly offsets this effect.

\begin{figure}[H]
    \centering
    \includegraphics[width=0.41\textwidth]{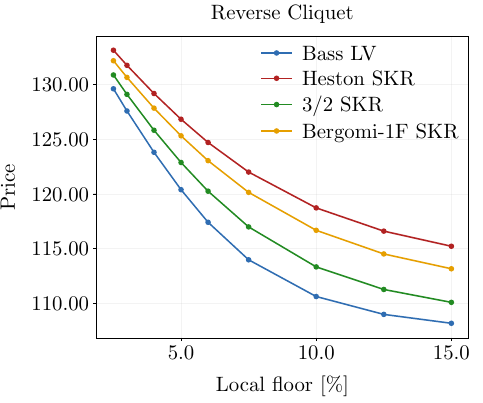}

    \caption{
        Local-floor sensitivity of the three-year quarterly reverse cliquet under Bass LV and the three SKR models, with $N=100$ and $C_0=50\%$.
        The horizontal axis shows the loss cap $L$, corresponding to the return floor $-L$.
    }
    \label{fig:numerics.cliquet}
\end{figure}

\subsection{Memory autocallables}
\label{sec:numerics.autocalls}

We consider memory autocallables with the same notional, maturity, and quarterly observations, and a quarterly coupon rate $c=5\%$.
While the note remains outstanding, the coupon $Nc$ and all previously unpaid coupons are paid at an observation if $S_{t_j}\geq bS_0$, where $b$ is the coupon barrier.
The note redeems at par at the first quarterly observation before maturity for which $S_{t_j}\geq S_0$, including any coupon due on that date.
The protection barrier $b_{\mathrm P}=60\%$ is tested only at maturity; any coupons still unpaid then expire.

Let $\tau$ be the redemption date, equal to $T$ if the note is not called, and let $C_j$, $j=1,\ldots,12$, be the coupon payment at $t_j$, including arrears.
Writing $s_T=S_T/S_0$, the discounted payoff is
\begin{equation}
    \Pi^{\mathrm{ac}}
    =
    \sum_{t_j\leq\tau}D(t_j)C_j
    +ND(\tau)
    -ND(T)(1-s_T)
        \mathbf{1}_{\{\tau=T,\ s_T<b_{\mathrm P}\}}.
    \label{eq:numerics.autocall}
\end{equation}
The final term is the capital loss for notes surviving to maturity below the protection barrier.
The price is $\E[\Pi^{\mathrm{ac}}]$.
\Cref{tab:numerics.autocall} reports results for $b=60\%$.

\begin{table}[H]
    \centering
    \small
    \sisetup{output-decimal-marker={.},group-digits=false}
    \setlength{\tabcolsep}{8pt}
    \renewcommand{\arraystretch}{1.1}

    \begin{tabular}{
        @{}l
        S[table-format=3.3, table-number-alignment=right]
        S[table-format=1.3, table-number-alignment=right]@{}
    }
        \toprule
        Model & {Price} & {MC SE} \\
        \midrule
        Bass LV
            & 104.899 & 0.069 \\
        Heston SKR
            & 106.834 & 0.070 \\
        $3/2$ SKR
            & 106.444 & 0.066 \\
        Bergomi-1F SKR
            & 106.800 & 0.067 \\
        \bottomrule
    \end{tabular}

    \caption{
        Three-year quarterly memory autocallable with $N=100$, coupon $5\%$, coupon barrier $60\%$, autocall barrier $100\%$, and terminal protection barrier $60\%$.
        Autocalling is possible from the first quarterly observation.
        MC SE denotes the Monte Carlo standard error.
    }
    \label{tab:numerics.autocall}
\end{table}

\Cref{fig:numerics.autocall} varies $b$ while keeping the autocall and protection barriers fixed.
This isolates coupon-payment and recovery risk: raising $b$ delays coupon payments or leaves more coupons unpaid at maturity.
The sensitivity depends on reaching $bS_0$ while the note remains outstanding, particularly after missed coupons; varying $b$ changes neither the redemption date nor the principal payoff on a given path.

Across models, price levels additionally depend on call times and terminal downside on uncalled paths.
At $b=60\%$, the higher SKR prices reflect larger expected discounted coupon receipts and smaller expected discounted capital losses, which outweigh the effect of later principal repayment.

\begin{figure}[H]
    \centering
    \includegraphics[width=0.41\textwidth]{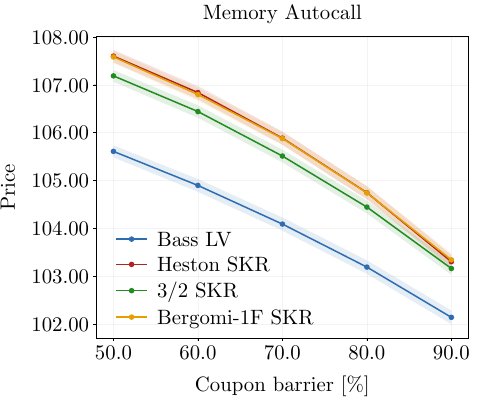}

    \caption{
        Coupon-barrier sensitivity of the three-year quarterly memory autocallable under Bass LV and the three SKR models, with $N=100$ and quarterly coupon $c=5\%$.
        The autocall and terminal protection barriers are $100\%$ and $60\%$ of $S_0$, respectively.
        Shading denotes pointwise $95\%$ Monte Carlo confidence bands.
    }
    \label{fig:numerics.autocall}
\end{figure}

\bibliography{references}
\bibliographystyle{plainnat}
	
\end{document}